\newcolumntype{L}{D{.}{.}{2,4}}
\newtheorem{theorem}{Theorem}[section]
\newtheorem{lemma}{Lemma}[section]
\theoremstyle{definition}
\theoremstyle{definition}
\newtheorem{remark}{Remark}[section]
\newtheorem*{notation}{Notation}
\numberwithin{equation}{section}
\theoremstyle{definition}
\newenvironment{assumptionp}[1]{
  
  \assumptionalt
}{\endassumptionalt}
\begin{document}
\title{Revisiting Panel Data Discrete Choice Models with Lagged Dependent Variables\footnote{This paper uses unit record data from Household, Income and Labour Dynamics in Australia Survey (HILDA) conducted by the Australian Government Department of Social Services (DSS). The findings and views reported in this paper, however, are those of the author[s] and should not be attributed to the Australian Government, DSS, or any of DSS’ contractors or partners. DOI: 10.26193/YP7MNU.}\vspace{0.25cm}}

\author[1]{Christopher R. Dobronyi\footnote{E-mail address: \href{dobronyi@google.com}{dobronyi@google.com}.}}
\author[2]{Fu Ouyang\footnote{E-mail address: \href{f.ouyang@uq.edu.au}{f.ouyang@uq.edu.au}. }}
\author[3]{Thomas Tao Yang\footnote{E-mail address: \href{tao.yang@anu.edu.au}{tao.yang@anu.edu.au}.}}
\affil[1]{Google}
\affil[2]{School of Economics, University of Queensland}
\affil[3]{Research School of Economics, Australian National University}
\date{\today}
\maketitle

%\bigskip
\begin{abstract}
\noindent This paper revisits the identification and estimation of a class of semiparametric (distribution-free) panel data binary choice models with lagged dependent variables, exogenous covariates, and entity fixed effects. We provide a novel identification strategy, using an ``identification at infinity'' argument. In contrast with the celebrated \cite{honore-k}, our method permits time trends of any form and does not suffer from the ``curse of dimensionality''. We propose an easily implementable conditional maximum score estimator. The asymptotic properties of the proposed estimator are fully characterized. A small-scale Monte Carlo study demonstrates that our approach performs satisfactorily in finite samples. We illustrate the usefulness of our method by presenting an empirical application to enrollment in private hospital insurance using the Household, Income and Labor Dynamics in Australia (HILDA) Survey data.
\end{abstract}

\noindent%
{\it Keywords:} Dynamic Binary Choice Model, Fixed Effects, Identification at Infinity, Maximum Score Estimation
\vfill

\newpage
 
\section{Introduction}

In this paper, we propose new identification and estimation methods for
panel data binary choice models with fixed effects and ``dynamics'' (lagged dependent variables). Specifically,
suppose that there are $n$ individuals and $T+1$ time periods, $\{0,1,...,T\}
$. In each time period $t\in \{1,...,T\}$, each individual $i$ makes a
choice $y_{it}\in \{0,1\}$ according to the following latent utility model:
\begin{equation}
y_{it}=\mathds{1}\{\alpha _{i}+\gamma y_{it-1}+x_{it}^{\prime }\beta +\varpi
z_{it}\geq \epsilon _{it}\},  \label{model}
\end{equation}%
where $\alpha _{i}$ is an entity fixed effect absorbing all relevant
time-invariant factors, $y_{it-1}$ is the lagged dependent variable, $%
(x_{it},z_{it})$ is a $(p+1)$-vector of time-varying covariates, and $%
\epsilon _{it}$ is an idiosyncratic error term. We separate $z_{it}$ from other covariates because, as it will be clear in the next section, we assign it a crucial role in the identification at infinity. In panel data literature, $%
y_{it-1}$ is often called ``state
dependence'', and $\alpha _{i}$ is referred to as
``unobserved heterogeneity'' or
``spurious'' state dependence (see \cite%
{Heckman1981b, Heckman1981a}). In model (\ref{model}), $%
(y_{it},x_{it},z_{it})$ along with the ``initial
status'' $y_{i0}$ are observed in the data, whereas $\alpha
_{i}$ and $\epsilon _{it}$ are not observable to the econometrician. Note that we do not specify model (\ref{model}) in the initial period 0. This paper studies the identification and
estimation of the preference parameter $\theta :=(\gamma ,\beta ,\varpi )\in
\mathbb{R}^{p+2}$ in ``short'' panel
settings, i.e., $n\rightarrow \infty $ and $T<\infty $. %Since one can only identify $\theta $ up to scale$,$ we normalize $\left\Vert \theta\right\Vert =1,$ where $\left\Vert \cdot \right\Vert $ is the Euclidean norm.

In line with the vast literature on panel data models with entity fixed effects, we do not impose any parametric restrictions on the distribution of $\alpha_i$ conditional on the initial choice $y_{i0}$ and observed covariates in model (\ref{model}). The prevalent methods for such models assume that $\epsilon_{it}$ are independently and identically distributed (i.i.d.) with a logistic distribution. \citet{ArellanoHonore2001}, \citet{honore2021identification}, and \citet{Hsiaobook} review various conditional likelihood approaches based on these parametric assumptions on $\epsilon _{it}$. Recent advances in the literature focus on constructing moment conditions for variants of dynamic Logit models. Representative works include \citet{honore-w}, \citet{dobronyi-gu-kim}, \citet{kitazawa2022transformations}, and \citet{dano2023transition}, among others.

Without making distributional assumptions on $\epsilon _{it}$, \citet{manski-87} establishes the semiparametric identification of model (\ref{model}) that includes covariates, but not $y_{it-1}$. \cite{honore-k} extend this approach to include both $y_{it-1}$ and covariates in the model, showing that model (\ref{model}) with $T\geq 3$ can be identified under exogeneity and serial dependence assumptions stronger than those in \citet{manski-87}. However, their proposed estimator requires element-by-element matching of observed covariates over time, which rules out covariates with non-overlapping supports over time (e.g., time trend or dummies) and has a convergence rate decreasing in the dimension of the covariate space. \cite{OuyangYang2024binary} demonstrates that this curse of dimensionality can be mitigated by imposing certain serial dependence conditions on the covariates and by observing an extra time period. To highlight the novelty and contributions of this paper, we present a thorough comparison of our method with \citet{honore-k} and \cite{OuyangYang2024binary} in Appendices \ref{appendix0_1} and \ref{appendix0_2}, respectively.

There are alternative semiparametric and nonparametric approaches to model (\ref{model}). \citet{hl} demonstrate that model (\ref{model}) can be point identified if $z_{it}$ satisfies certain exclusion restrictions. More recently, \citet{ChenEtal2019} revisit this method and discuss the sufficient conditions for such exclusion restrictions. \citet{Williams2019} studies the nonparametric identification of dynamic binary choice models that satisfy certain exclusion restrictions. In the absence of excluded regressors, \citet{arist} establishes informative partial identification of model (\ref{model}) under weak conditions. \citet{KhanEtal2020} offer a partial identification result under even milder restrictions and prove that point identification is attainable in many interesting scenarios.

\begin{comment}
There are alternative semiparametric and nonparametric approaches to model (\ref{model}).\footnote{
Our literature review here focuses on fixed effects methods. It is well known that model (\ref{model}) can be estimated by the random effects or correlated random coefficients approach. See, e.g., \cite{ArellanoCarrasco2003}, \cite{Wooldridge2005}, and \cite{honore-tamer}, among others. These approaches often allow the econometrician to calculate choice probabilities and marginal effects in addition to $\theta$ at the cost of imposing restrictions on the distribution of $y_{i0}$ (initial condition problem) and the statistical relation between observed covariates and $\alpha_{i}$.} \citet{hl} demonstrate that the model, either with or without $y_{it-1}$, can be point identified if $z_{it}$ satisfies certain exclusion restrictions, i.e., the existence of an excluded regressor, regardless of the endogeneity of the covariates. More recently, \cite{ChenEtal2019} revisit this method and discusses the sufficient conditions for the exclusion restriction to be satisfied in dynamic settings. See also \cite{Williams2019} for nonparametric identification of dynamic binary choice models. In the absence of excluded regressors, \citet{arist} establishes informative partial identification of model (\ref{model}) under weak conditions. \cite{KhanEtal2020} offer a unique partial identification result under even milder restrictions, proving that point identification is attainable in many interesting scenarios.
\end{comment}

This paper revisits the distribution-free identification and estimation of model (\ref{model}). We show that the overlapping support restrictions required by \cite{honore-k} can be removed if $z_{it}$ is a free-varying covariate with full support. Our identification employs an ``identification at infinity'' strategy, first introduced in \cite{chamberlain1986asymptotic} and \cite{heckman-90}, and then applied in more recent work such as \cite{tamer2003incomplete}, \cite{bajari2010identification}, \cite{wan2014semiparametric}, and \cite{ouyang2020semiparametric}, among others. The combination of this strategy and \citeauthor{manski-87}'s (\citeyear{manski-87}) insight yields an estimator in the spirit of \citeauthor{honore-k}'s (\citeyear{honore-k}) conditional maximum score (MS) estimator, but without the need to match observed covariates over time. As a result, our estimator can accommodate flexible time effects and escape from the curse of dimensionality, in contrast to \cite{honore-k}. Through extending \cite{KimPollard1990} and \cite{SeoOtsu2018}, we demonstrate that our estimator converges at a rate slower than cube-root-$n$, is independent of the number of observed covariates, and has a non-standard limiting distribution. The asymptotics share similarities with those in \cite{honore-k} and \cite{OuyangYang2024binary}, with an important difference: the rate of convergence for our estimator depends on unknown factors, while the convergence rates of their estimators are known. We evaluate the finite-sample performance and implementability of our proposed estimator using both simulated and real-world data.

The rest of this paper is organized as follows. Section \ref{sec:identification} establishes the identification of $\theta$, which serves as the basis for the MS estimator presented in Section \ref{Sec:Estimation}. We then derive asymptotic properties of the proposed estimator in Section \ref{Sec:Asymptotics}. Results of Monte Carlo experiments are reported in Section \ref{sec_simulation}. We present an empirical illustration using the HILDA data in Section \ref{sec_application}. Finally, Section \ref{Sec:Conlcusions} concludes the paper with a brief discussion on possible future research directions. All proofs, supplementary discussions, and additional simulation results are included in the Supplementary Appendix.

For ease of reference, we list the notations maintained throughout this
paper here.

\begin{notation}
We reserve letter $i\in \{1,...,n\}$ for indexing individuals, and letter $%
t\in \{1,...,T\}$ for indexing time periods. $\mathbb{R}^{k}$ is a $k$%
-dimensional Euclidean space equipped with the Euclidean norm $\Vert \cdot
\Vert $ and $\mathbb{R}_{+}^{k}:=\{x\in \mathbb{R}^{k}|x>0\}$. We use $%
P(\cdot )$ and $\mathbb{E}[\cdot ]$ to denote probability and expectation,
respectively. $\mathds{1}\{\cdot \}$ is an indicator function that equals
one when the event in the brackets occurs, and zero otherwise. Following a
substantial panel literature, we use the notation $\xi ^{t}$ to denote the
history of $\xi $ from period $1$ to period $t.$ For example, $x^{t}:=\left( x_{1},...,x_{t}\right) $ and  $y^{t}:=\left(y_{1},...,y_{t}\right)$. We use $\setminus$ to denote set difference. For example, $\left( x_{1},x_{2},...,x_{t}\right)\setminus x_{1}:= \left(x_{2},...,x_{t}\right) $.
For two random vectors, $u$ and $v$, the notation $u\overset{d}{=}v|\cdot $
means that $u$ and $v$ have identical distribution, conditional on $\cdot $,
and $u\perp v|\cdot $ means that $u$ and $u$ are independent conditional on $%
\cdot $. We use $\overset{p}{\rightarrow }$ and $\overset{d}{\rightarrow }$
to denote convergences in probability and in distribution, respectively. For
any (random) positive sequences $\{a_{n}\}$ and $\{b_{n}\}$, $a_{n}=O(b_{n})$
($O_{p}(b_{n})$) means that $a_{n}/b_{n}$ is bounded (in probability), $%
a_{n}=o(b_{n})$ ($o_{p}(b_{n})$) means that $a_{n}/b_{n}\rightarrow 0$ ($%
a_{n}/b_{n}\overset{p}{\rightarrow }0$). $a_{n}\lesssim b_{n}$ and $a_{n}\asymp b_{n}$ respectively mean
that there exist two constants $0<c_{1}\leq c_{2}<\infty$ such that
$c_{1}a_{n}\leq b_{n}$ and $c_{1}a_n\leq b_{n}\leq c_{2}a_{n}$. $a_{n}\ll b_{n}$ means $a_n = o(b_n)$.
\end{notation}

\section{Identification}\label{sec:identification}
Suppose in model (\ref{model}) $\epsilon_{it}$'s are i.i.d. over
time and independent of observed covariates $(x_{i}^{T},z_{i}^{T})$
and the initial choice $y_{i0}$ conditioning on the fixed effect
$\alpha_{i}$. The conditional probability of $y_{it}=1$ is equal
to:
\begin{equation}
P\left(y_{it}=1|\alpha_{i},y_{i}^{t-1},x_{i}^{T},z_{i}^{T}\right)=F_{\epsilon|\alpha}\left(\alpha_{i}+\gamma y_{it-1}+x_{it}^{\prime}\beta+\varpi z_{it}\right) \nonumber
\end{equation}
for each $i=1,\dots,n$ and $t=1,\dots,T$, where $F_{\epsilon|\alpha}(\cdot)$
denotes the cumulative distribution function (CDF) of $\epsilon_{it}$
conditional on $\alpha_{i}$. Consequently, the probability of observing
the choice history $y_{i}^{T}$ conditional on $(\alpha_{i},y_{i0},x_{i}^{T},z_{i}^{T})$
is expressed as:
\begin{align}
 & P\left(y_{i}^{T}|\alpha_{i},y_{i0},x_{i}^{T},z_{i}^{T}\right)\nonumber\\
= & P\left(y_{i2},y_{i3},...,y_{iT}|\alpha_{i},y_{i0},y_{i1},x_{i}^{T},z_{i}^{T}\right)P\left(y_{i1}|\alpha_{i},y_{i0},x_{i}^{T},z_{i}^{T}\right)\nonumber \\
= & P\left(y_{i3},...,y_{iT}|\alpha_{i},y_{i0},y_{i1},y_{i2},x_{i}^{T},z_{i}^{T}\right)P\left(y_{i2}|\alpha_{i},y_{i0},y_{i1},x_{i}^{T},z_{i}^{T}\right)P\left(y_{i1}|\alpha_{i},y_{i0},x_{i}^{T},z_{i}^{T}\right)\nonumber \\
= & \cdots=\prod_{t=1}^{T}P\left(y_{it}|\alpha_{i},y_{i}^{t-1},x_{i}^{T},z_{i}^{T}\right)\nonumber \\
= & \prod_{t=1}^{T}F_{\epsilon|\alpha}\left(\alpha_{i}+\gamma y_{it-1}+x_{it}^{\prime}\beta+\varpi z_{it}\right)^{y_{it}}\left[1-F_{\epsilon|\alpha}(\alpha_{i}+\gamma y_{it-1}+x_{it}^{\prime}\beta+\varpi z_{it})\right]^{1-y_{it}} \nonumber
\end{align}
for each individual $i=1,\dots,n$.

In what follows, we will restrict the illustration of our approach
to model (\ref{model}) with $T=3$ and $\varpi>0$ to ease the exposition.
The condition $\varpi>0$ implies that we must know that the covariate
$z_{it}$ is included in the model and that it has a positive effect
on the choice probability of $y_{it}=1$. Applying our method to longer panels is straightforward, and the case with $\varpi<0$ is
symmetric. In addition, we will omit the subscript $i$ in our notation
whenever the context makes clear that all variables pertain to each
individual. Finally, we assume a balanced panel for simplicity. Our methods
are applicable to models with unbalanced panels, provided the
unbalancedness is not due to endogenous attrition.
 
Consider two choice histories
\[
C =\{y_{0}=d_{0},y_{1}=0,y_{2}=d_{2},y_{3}=1\} \textrm{ and } D =\{y_{0}=d_{0},y_{1}=1,y_{2}=d_{2},y_{3}=0\},
\]
%\begin{align*}
%C & =\{y_{0}=d_{0},y_{1}=0,y_{2}=d_{2},y_{3}=1\},\\
%D & =\{y_{0}=d_{0},y_{1}=1,y_{2}=d_{2},y_{3}=0\},
%\end{align*}
where $d_{0},d_{2}\in\{0,1\}$. The conditional probability of the
choice history $C$ is equal to
\begin{align*}
 & P(C|\alpha,y_{0}=d_{0},x^{T},z^{T})\\
= & p_{0}(\alpha,x^{T},z^{T})^{d_{0}}(1-p_{0}(\alpha,x^{T},z^{T}))^{1-d_{0}}(1-F_{\epsilon|\alpha}(\alpha+\gamma d_{0}+x_{1}^{\prime}\beta+\varpi z_{1}))\\
 & \times F_{\epsilon|\alpha}(\alpha+x_{2}^{\prime}\beta+\varpi z_{2})^{d_{2}}(1-F_{\epsilon|\alpha}(\alpha+x_{2}^{\prime}\beta+\varpi z_{2}))^{1-d_{2}}F_{\epsilon|\alpha}(\alpha+\gamma d_{2}+x_{3}^{\prime}\beta+\varpi z_{3}),
\end{align*}
where $p_{0}(\alpha,x^{T},z^{T})$ denotes the conditional probability
of $y_{0}=1$. In a similar fashion,
\begin{align*}
 & P(D|\alpha,y_{0}=d_{0},x^{T},z^{T})\\
= & p_{0}(\alpha,x^{T},z^{T})^{d_{0}}(1-p_{0}(\alpha,x^{T},z^{T}))^{1-d_{0}}F_{\epsilon|\alpha}(\alpha+\gamma d_{0}+x_{1}^{\prime}\beta+\varpi z_{1})\\
 & \times F_{\epsilon|\alpha}(\alpha+\gamma+x_{2}^{\prime}\beta+\varpi z_{2})^{d_{2}}(1-F_{\epsilon|\alpha}(\alpha+\gamma+x_{2}^{\prime}\beta+\varpi z_{2}))^{1-d_{2}}(1-F_{\epsilon|\alpha}(\alpha+\gamma d_{2}+x_{3}^{\prime}\beta+\varpi z_{3})).
\end{align*}

Here, we take $d_{2}=1$ to illustrate, and the case with $d_{2}=0$ is
symmetric. Suppose the support of $z_{2}$ is unbounded above. Then, for $z_2>\sigma$, where $\sigma$ is a sufficiently large positive number, these probabilities satisfy
\begin{align}
& \frac{P(C|\alpha,y_{0}=d_{0},x^{T},z^{T})}{P(D|\alpha,y_{0}=d_{0},x^{T},z^{T})}  \nonumber\\
\approx & \frac{1-F_{\epsilon|\alpha}(\alpha+\gamma d_{0}+x_{1}^{\prime}\beta+\varpi z_{1})}{1-F_{\epsilon|\alpha}(\alpha+\gamma d_{2}+x_{3}^{\prime}\beta+\varpi z_{3})}  \times\frac{F_{\epsilon|\alpha}(\alpha+\gamma d_{2}+x_{3}^{\prime}\beta+\varpi z_{3})}{F_{\epsilon|\alpha}(\alpha+\gamma d_{0}+x_{1}^{\prime}\beta+\varpi z_{1})}. \label{eq:prob_ratio_1}
\end{align}
The idea of the above is to make negligible the effect of $y_{1}$
on $y_{2}$, i.e., $F_{\epsilon|\alpha}(\alpha+\gamma+x_{2}'\beta+\varpi z_{2})\approx F_{\epsilon|\alpha}(\alpha+x_{2}'\beta+\varpi z_{2})$
($\approx1$), by letting $z_2$ be sufficiently large.

Suppose $F_{\epsilon|\alpha}(\cdot)$ is strictly increasing. Then,
when $d_{2}=1$, equation (\ref{eq:prob_ratio_1}) implies that
\begin{align}
& \text{sgn}\left\{ P(C|\alpha,y_{0}=d_{0},x^{T},z^{T})-P(D|\alpha,y_{0}=d_{0},x^{T},z^{T})\right\} \nonumber \\
=& \text{sgn}\left\{ \gamma(d_{2}-d_{0})+(x_{3}-x_{1})^{\prime}\beta+\varpi(z_{3}-z_{1})\right\} \label{eq:iden_ineq1}
\end{align}
holds for $z_2>\sigma$ as $\sigma\rightarrow +\infty$, where $\text{sgn}\{\cdot\}$ is the
\textit{sign function}, which is equal to 1 if the expression inside
the brackets is strictly positive, to 0 if the expression inside the
brackets is zero, and to $-1$ if the expression inside the brackets
is strictly negative.

Equation (\ref{eq:iden_ineq1}) reveals that when $z_{2}$ is sufficiently large and $d_{2}=1$, the likelihood of observing event $C$ exceeds that of observing event $D$ if and only if $\gamma(d_{2}-d_{0})+(x_{3}-x_{1})^{\prime}\beta+\varpi(z_{3}-z_{1})>0$. In other words, the sign of $\gamma(d_{2}-d_{0})+(x_{3}-x_{1})^{\prime}\beta+\varpi(z_{3}-z_{1})$ determines the rank order of the conditional probabilities of events $C$ and $D$. Our focus on the subsample with $y_{3}\neq y_{1}$ aligns with \cite{manski-87} in forming his MS estimator. The distinction lies in our additional conditioning event of $z_{2}$ being large. It is worth noting that the same identification equation holds true when $-z_{2}$ is sufficiently large and $d_{2}=0$.

A natural way to build a population objective function based on equation
(\ref{eq:iden_ineq1}) is to define
\begin{align}
\bar{Q}_{1}(r,b,w):=\lim_{\sigma\rightarrow+\infty}\mathbb{E} & \left[\left(P(C|\alpha,y_{0}=d_{0},x^{T},z^{T})-P(D|\alpha,y_{0}=d_{0},x^{T},z^{T})\right)\right.\nonumber \\
 & \left.\times\text{sgn}\left(r(d_{2}-d_{0})+(x_{3}-x_{1})^{\prime}b+w(z_{3}-z_{1})\right)|z_{2}>\sigma\right]\label{eq:pop_obj}
\end{align}
with $w>0$\ for $d_{2}=1$. By a symmetric argument, define
\begin{align}
\bar{Q}_{2}(r,b,w):=\lim_{\sigma\rightarrow+\infty}\mathbb{E} & \left[\left(P(C|\alpha,y_{0}=d_{0},x^{T},z^{T})-P(D|\alpha,y_{0}=d_{0},x^{T},z^{T})\right)\right.\nonumber \\
 & \left.\times\text{sgn}\left(r(d_{2}-d_{0})+(x_{3}-x_{1})^{\prime}b+w(z_{3}-z_{1})\right)|z_{2}<-\sigma\right]\label{eq:pop_obj_2}
\end{align}
with $w>0$ for $d_{2}=0$. Note that equation (\ref{eq:iden_ineq1}) implies that $\bar{Q}_{1}(\gamma,\beta,\varpi)\geq\bar{Q}_{1}(r,b,w)$
and $\bar{Q}_{2}(\gamma,\beta,\varpi)\geq\bar{Q}_{2}(r,b,w)$ for
all $(r,b,w)\neq(\gamma,\beta,\varpi)$. Establishing that $\theta:=(\gamma,\beta,\varpi)$ is the unique maximum of either objective function (\ref{eq:pop_obj}) or (\ref{eq:pop_obj_2}) would confirm the point identification of these coefficients. The following conditions are sufficient for this.

\begin{assumptionp}{A} For all $\alpha$ and $s,t\in\mathcal{T}:=\{1,2,3\}$
($T=3$), the following conditions hold:\par \begin{enumerate}\par

\item[A1] (i) $\epsilon^{T}\perp(x^{T},z^{T},y_{0})|\alpha$,
(ii) $\epsilon_{s}\perp\epsilon_{t}|\alpha$, (iii) $\epsilon_{s}\overset{d}{=}\epsilon_{t}|\alpha$,
and (iv) conditional on $\alpha$, the CDF of $\epsilon_{t}$ is absolutely
continuous with support $\mathbb{R}$.\par

\item[A2] $z_{2}$ has unbounded support conditional on $(\alpha,y_{0},x^{T},z_{1},z_{3})$. \par

\item[A3] One of the elements in $\left(x_{3}-x_1,z_{3}-z_1\right)$,
denoted as $\xi_{31}$, has a bounded Lebesgue density that is positive
almost everywhere (a.e.) on $\mathbb{R}$ conditional on $(\alpha,x_{3}-x_1,z_{3}-z_1)\left\backslash \xi_{31}\right.$
and $\{z_{2}>\sigma\}\cup\{z_{2}<-\sigma\}$ as $\sigma\rightarrow +\infty$.
Moreover, the coefficient before $\xi_{31}$ is non-zero. \par

\item[A4] As $\sigma\rightarrow +\infty$, (i) the support $\mathcal{S}$
of $(y_{2}-y_{0},x_{3}-x_{1},z_{3}-z_{1})$ conditional on $\{z_{2}>\sigma\}$
or $\{z_{2}<-\sigma\}$ is not contained in any proper linear subspace
of $\mathbb{R}^{p+2}$, and (ii) the joint probability density function
(PDF) of $(x_{3}-x_{1},z_{3}-z_{1})$ conditional on $\{z_{2}>\sigma\}$
or $\{z_{2}<-\sigma\}$ is non-degenerate and uniformly bounded. \par

\item[A5] Let $\Theta$ be the set $\{\vartheta:=(r,b,w)\in\mathbb{R}^{p+2}\ |\ \Vert\vartheta\Vert=1,w>0\}$. $\theta$ is an interior
point of $\Theta$. \end{enumerate} \end{assumptionp}

Assumptions A1(i)--(iii) place the same restrictions on the joint
distribution of $(\alpha,\epsilon^{T},x^{T},z^{T})$ as \citet{honore-k},
which implies that the unobserved heterogeneity (entity fixed effects)
$\alpha$ picks up both the autocorrelation in the unobservables and
the dependence between explanatory variables and unobservables. As
a result, $\epsilon_{t}$ is independent of $(x^{T},z^{T},y^{t-1})$
conditional on $\alpha$ for all $t\in\mathcal{T}$. Assumption A1(iv)
is a regularity condition to guarantee that any possible sequence
of $y^{T}$ has a positive probability to occur.

Assumption A2 is a pivotal assumption that enables the ``identification at infinity'' approach, and when combined with Assumption A1, it establishes the identification equation
(\ref{eq:iden_ineq1}). It is clear from the derivation of equation (\ref{eq:iden_ineq1}) that relaxing this assumption may require additional restrictions on the parameter space $\Theta$,
the support of $x_{2}$, and the distribution of $(\epsilon,\alpha)$.

The support and continuity restrictions on
$\xi_{31}$ imposed by Assumption A3 are common for the family of
MS-type estimators, which are required to achieve the point
identification instead of a set identification. See, e.g., \citet{manski-75,manski-85,manski-87},
\citet{horowitz}, \citet{honore-k}, \citet{Fox2007}, \citet{ShiEtal2018},
\citet{YanYoo2019}, and \citet{khan2021inference}, among others.
Given the significance of Assumptions A2 and A3 in both our theoretical results and empirical application, we provide further discussion on them in Appendix \ref{appendix0_3}.

Assumption A4(i) is a familiar full-rank condition. Note that Assumptions
A3 and A4(ii) require $(x_{3}-x_{1},z_{3}-z_{1})$ to have sufficient
variation conditional on $\alpha$ and event $\{z_{2}>\sigma\}$ or
$\{z_{2}<-\sigma\}$. Assumption A5 applies the scale normalization
and restricts the search of $\theta$ in a compact set, which also
facilitates the asymptotic analysis of our estimator proposed in the
next section.

Additionally, we compare the key identification assumptions imposed in \citet{honore-k} and \citet{OuyangYang2024binary}, along with other aspects, with our method in Appendices \ref{appendix0_1} and \ref{appendix0_2}, respectively.

\begin{remark} \label{remark_A5}
Assumption A5 applies scale normalization by restricting $\vartheta$ to lie on a unit sphere. Alternatively, one can normalize one nonzero element of $\vartheta$, such as $w$ in this paper, to be 1. Following this convention, we express the parameter space as:\par\medskip

\noindent \textbf{Assumption A}5': $\Theta:=\{\vartheta:=(r,b,1)\in\mathbb{R}^{p+2}\}\cap\Xi$,
where $\Xi\subset\mathbb{R}^{p+2}$ is a compact set. $\theta$ is
an interior point of $\Theta$. \par\medskip
\noindent These two methods for scale normalization are essentially equivalent when $w \neq 0$. Therefore, researchers often use either of these methods based on their convenience in exposition or derivation. For instance, \cite{ShiEtal2018} use both methods in different sections. In this paper, we insist on Assumption A5 in Appendices \ref{appendixA} and \ref{appendixB}, as we derive the asymptotic properties of our estimator. This is because the two primary references for doing this, \cite{KimPollard1990} and \cite{SeoOtsu2018}, both normalize the parameter space to a unit sphere. Following the same convention facilitates our use of their established asymptotic theory and makes it easier for interested readers to review our proofs. When we apply our method to simulation studies and an empirical application in Sections \ref{sec_simulation} and \ref{sec_application}, we switch to the normalization defined in Assumption A5’, which reduces one parameter to estimate and avoids imposing restrictions on the optimization algorithm. We thank one anonymous referee for highlighting this point.
\end{remark}

Our identification results are stated in the following theorem and
the proof of which is provided in Appendix \ref{appendixA}.

\begin{theorem} \label{T:identify} Suppose Assumption A holds. Then,
$\theta$ is identified. \end{theorem}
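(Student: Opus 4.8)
The plan is to prove identification by showing that the true parameter $\theta=(\gamma,\beta,\varpi)$ is the \emph{unique} maximizer over $\Theta$ of the population criterion $\bar{Q}_1$ in \eqref{eq:pop_obj} (the argument for $\bar{Q}_2$ in \eqref{eq:pop_obj_2} being symmetric, with $z_2\to-\infty$ in place of $z_2\to+\infty$). Write $\tilde{x}:=(d_2-d_0,\,x_3-x_1,\,z_3-z_1)\in\mathbb{R}^{p+2}$ and, for a candidate $\vartheta=(r,b,w)\in\Theta$, let $v(\vartheta):=r(d_2-d_0)+(x_3-x_1)'b+w(z_3-z_1)=\vartheta'\tilde{x}$, so that $v(\theta)$ is exactly the index appearing in \eqref{eq:iden_ineq1}. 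Set $\Delta_\infty:=\lim_{z_2\to+\infty}\bigl(P(C\mid\alpha,y_0,x^T,z^T)-P(D\mid\alpha,y_0,x^T,z^T)\bigr)$; this limit exists because the $y_2$-factors converge to $1$, and by \eqref{eq:iden_ineq1} together with the strict monotonicity of $F_{\epsilon\mid\alpha}$ (Assumption A1(iv)) it satisfies $\text{sgn}(\Delta_\infty)=\text{sgn}(v(\theta))$ and $\Delta_\infty=0$ if and only if $v(\theta)=0$.

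First I would show $\theta$ attains the maximum. Since $\text{sgn}(v(\vartheta))$ does not involve $z_2$, passing the limit $\sigma\to+\infty$ through the conditional expectation rewrites the criterion as $\bar{Q}_1(\vartheta)=\mathbb{E}_\infty[\Delta_\infty\,\text{sgn}(v(\vartheta))]$, where $\mathbb{E}_\infty$ denotes expectation under the limiting conditional law of $\tilde{x}$ given $\{z_2>\sigma\}$ as $\sigma\to+\infty$. The pointwise bound $\Delta_\infty\,\text{sgn}(v(\vartheta))\le|\Delta_\infty|$ holds for every $\vartheta$, with equality at $\vartheta=\theta$, whence
\[
\bar{Q}_1(\theta)-\bar{Q}_1(\vartheta)=\mathbb{E}_\infty\bigl[\,|\Delta_\infty|\,\bigl(1-\text{sgn}(v(\theta))\,\text{sgn}(v(\vartheta))\bigr)\,\bigr]\ge 0 .
\]
Consequently $\vartheta$ is also a maximizer if and only if $\text{sgn}(v(\vartheta))=\text{sgn}(v(\theta))$ almost everywhere on the event $\{v(\theta)\neq 0\}$.

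The crux is converting this a.e.\ sign-matching condition into $\vartheta=\theta$. Here I would use the free-varying continuous regressor $z_3-z_1$. Condition on $(d_2-d_0,x_3-x_1)$ and regard both indices as affine functions of $\zeta:=z_3-z_1$ with strictly positive slopes $w\ge\iota>0$ and $\varpi>0$ (Assumption A5); each crosses zero exactly once. Because $\zeta$ has an a.e.\ positive Lebesgue density on $\mathbb{R}$ conditional on $(\alpha,x_3-x_1)$ and the limiting conditioning event (Assumption A3), the two signs can agree for a.e.\ $\zeta$ only if the two zero-crossings coincide, i.e.
\[
\Bigl(\tfrac{\gamma}{\varpi}-\tfrac{r}{w}\Bigr)(d_2-d_0)+(x_3-x_1)'\Bigl(\tfrac{\beta}{\varpi}-\tfrac{b}{w}\Bigr)=0\quad\text{for a.e. }(d_2-d_0,\,x_3-x_1).
\]
The left-hand side is the inner product of the fixed vector $(\gamma/\varpi-r/w,\;\beta/\varpi-b/w,\;0)$ with $\tilde{x}$; since the support of $\tilde{x}$ is not contained in any proper linear subspace of $\mathbb{R}^{p+2}$ (Assumption A4(i)), that vector must vanish, giving $r/w=\gamma/\varpi$ and $b/w=\beta/\varpi$. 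Hence $\vartheta=(w/\varpi)\,\theta$, and the normalization $\|\vartheta\|=\|\theta\|=1$ with $w,\varpi>0$ forces $w=\varpi$ and therefore $\vartheta=\theta$, establishing identification.

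I expect the main obstacle to be the step folded into $\mathbb{E}_\infty$: one must justify interchanging the limit $\sigma\to+\infty$ with the conditional expectation and verify that the limiting conditional distribution of $(x_3-x_1,z_3-z_1)$ given $\{z_2>\sigma\}$—not the unconditional one—retains both the a.e.\ positive density of $z_3-z_1$ used in the zero-crossing step and the full-rank property of $\tilde{x}$ used in the linear-algebra step. The uniform boundedness in Assumption A4(ii) supplies the dominating function for the interchange, while Assumptions A2--A4 are stated directly for this limiting conditional law, so the task is to apply them at the correct conditioning event rather than to derive any new inequality. A minor additional point is checking that $\Delta_\infty$ is genuinely nonzero off the hyperplane $\{v(\theta)=0\}$, which again reduces to the strict monotonicity of $F_{\epsilon\mid\alpha}$.
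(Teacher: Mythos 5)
Your proposal is correct and follows essentially the same route as the paper's proof: show $\theta$ maximizes $\bar{Q}_1$ via the sign-matching identity \eqref{eq:iden_ineq1}, observe that any other maximizer must match the sign of the true index almost everywhere, use the a.e.\ positive density of $z_3-z_1$ (Assumption A3) to force the two zero-crossings to coincide, and then invoke the full-rank condition A4(i) and the normalization to conclude $\vartheta=\theta$. Your write-up is in fact slightly more careful than the paper's at the final step, where you first deduce $\vartheta=(w/\varpi)\theta$ and only then apply $\|\vartheta\|=\|\theta\|=1$ with $w,\varpi>0$, whereas the paper passes directly to the statement $(\vartheta-\theta)'\tilde{x}=0$ a.s.
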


\section{Estimation}\label{Sec:Estimation}

Applying the analogy principle, the population
objective functions (\ref{eq:pop_obj}) and (\ref{eq:pop_obj_2}) translate into MS estimation procedures (\cite{manski-75,manski-85,manski-87}).

Assume a random sample of $n$ observations is drawn from model (\ref{model}) that satisfies Assumption A.
Let $\vartheta :=(r,b,w)\in \mathbb{R}^{p+2}$ and $%
\sigma _{n}\rightarrow \infty $ as $n\rightarrow \infty $. When the support
of $z_{i2}$ is unbounded above, we propose the MS estimator $\hat{\theta}_{n}
$ of $\theta $ maximizing the following objective function over the
parameter space $\Theta $:
\begin{equation}
Q_{n1}(\vartheta ):=\frac{1}{n}\sum_{i=1}^{n}y_{i2}(y_{i3}-y_{i1})\cdot %
\mathds{1}\{z_{i2}>\sigma _{n}\}\cdot \mathds{1}%
\{r(y_{i2}-y_{i0})+(x_{i3}-x_{i1})^{\prime }b+w(z_{i3}-z_{i1})>0\}.
\label{eq:Qn1}
\end{equation}%
When the support of $z_{i2}$ is unbounded below, one can instead define $%
\hat{\theta}_{n}$ with objective function
\begin{equation}
Q_{n2}(\vartheta ):=\frac{1}{n}\sum_{i=1}^{n}(1-y_{i2})(y_{i3}-y_{i1})\cdot %
\mathds{1}\{z_{i2}<-\sigma _{n}\}\cdot \mathds{1}%
\{r(y_{i2}-y_{i0})+(x_{i3}-x_{i1})^{\prime }b+w(z_{i3}-z_{i1})>0\}.
\label{eq:Qn2}
\end{equation}%
If the support of $z_{i2}$ is unbounded both above and below, the objective
function can be a combination of (\ref{eq:Qn1}) and (\ref{eq:Qn2}) such as
\begin{equation}
Q_{n}(\vartheta ):=Q_{n1}(\vartheta )+Q_{n2}(\vartheta ).  \label{eq:Qn}
\end{equation}%
Note that (3.3) puts the same weight on $Q_{n1}(\vartheta)$ and $Q_{n2}(\vartheta)$,
which is a generic choice and probably not optimal in specific applications.
In some cases, it might be preferable to put more weight on one side
if additional information, such as restrictions on the error distribution,
suggests that the identification at infinity is more effective on
that side, especially if $z_{2}$ has a relatively heavier tail against
the error term.
Since $\text{sgn}(u)=2\cdot \mathds{1}\{u>0\}-1$ almost surely for any
continuous variable $u$, objective functions (\ref{eq:Qn1}) and (\ref{eq:Qn2}) are sample
analogues to monotone transformations of population functions (\ref{eq:pop_obj}) and (\ref%
{eq:pop_obj_2}), respectively.

It is clear from expressions (\ref{eq:Qn1}) and (\ref{eq:Qn2}) that the
effective sample size for the estimator $\hat{\theta}_{n}$ is controlled by
the tuning parameter $\sigma_{n}$, and being similar to %
\citeauthor{manski-87}'s (\citeyear{manski-87}) and \citeauthor{honore-k}'s (%
\citeyear{honore-k}) estimators, only ``switchers'' who change
choices in periods 1 and 3 are used in the estimation. Besides, estimating (identifying) $\gamma$ relies on the variation in $y_{i2} - y_{i0}$, which means that we need some observations with $y_{i2}\neq y_{i0}$ and some with $y_{i2}=y_{i0}$.

Our proposed estimator $\hat{\theta}_n$ has two advantages, compared with
\citeauthor{honore-k}'s (\citeyear{honore-k}) estimator: First, the
estimation only needs to condition on a single univariate covariate, rather
than a vector of covariates, and hence it does not encounter the curse of
dimensionality. This property makes the procedure proposed above more
practical when the number of covariates is large. More importantly, our
estimator does not require matching $(x_t,z_t)$ in different periods.
Consequently, it allows covariates with non-overlapping support over time,
such as age, time trends, time dummy variables, etc.

\section{Asymptotic Properties}

\label{Sec:Asymptotics}

\subsection{Consistency}

This section establishes the asymptotic properties of the MS estimator
proposed in Section \ref{Sec:Estimation}. Given that objective functions (\ref%
{eq:Qn1}) and (\ref{eq:Qn2}) are symmetric, it suffices to only investigate
the estimator $\hat{\theta}_{n}$ obtained from maximizing objective function
(\ref{eq:Qn1}) requiring the support of $z_{2}$ to be unbounded above. The
derivation for $\hat{\theta}_{n}$ associated with objective functions (\ref%
{eq:Qn2}) or (\ref{eq:Qn}) is analogous. Additionally, for the sake of simplicity, we focus on the case where $\xi_{31}=z_{31}$ in Assumption A3.

To ensure the consistency of $\hat{\theta}_n$, we need additional technical
conditions.

\begin{assumptionp}{B}
For all $t\in \mathcal{T}$, the following conditions hold:

\begin{enumerate}
\item[B1] The data $\{y_{i0},y_{i}^{T},x_{i}^{T},z_{i}^{T}\}_{i=1}^{n}$ are
i.i.d. across $i$.

\item[B2] $\sigma_{n}$ is a sequence of positive numbers such that as $%
n\rightarrow\infty$: (i) $\sigma_{n}\rightarrow\infty$, and (ii) $%
nP(z_{2}>\sigma_{n})/\log n\rightarrow\infty$.

\item[B3] Let $\Lambda (\vartheta ):=y_{2}(y_{3}-y_{1})\cdot \mathds{1}%
\{r(y_{2}-y_{0})+(x_{3}-x_{1})^{\prime }b+w(z_{3}-z_{1})>0\}$ for $\vartheta
\in \Theta $. Then (i) $\lim_{\sigma \rightarrow +\infty }\mathbb{E}[\Lambda
(\vartheta )|z_{2}>\sigma ]$ exists for all $\vartheta \in \Theta $, and
(ii) there exists an absolute constant $L$ such that
\begin{equation}
|\mathbb{E}[\Lambda (\vartheta _{1})|z_{2}>\sigma ]-\mathbb{E}[\Lambda
(\vartheta _{2})|z_{2}>\sigma ]|\leq L\Vert \vartheta _{1}-\vartheta
_{2}\Vert  \label{assumption_lip}
\end{equation}%
holds for all $\vartheta _{1},\vartheta _{2}\in \Theta $ and $\sigma >0$.
\end{enumerate}
\end{assumptionp}

Assumptions B2 imposes mild restrictions on the tuning parameter $\sigma _{n}
$. It is worth noting that Assumption B2(ii) indicates that the choice of $%
\sigma_{n}$ depends on the tail behavior of the distribution of $z_{2}$.
For example, if $z_{2}$ has a sub-exponential right tail with $P(z_{2} > \sigma_{n}) \asymp e^{-\sigma_{n}}$, then any $\sigma_n$ satisfying $1 \ll \sigma_n \leq (1-\varepsilon) \log(n)$, e.g., $\sigma_n = \log\log(n/\log n)$, meets Assumption B2(ii), for some $\varepsilon\in(1/4,1)$. However, when the distribution of $z_{2}$ has a
(too) thin right tail $P(z_{2}>\sigma _{n})\asymp e^{-e^{\sigma _{n}}}$, $\sigma _{n}=\log\log(n/\log n)$ gives $nP(z_{2}>\sigma _{n})/\log n=O(1)$, violating
Assumption B2(ii). Notably, $nP(z_{2}>\sigma _{n})$ essentially controls the
``effective sample size'' for our proposed
procedure. As demonstrated in Theorem \ref{T:limiting_dist}, the tail behavior of the distribution of $z_{2}$ and the
choice of $\sigma _{n}$ jointly determine the convergence rate of the
proposed estimator $\hat{\theta}_{n}$. Assumption B3(ii) is a Lipschitz
condition essential for proving the uniform convergence of the objective
function (\ref{eq:Qn1}) to its population analogue. We provide a set of more concrete sufficient conditions for it in Appendix \ref{appendix0_34}.

The theorem below states that the proposed procedure described in (\ref%
{eq:Qn1})--(\ref{eq:Qn}) %\footnote{Some straightforward adjustments to Assumption B are necessary for establishing similar results for $\hat{\theta}_n$ associated with (\ref{eq:Qn2}) or (\ref{eq:Qn}).}
gives a consistent estimator of $\theta$, whose
proof is left to Appendix \ref{appendixA}.

\begin{theorem}
\label{T:consistency} Suppose Assumptions A and B hold. Then, $\hat{\theta}_n%
\overset{p}{\rightarrow}\theta$.
\end{theorem}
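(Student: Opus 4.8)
The plan is to treat $\hat{\theta}_{n}$ as a standard extremum estimator and run the classical three-ingredient consistency argument: (i) uniform convergence of a suitably rescaled sample objective to a population criterion, (ii) compactness of the parameter space, and (iii) a unique, well-separated maximizer. Because rescaling an objective by a positive constant leaves its argmax unchanged, $\hat{\theta}_{n}$ also maximizes $\tilde{Q}_{n1}(\vartheta):=Q_{n1}(\vartheta)/p_{n}$, where $p_{n}:=P(z_{2}>\sigma_{n})$. Writing $\Lambda_{i}(\vartheta)$ for the per-observation quantity in Assumption B3, this is $\tilde{Q}_{n1}(\vartheta)=(np_{n})^{-1}\sum_{i=1}^{n}\Lambda_{i}(\vartheta)\mathds{1}\{z_{i2}>\sigma_{n}\}$, with mean $m_{n}(\vartheta):=\mathbb{E}[\Lambda(\vartheta)\mid z_{2}>\sigma_{n}]$. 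The relevant target is $\bar{Q}_{1}(\vartheta):=\lim_{\sigma\rightarrow+\infty}\mathbb{E}[\Lambda(\vartheta)\mid z_{2}>\sigma]$, a monotone transformation of (\ref{eq:pop_obj}); it exists by B3(i) and, by the argument preceding Theorem \ref{T:identify}, is uniquely maximized at $\vartheta=\theta$.

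I would first split the uniform error by the triangle inequality,
\[
\sup_{\vartheta\in\Theta}|\tilde{Q}_{n1}(\vartheta)-\bar{Q}_{1}(\vartheta)|\leq\sup_{\vartheta\in\Theta}|m_{n}(\vartheta)-\bar{Q}_{1}(\vartheta)|+\sup_{\vartheta\in\Theta}|\tilde{Q}_{n1}(\vartheta)-m_{n}(\vartheta)|,
\]
into a deterministic bias term and a stochastic term. For the bias term, B3(i) gives pointwise convergence $m_{n}(\vartheta)\rightarrow\bar{Q}_{1}(\vartheta)$, while the Lipschitz bound B3(ii), being uniform in $\sigma$, makes $\{m_{n}\}$ a uniformly Lipschitz---hence equicontinuous---family on the compact set $\Theta$; pointwise convergence of an equicontinuous family on a compact set is automatically uniform, so this term vanishes. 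Compactness of $\Theta$ follows from Assumption A5 ($\Theta$ is a closed subset of the unit sphere with $w\geq\iota$), and continuity of $\bar{Q}_{1}$ is inherited from the common Lipschitz constant upon passing to the limit.

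The crux---and the step I expect to be the main obstacle---is the stochastic term, because the conditioning event has \emph{vanishing} probability $p_{n}\rightarrow0$. This is effectively a triangular-array problem in which the effective sample size $np_{n}$ diverges but strictly slower than $n$, so a textbook uniform law of large numbers does not apply and a variance-sharp maximal inequality is needed. The summands $\Lambda_{i}(\vartheta)\mathds{1}\{z_{i2}>\sigma_{n}\}$ are bounded in $[-1,1]$ with variance $O(p_{n})$, so after the $(np_{n})^{-1}$ normalization the pointwise fluctuation is of order $(np_{n})^{-1/2}$. To make this uniform in $\vartheta$, I would exploit that the indicator $\mathds{1}\{r(y_{2}-y_{0})+(x_{3}-x_{1})^{\prime}b+w(z_{3}-z_{1})>0\}$ traces out halfspaces indexed by $\vartheta$, so the class $\{\Lambda(\cdot,\vartheta)\mathds{1}\{z_{2}>\sigma_{n}\}:\vartheta\in\Theta\}$ is VC-subgraph with index bounded uniformly in $n$. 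Combining a Bernstein/Talagrand-type concentration bound with the polynomial covering numbers of this class---in the spirit of \cite{KimPollard1990} and \cite{SeoOtsu2018}---controls the supremum by a term of order $\sqrt{\log n/(np_{n})}$, the logarithmic factor arising from the entropy/union bound. Assumption B2(ii), $np_{n}/\log n\rightarrow\infty$, is precisely what forces this to zero, so the stochastic term is $o_{p}(1)$.

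Finally, with $\sup_{\vartheta\in\Theta}|\tilde{Q}_{n1}(\vartheta)-\bar{Q}_{1}(\vartheta)|\overset{p}{\rightarrow}0$, $\Theta$ compact, $\bar{Q}_{1}$ continuous, and $\theta$ its unique maximizer, I would close with the standard extremum-estimator argument: $\tilde{Q}_{n1}(\hat{\theta}_{n})\geq\tilde{Q}_{n1}(\theta)$ together with uniform convergence yields $\bar{Q}_{1}(\hat{\theta}_{n})\overset{p}{\rightarrow}\bar{Q}_{1}(\theta)$, and well-separation of the maximum (for every open neighborhood $N$ of $\theta$, $\sup_{\vartheta\in\Theta\setminus N}\bar{Q}_{1}(\vartheta)<\bar{Q}_{1}(\theta)$, which holds since a continuous function attains its supremum on the compact $\Theta\setminus N$ and the maximizer is unique) then forces $\hat{\theta}_{n}\overset{p}{\rightarrow}\theta$.
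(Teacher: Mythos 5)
Your proposal is correct and follows essentially the same route as the paper: both rescale by $P(z_{2}>\sigma_{n})$, split the uniform error into a deterministic bias term (handled via the Lipschitz condition B3(ii) and equicontinuity on the compact $\Theta$) and a stochastic term (handled via a VC-class/Euclidean-class maximal inequality that is variance-sharp, so that B2(ii) delivers $o_{p}(1)$), and then close with the standard compactness/uniform-convergence/unique-maximizer argument, with unique maximization imported from Theorem \ref{T:identify}. The only cosmetic difference is that you obtain continuity of the limit criterion by passing the common Lipschitz constant to the limit, whereas the paper verifies it directly from an integral representation under Assumptions A1 and A3; both are valid.
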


\subsection{Asymptotic Distribution}\label{Sec:asymp_dist}
We proceed to study the asymptotic distribution of
the estimator $\hat{\theta}_n$. Before presenting additional technical
conditions and the main results, we introduce some new notation to
facilitate exposition:

\begin{itemize}
\item[-] $h_{n}:=P(z_{2}>\sigma_{n}|y_{2}=1)$.%\footnote{Note that under Assumptions A1 and A3, $P(z_{2}>\sigma_{n})\asymp P(z_{2}>\sigma_{n}|y^{T})$, and so $h_n$ satisfies Assumption B2(ii).}

\item[-] For generic vectors $\xi_t$ and $\xi_s$, denote $%
\xi_{ts}=\xi_t-\xi_s$.

\item[-] $\chi:=(y_{0},y^{T},x^{T},z^{T})$ and $%
\bar{\chi}:=(y_{20},x_{31},z_{31})$. %a sub-vector of $\chi$.

\item[-] $u(\vartheta ):=\mathds{1}\{ry_{20}+x_{31}^{\prime }b+wz_{31}>0\}$,
thus, $u(\vartheta )=\mathds{1}\{\bar{\chi}^{\prime }\vartheta>0\} $.

\item[-] $\bar{q}_{n1,\vartheta }(\bar{\chi}):=\mathbb{E}\left[ y_{31}\left(
u(\vartheta )-u(\theta )\right) |z_{2}>\sigma _{n},y_{2}=1,\bar{\chi}\right] $%
, and $\bar{q}_{1\vartheta }^{+}(\bar{\chi}):=\lim_{n\rightarrow \infty }\bar{q%
}_{n1,\vartheta }(\bar{\chi})$.

%\item[-] For discrete $y$ and any generic differentiable function $f$, the $%
%f^{\prime }\left( y\right) $ is defined as $\left. \frac{df\left( x\right) }{%
%dx}\right\vert _{x=y}.$ For example, suppose $f\left( x\right) =x^{2},$ $%
%f^{\prime }\left( y\right) =2y$ for any discrete $y.$

\item[-] $\kappa _{n}(\bar{\chi}):=\mathbb{E}\left[ y_{31}|z_{2}>\sigma
_{n},y_{2}=1,\bar{\chi}\right] $ and $\kappa
^{+}(\bar{\chi}):=\lim_{n\rightarrow \infty }\mathbb{E}\left[
y_{31}|z_{2}>\sigma _{n},y_{2}=1,\bar{\chi}\right] $. $\dot{\kappa}_{n}(\nu
):=\left. \frac{\partial \kappa _{n}(\bar{\chi})}{\partial \bar{\chi}}%
\right\vert _{\bar{\chi}=\nu }$ and $\dot{\kappa}^{+}(\nu ):=\left. \frac{%
\partial \kappa ^{+}(\bar{\chi})}{\partial \bar{\chi}}\right\vert _{\bar{\chi}=\nu
}$.%\footnote{In our case, $y$ only takes value 0 and 1. When we say $f(y)$ is differentiable, we mean $f(\cdot)$ is differentiable in $\cdot$, e.g., $f(x)=x^2$ is differentiable in $x$.}

\item[-] $F_{\bar{\chi}}(\cdot |z_{2}>\sigma _{n},y_{2}=1)$ ($%
f_{\bar{\chi}}(\cdot |z_{2}>\sigma _{n},y_{2}=1)$) denotes the joint CDF (PDF)
of $\bar{\chi}$ conditional on $\{z_{2}>\sigma _{n},y_{2}=1\}$. $%
F_{\bar{\chi}}^{+}(\cdot |y_{2}=1):=\lim_{n\rightarrow \infty
}F_{\bar{\chi}}(\cdot |z_{2}>\sigma _{n},y_{2}=1)$.
\end{itemize}

\begin{assumptionp}{C}
Suppose the following conditions hold. $\text{ }$

\begin{enumerate}
\item[C1] The proposed estimator $\hat{\theta}_{n}$ satisfies $Q_{n1}(\hat{%
\theta}_{n})\geq\sup_{\vartheta\in\Theta}Q_{n1}(\vartheta)-o_{p}%
((nh_{n})^{-2/3})$.

\item[C2] $P\left( z_{2}>\sigma |y_{2}=1,y_{31},\bar{\chi}\right) >0$ for all $%
\sigma >0$ and almost every $(y_{31},\bar{\chi})$.

\item[C3] (i) $F_{\bar{\chi}}^{+}(\cdot |y_{2}=1)$ is non-degenerate and has
an uniformly bounded PDF $f_{\bar{\chi}}^{+}(\cdot |y_{2}=1)$, and (ii) $%
\sup_{\nu }|f_{\bar{\chi}}(\nu |z_{2}>\sigma
_{n},y_{2}=1)-f_{\bar{\chi}}^{+}(\nu |y_{2}=1)|=o(1)$.

\item[C4] (i) $\kappa _{n}(\bar{\chi})$ and $\kappa ^{+}(\bar{\chi})$ are
differentiable in $\bar{\chi}$, and (ii) $\sup_{\nu }|\dot{%
\kappa}_{n}(\nu )-\dot{\kappa}^{+}(\nu )|=o(1)$.

\item[C5] (i) $\mathbb{E}[\bar{q}_{n1,\vartheta }(\bar{\chi})]$ and $\mathbb{E%
}[\bar{q}_{1\vartheta }^{+}(\bar{\chi})]$ are twice continuously
differentiable at $\vartheta $ in a small neighborhood of $\theta $, (ii) $%
(\alpha ,x^{T})$ has a compact support, and (iii) for any constant $%
\varsigma $, $\sup_{\alpha }P(\epsilon _{2}\geq \varsigma +\sigma
_{n}|\alpha )=o((nh_{n})^{-1/3})$.

\item[C6] $h_{n}\gtrsim n^{-1+\varepsilon }$ for some small positive $\varepsilon$.
\end{enumerate}
\end{assumptionp}

Assumption C1 is standard in the literature (see, e.g., \cite{KimPollard1990} and \cite{SeoOtsu2018}). This assumption implies that the maximization of $Q_{n1}(\vartheta)$ need not be exact, and any approximate maximizer close enough to the exact one will be enough for the asymptotic analysis.
Assumption C2 is an implication of Assumptions A1 and A2. We list it as a
separate condition here mainly because it is more directly related to our
proof process presented in Appendix \ref{appendixB}. Assumption C3
strengthens Assumption A4(ii). Assumption C4 requires the two conditional
probabilities $\kappa_{n}(\bar{\chi})$ and $\kappa ^{+}(\bar{\chi})$ to be
smooth enough, which, together with Assumption C3, is important for calculating the expected value of the limiting distribution of the
estimator $\hat{\theta}_{n}$.

The smoothness conditions imposed in Assumption C5(i) are standard in the literature as well.
Assumption C5(ii) is made to simplify the proof process and can be relaxed to allow for unbounded support, albeit with more tedious discussions. The essential requirement here is to exclude the scenario in which $\alpha + x_2^{\prime} \beta \rightarrow -\infty$ as $z_2 \rightarrow +\infty$.
Assumption C5(iii) essentially places a restriction on the relative tail
behavior of the observed regressor $z_t$ and unobserved error $\epsilon_t$. As shown in
the proof of Theorem \ref{T:limiting_dist}, this assumption ensures that the bias of the estimator $\hat{%
\theta}_{n}$ shrinks sufficiently fast. If this condition is violated, the bias term dominates the distribution, and inferences are not possible. It is worth noting that such
condition plays a crucial role in determining the rate of convergence of
estimators based on ``irregular identification'' strategies including the
``identification at infinity'' as a special case. See \cite%
{khan2010irregular} for an in-depth investigation on this issue.

Assumption C6, together with Assumption C5(iii), guides the selection of the tuning parameter $\sigma_{n}$. These two conditions are in the same spirit of Assumptions 8 and 8* in \cite{andrews}. On one hand, since $%
h_{n}=P(z_{2}>\sigma_{n}|y_{2}=1)$ controls the effective sample size of the
estimation procedure, Assumption C6 implies that $\sigma_{n}$ should not increase too rapidly as $n\rightarrow\infty$, ensuring enough effective observations to control the variance of $\hat{\theta}_{n}$. On the
other hand, Assumption C5(iii) suggests that $\sigma_{n}$ should grow
sufficiently fast as $n\rightarrow\infty$ to lower the bias of $\hat{\theta}%
_{n}$.

However, there is no way to determine the optimal $%
\sigma_{n}$ since this requires the knowledge of relative (unknown) tail
behavior of $z_{t}$ and $\epsilon_{t}$. This feature is well known to the ``identification at infinity'' type of estimators, see, e.g., \cite{andrews}. We suggest choices of $\sigma_n$ that satisfy both Assumptions C5(iii) and C6 in some special cases in Table \ref{T:sigman}. From the table, there are no valid $\sigma_n$ in case (I) when  $\lambda^{\prime}>\lambda$, and in case (III). The valid choices of $\sigma_n$, if exists, differ from case to case. As expected, we prefer the cases where $z_2$ possesses heavier tails than $\epsilon$. \cite{andrews} share similar results, for details, see their discussions after Assumption 8*.

For practice, we propose to take $\sigma_{n}=\sqrt{\log n/2.95}$. This choice of $\sigma_n$ is valid for case (I) with $\lambda^{\prime}=2$ and $\lambda'<\lambda$, and case (II) with $\lambda=2$. Moreover, with this choice of $\sigma_n$, Assumption B2(ii) is satisfied for $z_{2}$ with $P(z_{2} > \sigma_{n}) \gtrsim e^{-\left(1-\varepsilon\right) 2.95\sigma_{n}^{2}}$ for some $\varepsilon\in (1/4,1)$. We show the finite sample properties of our estimator with this choice of $\sigma_n$ by means of simulations in Section \ref{sec_simulation}. This chosen $\sigma_n$ works well (the bias does not dominate the distribution) even in the situation that belongs to case (I) with $\lambda^{\prime}>\lambda$, where no valid $\sigma_n$ exists.

\begin{table}[ptb]
\caption{Choice of $\sigma_n$ that satisfies both Assumptions C5(iii) and C6}
\small
\label{T:sigman}\centering
\begin{tabular}
[c]{l|c|c}\hline\hline
   & $ P\left(  \epsilon>t\right) \asymp \exp\left(  -t^{\lambda}\right)  $ & $P\left(  \epsilon>t\right) \asymp  t^{-\lambda}$\\\hline
 $\begin{array}
[c]{c}
P\left(  z_{2}>t\right) \\
\asymp \exp(  -t^{\lambda^{\prime}})
\end{array} $ & (I):
$
\begin{array}{lc}
\sigma_{n}=(c\log n)^{1/\lambda'}\textrm{ }\forall c\in(0,1-\varepsilon], & \textrm{if }\lambda'<\lambda\\
\sigma_{n}=(c\log n)^{1/\lambda'}\textrm{ }\forall c\in(1/4,1-\varepsilon], & \textrm{if }\lambda'=\lambda\\
\textrm{No }\sigma_{n}, & \textrm{if }\lambda'>\lambda
\end{array}

$ & (III): $\text{No }\sigma_{n}$\\\hline
 $\begin{array}
[c]{c}
P\left(  z_{2}>t\right) \\
\asymp t^{-\lambda^{\prime}}
\end{array}$ & (II): $\left( \log n/3\right)  ^{1/\lambda}%
<\sigma_{n}\lesssim n^{\frac{1-\varepsilon}{\lambda^{\prime}}}$ & (IV):
$n^{\frac{1}{3\lambda+\lambda^{\prime}}}\ll\sigma_{n}\lesssim
n^{\frac{1-\varepsilon}{\lambda^{\prime}}}$\\
\hline\hline
\multicolumn{3}{l}{Note: We focus on the right tail, and $\lambda^{\prime}, \lambda > 0$.} \\
\end{tabular}
\end{table}

The above conditions are sufficient to characterize the asymptotic distribution of the estimator obtained by maximizing (\ref{eq:Qn1})--(\ref{eq:Qn}), as presented in the following theorem, along with its proof in Appendix \ref{appendixB}.

\begin{theorem}
\label{T:limiting_dist} Suppose Assumptions A--C hold. Then, (i) $\hat{\theta}%
_{n}-\theta =O_{p}\left( (nh_{n})^{-1/3}\right) $, and (ii)
\begin{equation*}
\left( nh_{n}\right) ^{1/3}(\hat{\theta}_{n}-\theta )\overset{d}{\rightarrow
}\arg \max_{s\in \mathbb{R}^{p+2}}Z(s),
\end{equation*}%
where $Z(s)$ is a Gaussian process with continuous sample paths, expected
value $s^{\prime }Vs/2$, and covariance kernel $H(s_{1},s_{2})$ for $%
s_{1},s_{2}\in \mathbb{R}^{p+2}$. $V$ and $H(\cdot ,\cdot )$ are defined in
Lemma \ref{lemma_M1} and expression (\ref{eq:cov_kernel}), respectively.
\end{theorem}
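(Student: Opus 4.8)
The plan is to follow the standard program for cube-root asymptotics of maximum-score-type estimators (\cite{KimPollard1990}), adapted to the drifting-truncation setting of \cite{SeoOtsu2018}, in three stages: establish the rate in part (i), prove weak convergence of a localized and recentered version of the criterion, and then apply an argmax continuous-mapping theorem to obtain the limiting law in part (ii). Throughout I would reparametrize locally by $\vartheta=\theta+(nh_{n})^{-1/3}s$, so that $\arg\max_{s}$ of the localized criterion equals $(nh_{n})^{1/3}(\hat{\theta}_{n}-\theta)$.

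For part (i), I would verify the two ingredients of a localized rate theorem. First, the limiting population criterion has a well-separated maximum with quadratic curvature: by Assumption C5(i), $\mathbb{E}[\bar{q}_{1\vartheta}^{+}(\bar{\chi})]$ is twice continuously differentiable at $\theta$, vanishes and is maximized there, so it admits the expansion $\tfrac{1}{2}(\vartheta-\theta)^{\prime}V(\vartheta-\theta)\asymp-\|\vartheta-\theta\|^{2}$ with $V$ the negative-definite Hessian of Lemma \ref{lemma_M1}. Second, the centered empirical fluctuation over a ball of radius $\delta$ is controlled by a maximal inequality for the class of half-space indicators $\{\mathds{1}\{\bar{\chi}^{\prime}\vartheta>0\}\}$, which is a VC (hence manageable) class; the crucial quantitative input is that, on the effective sample of size $nh_{n}$, the conditional second moment of the increment $u(\vartheta)-u(\theta)$ scales like $\delta=\|\vartheta-\theta\|$. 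Balancing the $\delta^{2}$ drift against the $\sqrt{\delta/(nh_{n})}$ fluctuation gives $\delta\asymp(nh_{n})^{-1/3}$, and the rate theorem of \cite{SeoOtsu2018}, together with the near-maximizer property C1, delivers $\hat{\theta}_{n}-\theta=O_{p}((nh_{n})^{-1/3})$.

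For part (ii), I would study the localized process
\begin{equation*}
\mathbb{Z}_{n}(s):=\frac{(nh_{n})^{2/3}}{h_{n}}\big[Q_{n1}(\theta+(nh_{n})^{-1/3}s)-Q_{n1}(\theta)\big],
\end{equation*}
whose argmax is exactly $(nh_{n})^{1/3}(\hat{\theta}_{n}-\theta)$ because rescaling by the positive scalar $h_{n}^{-1}$ does not move the maximizer. I would decompose $\mathbb{Z}_{n}$ into a deterministic drift and a centered stochastic part. For the drift, a second-order Taylor expansion combined with Assumptions C3--C4 (convergence of the conditional density $f_{\bar{\chi}}(\cdot|z_{2}>\sigma_{n},y_{2}=1)$ to $f_{\bar{\chi}}^{+}(\cdot|y_{2}=1)$ and of $\dot{\kappa}_{n}$ to $\dot{\kappa}^{+}$) yields drift $\to s^{\prime}Vs/2$. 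For the stochastic part, since the conditioning event $\{z_{2}>\sigma_{n}\}$ drifts with $n$ this is a triangular array; I would obtain finite-dimensional convergence by a Lindeberg CLT, identifying the limiting covariance kernel $H(s_{1},s_{2})$ of (\ref{eq:cov_kernel}), and stochastic equicontinuity (tightness) by a chaining argument over the VC indicator class. This gives $\mathbb{Z}_{n}\overset{d}{\rightarrow}Z$ uniformly on compacta, with $Z(s)=s^{\prime}Vs/2+\mathbb{G}(s)$ and $\mathbb{G}$ the mean-zero Gaussian process with kernel $H$. Since $V$ is negative definite the quadratic drift dominates at infinity, so $Z$ has continuous paths and a unique argmax almost surely; combined with the $O_{p}(1)$ tightness of $\arg\max_{s}\mathbb{Z}_{n}(s)$ from part (i), the argmax continuous-mapping theorem (\cite{KimPollard1990}) yields $(nh_{n})^{1/3}(\hat{\theta}_{n}-\theta)\overset{d}{\rightarrow}\arg\max_{s}Z(s)$.

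The hard part will be the drifting-truncation bias. Because $\sigma_{n}\to\infty$, the conditional expectations that generate the drift are taken over the shrinking event $\{z_{2}>\sigma_{n}\}$ and differ from their ``at-infinity'' limits by a bias term, which must be shown to be $o_{p}((nh_{n})^{-1/3})$ so that it does not displace the argmax. This is precisely where Assumptions C5(iii) (the tail of $\epsilon_{t}$ relative to that of $z_{t}$) and C6 (the lower bound $h_{n}\ge cn^{1/2-3\varrho}$) enter; it is the ``irregular identification'' phenomenon studied by \cite{khan2010irregular}. Establishing that this bias is negligible, while simultaneously controlling the triangular-array second moments so that the kernel $H$ is well-defined and the Lindeberg condition holds, is the crux of the argument; the remaining tightness and continuous-mapping steps are then routine adaptations of \cite{KimPollard1990} and \cite{SeoOtsu2018}.
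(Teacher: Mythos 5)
Your proposal follows essentially the same route as the paper: the rate via a quadratic approximation of the drift plus a Seo--Otsu-type maximal inequality over the VC class of half-space indicators, and the limiting law via finite-dimensional convergence and stochastic equicontinuity of the localized process followed by the argmax continuous-mapping theorem of \cite{KimPollard1990}, with the covariance kernel given by (\ref{eq:cov_kernel}). The drifting-truncation bias you single out as the crux is exactly what the paper isolates in Lemma \ref{lemma_C5} (using Assumption C5(iii)) and carries through the quadratic approximation of Lemma \ref{lemma_M1} as the $O\left((nh_{n})^{-1/3}\Vert\vartheta-\theta\Vert\right)$ term.
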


Note that Theorem \ref{T:limiting_dist}  does not determine the exact rate of convergence of $\hat{\theta}_n$, as $h_n$ depends on the unknown tail probabilities of $z_2$. However, the lack of this knowledge does not render statistical inference infeasible. In Section 6, we will apply the $m$-out-of-$n$ bootstrap to conduct the inference in an empirical application. We choose this method for two reasons: it is comparatively easier to implement, and it provides an estimate of the convergence rate for our estimator.

In Remark \ref{remark_boot}, we discuss several sampling-based methods with the potential to enable statistical inference in the absence of knowledge of the exact convergence rate of the estimator.

\begin{remark}
It is worth noting that the rate of convergence of $\hat{\theta}_{n}$ depends
on the relative tail behavior of the distributions of $z_{t}$ and
$\epsilon_{t}$. To achieve a faster convergence rate of $\hat{\theta}_{n}$, it
is desirable for the distribution of $z_{2}$ to have heavier tails compared to
$\epsilon_{2}$. To see this, consider any eligible $\sigma_{n}$ satisfying
both Assumptions C5 and C6. Suppose $P\left(  \epsilon_{2}>\sigma_{n}\right)
\asymp h_{n}^{\upsilon}$ (the bias term) for some $\upsilon>0$, and
$h_{n}^{\upsilon}\asymp(nh_{n})^{-1/3}$ for the fastest possible convergence
rate $n^{-\upsilon/\left(  1+3\upsilon\right)  }$. Recall that $P\left(
z_{2}>\sigma_{n}\right)  =h_{n}.$  When $z_{2}$ and $\epsilon
_{2}$ have the same tail, $\upsilon=1$ and the convergence rate is $n^{-1/4}$.$\ $Loosely
speaking$,$ $\upsilon\ $increases as the tail of $z_{2}$ becomes thicker, and
decreases otherwise. Therefore, for any eligible $\sigma_{n}$, the convergence rate
of $\hat{\theta}_{n}$ increases in $\upsilon$ (as the tail of $z_{2}$ becomes
thicker) and approaches $n^{-1/3}$ for large $\upsilon$ (as the tail of $z_{2}$ becomes much
thicker than $ \epsilon_{2}$).
\end{remark}

\begin{remark}\label{remark_boot}
Theorem \ref{T:limiting_dist} indicates that the proposed estimator $\hat{\theta}_n$ has a slower than cube-root-$n$ rate of convergence and its
asymptotic distribution is not Gaussian. As a result, standard inference methods
based on asymptotic normality no longer work here. Smoothing the objective
function in the sense of \cite{andrews} and \cite{horowitz} (See also \cite{kyriazidou1997estimation} and \cite{charlier1997limited}) may yield a
faster rate and regain an asymptotically normal estimator. However, this
involves choosing additional kernel functions and tuning parameters for the
two indicator functions in objective functions (\ref{eq:Qn1}) and (\ref{eq:Qn2}). A more practical alternative may be to consider sampling-based
inference methods. It is known that the naive nonparametric bootstrap is
typically invalid under the cube-root asymptotics (\cite{AbrevayaHuang2005}). For the ordinary MS estimator, valid inference can be conducted using
subsampling (\cite{DelgadoEtal2001}), the $m$-out-of-$n$ bootstrap (\cite{LeePun2006}), the numerical bootstrap (\cite{HongLi2020}), and a model-based bootstrap with modified objective function (\cite{cattaneo2020bootstrap}), among other procedures. \cite{OuyangYang2024binary} show that \citeauthor{LeePun2006}'s (\citeyear{LeePun2006}), \citeauthor{HongLi2020}'s (\citeyear{HongLi2020}),
and \citeauthor{cattaneo2020bootstrap}'s (\citeyear{cattaneo2020bootstrap})
methods, with certain modifications, are valid for kernel weighted MS
estimators with asymptotics similar to Theorem \ref{T:limiting_dist}.
Similar methods might apply to the estimator proposed in this paper. However, extending these bootstrap methods to cases with unknown convergence rates requires significant effort and is beyond
the scope of this paper. We, therefore, defer this task to future studies.
\end{remark}

\section{Monte Carlo Experiments}\label{sec_simulation}

In this section, we investigate the finite-sample performance of the
proposed estimators by means of Monte Carlo experiments. We examine
two designs, each with a less favorable scenario for our estimator.
In these scenarios, $z_{t}$ has a thinner tail than $\epsilon_{t}$,
and there is no theoretically valid $\sigma_{n}$. These are the first
scenarios in both Designs 1 and 2 presented below. Despite these challenges,
our estimator performs reasonably well, as the bias term does not
appear to dominate the distribution.

We start by considering a benchmark design similar to that used in
\citet{honore-k}, but we add an additional covariate $z_{it}$ and
a time trend that \citet{honore-k} cannot handle. Specifically, this
design (referred to as Design 1) is specified as follows:
\begin{align*}
y_{i0} & =\mathds{1}\left\{ \alpha_{i}+\delta\times\left(0-2\right)+\beta_{1}x_{i0,1}+z_{i0}\geq\epsilon_{i0}\right\} ,\\
y_{it} & =\mathds{1}\left\{ \alpha_{i}+\delta\times\left(t-2\right)+\gamma y_{it-1}+\beta_{1}x_{it,1}+z_{it}\geq\epsilon_{it}\right\} ,\text{ \ }t\in\left\{ 1,2,3\right\} ,
\end{align*}
where we set $\gamma=\beta_{1}=1$ and $\delta=1/2$. Following the discussion in Remark \ref{remark_A5}, we normalize the coefficient on $z_{it}$ to 1 for all designs investigated in this section and Appendix \ref{appendixC}. We consider
two scenarios. For each, we let $x_{it,1}\overset{d}{\sim}N\left(0,1\right),$
$\epsilon_{it}\overset{d}{\sim}(\pi^{2}/3)^{-1/2}\cdot$Logistic$\left(0,1\right)$
(the variance of $\epsilon_{it}$ is 1)$,$\ and $\alpha_{i}=\left(x_{i0,1}+x_{i1,1}+x_{i2,1}+x_{i3,1}\right)/4,$\ but
we consider $z_{it}$ with different tail behaviors. $x_{\cdot,1},z_{\cdot},$
and $\epsilon_{\cdot}$ are independent of each other, and all covariates
are i.i.d. across $i$ and $t.$ In the first scenario, we set $z_{it}\overset{d}{\sim}N(0,1),$
and denote it as ``Norm''. In the second scenario, we set $z_{it}\overset{d}{\sim}\text{Laplace}(0,\sqrt{2}/2)$
(with zero mean and unit variance), and denote it as ``Lap''. Note
that the density function of the Laplace distribution decays like
$e^{-\left\vert x\right\vert /c}$ for some constant $c$ at its tail$,$
which is heavier than the tail of the normal density.

In the second design (referred to as Design 2), the setup is the same
as that in Design 1, except that we add one more covariate to examine how
our estimators perform in a higher dimensional design. Specifically,
\begin{align*}
y_{i0} & =\mathds{1}\left\{ \alpha_{i}+\delta\times\left(0-2\right)+\beta_{1}x_{i0,1}+\beta_{2}x_{i0,2}+z_{i0}\geq\epsilon_{i0}\right\} ,\\
y_{it} & =\mathds{1}\left\{ \alpha_{i}+\delta\times\left(t-2\right)+\gamma y_{it-1}+\beta_{1}x_{it,1}+\beta_{2}x_{it,2}+z_{it}\geq\epsilon_{it}\right\} ,\text{ \ }t\in\left\{ 1,2,3\right\} ,
\end{align*}
where we set $\gamma=\beta_{1}=\beta_{2}=1$ and $\delta=1/2$. Random
covariates are generated as$\ x_{it,1},x_{it,2}\overset{d}{\sim}N\left(0,\sqrt{2}/2\right)$,
$\epsilon_{it}\overset{d}{\sim}(\pi^{2}/3)^{-1/2}\cdot\text{Logistic}\left(0,1\right)$,
and $\alpha_{i}=\sum_{t=0}^{3}(x_{it,1}+x_{it,2})/4$. Similarly,
we consider two scenarios with the same $z_{it}$ as in design 1$.$
Again, $x_{\cdot,1}$, $x_{\cdot,2}$, $z_{\cdot}$, and $\epsilon_{\cdot}$
are independent of each other. To investigate only the impact of higher dimension, we
set the variance of $x_{it,1}+x_{it,2}$ in Design 2 to be the same
as that of $x_{it,1}$ in Design 1.

As discussed in Section \ref{Sec:asymp_dist}, we set
$\sigma_{n}$ as
\[
\sigma_{n}=\widehat{\text{std}\left(z_{i2}\right)}\sqrt{\log n^{*}/2.95},
\]
where $\widehat{\text{std}\left(z_{i2}\right)}$ is the sample standard
deviation of $z_{2}$, and $n^{*}$ is the number of ``switchers'',
that is, observations with $y_{3}\neq y_{1}$. The usage of $n^{*}$
is intended to provide better control over the tuning parameters,
based on the features of the data. In practice, one may normalize
$z_{it}$ to mean 0 and variance 1 and set $\sigma_{n}=\sqrt{\log n^{*}/2.95}$.
We consider sample sizes of $n=5000,10000$, and $20000$. All the simulation results presented in this section are based on
1000 replications of each sample size. We implement MS
estimations in R, using the differential evolution (DE)
algorithm to attain a global optimum of the
objective function. The DE algorithm, developed by \citet{storn1997differential},
is capable of searching for the global optimum of a real-valued function
with real-valued parameters, even if the function lacks continuity
or differentiability. This algorithm has been effectively employed
in calculating MS-type estimators in the literature, including \citet{Fox2007}
and \citet{YanYoo2019}. \citet{mullen2011deoptim} provides a comprehensive
introduction to the R package $\texttt{DEoptim}$, which implements
the DE algorithm. We report the mean bias (MBIAS) and the root mean square errors (RMSE) of the estimates for Designs
1 and 2 in Tables \ref{T:D1} and \ref{T:D2}, respectively.

We summarize the findings in Tables \ref{T:D1} and \ref{T:D2}. First,
the RMSEs of all parameters decrease as the sample size increases,
but they converge to zero slower than the parametric rate. Second,
the convergence rate is faster with a thicker-tailed $z_{i2}$, as evidenced by comparing the RMSEs from Norm to Lap. Third, the RMSE does not appear to increase
for $\gamma$ and $\delta$ as we have one more covariate from Design
1 to Design 2. This confirms our theoretical findings. Note that the RMSE increases a bit for $\beta_{1}$, but this is probably due to
the lower variance of $x_{\cdot,1}$ in Design 2. To investigate the
sensitivity of the results to $\sigma_{n},$ we consider $\sigma_{n}=0.9\cdot\widehat{\text{std}\left(z_{i2}\right)}\sqrt{\log n^{*}/2.95}$
and $\sigma_{n}=1.1\cdot\widehat{\text{std}\left(z_{i2}\right)}\sqrt{\log n^{*}/2.95}$
(we need larger $\sigma_{n}$ to be in line with the discussion in
Section \ref{Sec:asymp_dist}), and report the corresponding results
in Tables \ref{T:D1_robust} and \ref{T:D2_robust} in Appendix \ref{appendixC}.
We note that the results are not sensitive to the choices of the tuning
parameters.

\begin{comment}
In Appendix \ref{appendixC}, we examine the impact of auto-correlations
of regressors on the performance of our estimators. After removing
the time trend term, we construct a new design and we compare the
performance of our estimator with the estimators in \citet{honore-k}
and \citet{OuyangYang2024binary}.\footnote{These two competing methods are not applicable in the presence of time
trends and dummies.} We briefly summarize the results in Appendix \ref{appendixC}. With
certain degrees of autocorrelations, our estimator performs reasonably
well, yet does not perform as well as before. Our estimator's performance
is comparable to that of the semiparametric estimators in \citet{honore-k}
and \citet{OuyangYang2024binary}. Notably, these competing methods are not valid in scenarios involving time trends and dummies, which are prevalent in empirical applications. In such contexts, our approach offers a valuable alternative.
\end{comment}

In Appendix \ref{appendixC}, we report additional results from supplementary simulation studies. Firstly, we investigate the impact of auto-correlations of the regressors on the performance of our estimator. Additionally, we compare the performance of our estimator with those proposed by \citet{honore-k} and \citet{OuyangYang2024binary} in designs without the time trend term. We direct interested readers to Appendix \ref{appendixC} for a more detailed discussion. Here, we provide a brief summary of these results. Our estimator still performs reasonably well with certain degrees of auto-correlations, but as expected, not as well as in Designs 1 and 2, where regressors are serially independent. Our estimator's performance is comparable to that of the semiparametric estimators proposed by \citet{honore-k} and \citet{OuyangYang2024binary}. It is essential to highlight that these alternative methods are not applicable in scenarios involving time trends or dummies, which are common in empirical applications. In such contexts, our approach offers a valuable alternative.

\begin{table}[H]
\caption{Simulation Results of Design 1}
\label{T:D1}\centering %
\begin{tabular}{r|cc|cc|cc}
\hline  \hline
 & \multicolumn{2}{c|}{$\beta_{1}$} & \multicolumn{2}{c|}{$\gamma$} & \multicolumn{2}{c}{$\delta$}\tabularnewline
 & MBIAS  & RMSE  & MBIAS  & RMSE  & MBIAS  & RMSE \tabularnewline
\hline
       $n_1$  & 0.120 & 0.407 & -0.027  & 0.549  & 0.068  & 0.228 \\
 Norm  $n_2$  & 0.075 & 0.299 & -0.015  & 0.427  & 0.053  & 0.171 \\
       $n_3$  & 0.048 & 0.219 & -0.062  & 0.340  & 0.043  & 0.132 \\ \hline
       $n_1$  & 0.039 & 0.249 & -0.021  & 0.386  & 0.030  & 0.154 \\
 Lap   $n_2$  & 0.024 & 0.185 & -0.049  & 0.306  & 0.019  & 0.116 \\
       $n_3$  & 0.027 & 0.146 & -0.041  & 0.247  & 0.020  & 0.095 \\
\hline
\hline
\multicolumn{7}{l}{Note: $n_{1}=5000,n_{2}=10000,n_{3}=20000$.}\tabularnewline
\end{tabular}
\end{table}

\begin{table}[H]
\caption{Simulation Results of Design 2}
\label{T:D2}\centering
\begin{tabular}{r|cc|cc|cc|cc}
\hline  \hline
 & \multicolumn{2}{c|}{$\beta_{1}$} & \multicolumn{2}{c|}{$\beta_{2}$} & \multicolumn{2}{c|}{$\gamma$} & \multicolumn{2}{c}{$\delta$}\tabularnewline
 & MBIAS  & RMSE  & MBIAS  & RMSE  & MBIAS  & RMSE  & MBIAS  & RMSE \tabularnewline
\hline
       $n_1$  & 0.144 & 0.471 & 0.157 & 0.475 & 0.018  & 0.542  & 0.093  & 0.235 \\
 Norm  $n_2$  & 0.087 & 0.344 & 0.085 & 0.355 & -0.032  & 0.430  & 0.057  & 0.172 \\
       $n_3$  & 0.048 & 0.276 & 0.062 & 0.273 & -0.048  & 0.332  & 0.044  & 0.137 \\\hline
       $n_1$  & 0.072 & 0.314 & 0.078 & 0.313 & -0.009  & 0.395  & 0.046  & 0.154 \\
 Lap   $n_2$  & 0.023 & 0.216 & 0.036 & 0.238 & -0.025  & 0.303  & 0.023  & 0.111 \\
       $n_3$  & 0.025 & 0.182 & 0.022 & 0.178 & -0.035  & 0.236  & 0.021  & 0.091 \\
\hline
\hline
\multicolumn{9}{l}{Note: $n_{1}=5000;n_{2}=10000;n_{3}=20000$.}\tabularnewline
\end{tabular}
\end{table}

A final note is that when using observational data, the choice of $\sigma_n$ depends on the unknown tail behavior of the variable $z_2$. As there are no formal methods to determine the appropriateness of a specific $\sigma_n$, we suggest practitioners try different $\sigma_n$'s in estimation and check if the results are sensitive to different choices.

\section{Empirical Illustration} \label{sec_application}

In Australia, Medicare is the universal tax-funded public health insurance scheme that provides free access to public
hospitals. Medicare patients in public hospitals receive free treatment from
doctors nominated by hospitals and free (shared) accommodations. Patients
may opt to receive private care in either private or public hospitals (as
private patients) to have their choice of doctors and nurses, better
amenities (e.g., private rooms, family member accommodation, etc.), and
quicker access to treatment by avoiding long waiting time experienced by
many Medicare patients. Medicare does not cover private hospital care. On
top of a patient copayment, the cost is either afforded by private patients
themselves as out-of-pocket expenditure or covered by their private hospital
(insurance) cover (PHC), if any. Having PHC does not preclude using hospital
care as a Medicare patient. The institutional context for Australia's
Medicare and private health insurance schemes has been more thoroughly
described in the vast health economics literature, e.g., Section 2 of \cite%
{cheng2014measuring}. We refer interested readers to \cite%
{cheng2014measuring} and references therein for more detailed information.

In this section, we apply our MS estimator to analyze the state dependence
and the impacts of government incentives on the choice to purchase PHC,
using 10 waves (waves 11--20 corresponding to years 2011--2020) of the
Household, Income and Labor Dynamics in Australia (\href{https://melbourneinstitute.unimelb.edu.au/hilda}{HILDA}) Survey data. Since 2011, the HILDA survey has begun recording information about respondents' enrollment in PHC.  

We denote the dependent variable, $y_{it}$, as whether individual $i$ has PHC in year $t$. We are interested in the effects of ``Lifetime Health
Cover'' (LHC) policy, ``Medicare Levy
Surcharge'' (MLS), and the state persistence $(y_{it-1})$
on one's purchasing PHC.

The age dummy variable $\text{Above30}_{it}$ indicates if individual $i$ is
30 years old or above in year $t$, namely, $\text{Above30}_{it}:=\mathds{1}%
\{\text{Age}_{it}\geq30\}$. Following the insight of the (sharp)
``regression discontinuity'' design, its coefficient captures the
effects of Australia's LHC policy introduced in 2000 to encourage the uptake
of PHC. Loosely speaking, the LHC states that if an individual has not taken
out and maintained PHC from the year she turns 31, she will pay a 2\% LHC
loading on top of her premium for every year she is aged over 30 if she
decides to take out PHC later in life. If LHC is a strong incentive, we
would expect a significant ``jump" in the PHC enrollment rate at this age.

The MLS is a levy paid by Australian taxpayers who do not have PHC and earn
above a certain income threshold. In the sample years of our data, MLS
rates remain unchanged, while the thresholds have been raised yearly until
2014. It is worth noting that the 2014 rise in MLS thresholds was only 50\%
of previous years, and the thresholds have remained at the same level until
2022. The time dummy $D_{2014,t}$ is included in model (\ref{app_model}%
) to examine whether this change in MLS policy would affect people's
willingness to purchase PHC. Note that \citeauthor{honore-k}'s (%
\citeyear{honore-k}) estimators do not allow either age ($\text{Age}_{it}$)
or fixed time effects ($D_{2014,t}$) since they do not have overlapping
supports across time.

$I_{it}$ represents standardized annual household disposable income
using the entire sample in the survey, which serves as the continuous regressor
with rich enough support required for point identification (by Assumptions A2 and A3). The standardization is performed before dropping missing data by subtracting the sample mean from each individual value and then dividing the difference by the standard deviation. 

We also include a (location) dummy variable $\text{GCC}_{it}$ that indicates whether individual $i$ lives in a major city/greater capital city in year $t$. This variable is included to control the accessibility to private hospital services. Tables \ref{tab_def} and \ref{tab_sum} provide definitions and summary statistics of all aforementioned variables, respectively. Note that some observations are excluded due to missing information in other variables, so in Table \ref{tab_sum}, $I_{it}$ does not have an exact zero mean and unit standard deviation.

With all these covariates, we specify our empirical model as follows:
\begin{equation}
y_{it}=\mathds{1}\{\alpha _{i}+\gamma y_{it-1}+\delta D_{2014,t}+\beta _{1}%
\text{Abov30}_{it}+\beta _{2}\text{Age}_{it}+\beta _{3}\text{GCC}%
_{it}+  I_{it}\geq \epsilon _{it}\}, \label{app_model}
\end{equation}
where $\epsilon _{it}$ and $\alpha _{i}$ are, respectively, the usual idiosyncratic error and unobserved heterogeneity in fixed effects panel data models.

In our analysis, we restrict the coefficient on $I_{it}$ to be 1, following the same convention for scale normalization as in Section \ref{sec_simulation}. This choice warrants justification; that is, household income enters the model with a significant positive coefficient, as implicitly required by Assumption A5'. We provide the following rationale for this based on common sense and evidence from exploratory regression. Practitioners seeking to justify normalizing the coefficient on $z_{it}$ to 1 can adopt similar argumentation method.

Firstly, in Australia, Medicare provides free access to public hospitals, and Medicare patients in public hospitals receive free treatment and accommodations. However, people can purchase private hospital insurance to cover faster and more premium services. Taking up or maintaining private insurance coverage requires a household to have sufficient disposable income. Besides, as income increases, the marginal utility of saving or other consumption may eventually become lower than that of enhanced private health care. In addition, Australia's tax system also gives considerable financial incentives for high-income households to buy private insurance. Therefore, common sense suggests that income should play a positive and significant role in private insurance purchases.

Secondly, we conduct a simple probit regression using one wave of the data and included income as the only regressor. The estimate is positive and significant, with $p$-value smaller than $10^{-15}$. This result holds true across all data waves, confirming our argument. This finding aligns with the results of more in-depth structural analyses in the health economics literature, such as \cite{cheng2014measuring}.

Note that Assumption A3 can be demanding. To address this concern, we relax Assumption A3 to Assumption A3' for a model closely resembling the current application and demonstrate identification under this relaxed condition, as detailed in Appendix \ref{appendix0_32}. Additionally, we justify our use of \( I_{it} \) as \( z_{it} \) under this modified condition in Appendix \ref{appendix0_33}, specifically by showing the kernel density and summary statistics of \( I_{it+1} - I_{it-1} \). For a more detailed discussion on Assumptions A2 and A3 and their roles in this empirical application, we refer interested readers to Appendix \ref{appendix0_3}.

\begin{table}[tpb]
\caption{Definition of Variables}
\label{tab_def}{\small \centering
\begin{tabularx}{1.0\textwidth}[c]{l|X}
\hline\hline
Variable & Description\\\hline
    Private hospital cover ($y$) & 1 if has private hospital cover for the whole year, otherwise 0 \\\hline
        Standardized income ($I$) & Standardized household's financial year disposable income (in the 2011 Australian dollar) \\\hline
 Above 30 years old (Above30) & 1 if age 30 years old or above, otherwise 0 \\\hline
  Age & Age \\\hline
Major city or greater capital city (GCC) & 1 if lives in a major city or greater capital city, otherwise 0 \\\hline
    Year 2014 ($D_{2014}$) & 1 if in financial year 2014, otherwise 0 \\\hline\hline
\end{tabularx}
}
\end{table}

\begin{table}[tpb]
\caption{Summary Statistics}
\label{tab_sum}%\small
\centering
\begin{tabular}{lccccc}
\hline\hline
Variable & $n\times T$ & Mean & Std.Dev. & Min & Max \\
\midrule $y_{it}$ & 65,603 & 0.527 & 0.499 & 0 & 1 \\
$I_{it}$ & 65,603 & -0.128 & 0.946 & -1.401 & 13.118 \\
$\text{Above30}_{it}$ & 65,603 & 0.855 & 0.353 & 0 & 1 \\
$\text{Age}_{it}$ & 65,603 & 50.091 & 17.484 & 17 & 99 \\
$\text{GCC}_{it}$ & 65,603 & 0.588 & 0.492 & 0 & 1 \\
$D_{2014,t}$ & 65,603 & 0.136 & 0.343 & 0 & 1 \\ \hline\hline
\end{tabular}%
\end{table}

Let $x_{it}:=(\text{Above30}_{it},\text{Age}_{it},\text{GCC}_{it})$ and $%
\beta :=(\beta _{1},\beta _{2},\beta _{3})$. We estimate $\theta :=(\delta
,\gamma ,\beta)$ through maximizing the objective function
\begin{equation}
Q_{n1}(\vartheta ):=\frac{1}{n}\sum_{i=1}^{n}%
\sum_{t=2}^{T_{i}-1}y_{it}(y_{it+1}-y_{it-1})\cdot \mathds{1}\{I_{it}>\sigma
_{n}\}\cdot \mathds{1}\{u_{it}(\vartheta )>0\},  \label{app_Qn1}
\end{equation}
where $u_{it}(\vartheta
):=r(y_{it}-y_{it-2})+d(D_{2014,t+1}-D_{2014,t-1})+(x_{it+1}-x_{it-1})^{%
\prime }b+(I_{it+1}-I_{it-1})$ and $\vartheta :=\left(d, r,b\right)$. Objective function (\ref{app_Qn1}) extends (\ref{eq:Qn1}) for longer and unbalanced panels in which the number of waves being observed varies across
individuals $i$ ($=:T_{i}$). We select the tuning parameter $\sigma_{n}$ using the same approach as described in Section \ref{sec_simulation}. It is important to note that the distribution of $I_{it}$ exhibits a significantly longer right tail compared to its left tail (skewness=3.78). Consequently, for sufficiently large $\sigma_{n}$, the objective function (\ref{app_Qn1}) has a much larger number of observations to use than the objective function
\begin{equation}
Q_{n2}(\vartheta ):=\frac{1}{n}\sum_{i=1}^{n}%
\sum_{t=2}^{T_{i}-1}(1-y_{it})(y_{it+1}-y_{it-1})\cdot \mathds{1}%
\{I_{it}<-\sigma _{n}\}\cdot \mathds{1}\{u_{it}(\vartheta )>0\}, \notag \label{app_Qn2}
\end{equation}
which extends (\ref{eq:Qn2}) for left tail observations. In fact, in this application, we set $\sigma_{n}=1.478$, which exceeds the absolute value of the lower bound of $I_{it}$ ($=1.401$ as shown in Table \ref{tab_sum}), thereby effectively using only objective function (\ref{app_Qn1}) and observations satisfying $\{I_{it}>\sigma_{n}\}$. Previous versions of this paper explored smaller values of $\sigma_n$ that allowed for the inclusion of left-tail observations (i.e., $\{I_{it}<-\sigma_{n}\}$), yielding similar results.

\begin{comment}
Let $x_{it}:=(\text{Above30}_{it},\text{Age}_{it},\text{GCC}_{it})$ and $%
\beta :=(\beta _{1},\beta _{2},\beta _{3})$. We estimate $\theta :=(\delta
,\gamma ,\beta ,\varpi )$ through maximizing the objective function
\begin{equation}
Q_{n}(\vartheta ):=Q_{n1}(\vartheta )+Q_{n2}(\vartheta ),  \label{app_Qn}
\end{equation}%
where
\begin{equation}
Q_{n1}(\vartheta ):=\frac{1}{n}\sum_{i=1}^{n}%
\sum_{t=2}^{T_{i}-1}y_{it}(y_{it+1}-y_{it-1})\cdot \mathds{1}\{I_{it}>\sigma
_{n}\}\cdot \mathds{1}\{u_{it}(\vartheta )>0\}  \notag  \label{app_Qn1}
\end{equation}%
with $u_{it}(\vartheta
):=r(y_{it}-y_{it-2})+d(D_{2014,t+1}-D_{2014,t-1})+(x_{it+1}-x_{it-1})^{%
\prime }b+w(I_{it+1}-I_{it-1}),$ $\vartheta :=\left( r,d,b,w\right) $ , and%
\begin{equation}
Q_{n2}(\vartheta ):=\frac{1}{n}\sum_{i=1}^{n}%
\sum_{t=2}^{T_{i}-1}(1-y_{it})(y_{it+1}-y_{it-1})\cdot \mathds{1}%
\{I_{it}<-\sigma _{n}\}\cdot \mathds{1}\{u_{it}(\vartheta )>0\},  \notag
\label{app_Qn2}
\end{equation}%
extending (\ref{eq:Qn1}) and (\ref{eq:Qn2}) respectively for longer and
unbalanced panels in which the number of waves being observed varies across
individuals $i$ ($=:T_{i}$). Since $\theta $ can only be identified up to
scale, we restrict the search of $\hat{\theta}_{n}$ on the unit sphere.
\end{comment}

By construction, procedure (\ref{app_Qn1}) only uses the subsample of
individuals who can be observed for at least four consecutive waves. After
dropping observations with missing values, our sample consists of 14,880
individuals satisfying this criterion. The panel is unbalanced with $3\leq
T_i\leq 9$ using the notation in previous sections. In total, we have 65,603
observations, among which about 7.36\% observations are ``switchers'' that
are useful for either ours or \citeauthor{honore-k}'s (\citeyear{honore-k})
estimators. As in Section \ref{sec_simulation}, we use $n^{*}$ to denote the number of ``switchers''.

We choose $\sigma _{n}=c\cdot \widehat{\text{std}(I_{it})}\sqrt{\log n^{*}/2.95}$ with $%
c=1.0$ and $1.1$ to implement our MS estimation and
report the results in Table \ref{tab_res}. We provide summary statistics for the sub-sample of switchers with $I_{it}>\sigma_n$ in Table \ref{tab_sum_eff} of Appendix \ref{appendix0_33}. We also conducted estimations using $\sigma_n$ with $c=0.5, 0.7,$ and $0.9$. While these results show patterns similar to Table \ref{tab_res}, they highlight the bias-variance trade-off inherent in choosing the tuning parameter, a common challenge in semiparametric methods. These additional results and their discussion are included in Appendix \ref{appendix0_33}.

In addition to the estimates of $\theta$, we also try calculating the 90\% and 95\% confidence intervals
(CIs) for $\theta $ using the $m$-out-of-$n$ bootstrapping. Here we sample $n$ individuals (clusters) to create the bootstrap sample. The main
difficulty in implementing this (or alternative sampling-based) method is
that Theorem \ref{T:limiting_dist} does not give an analytical convergence
rate for the estimator $\hat{\theta}$ due to the unknown tail probabilities
of $I_{it}$. We apply the method proposed in Remark 3 of \cite{LeePun2006}
to solve this problem; that is, assume $\hat{\theta}_n$ has convergence rate of $%
n^{\lambda }$ and obtain an estimate $\hat{\lambda}$ of $\lambda $ using a
double $m$-out-of-$n$ bootstrapping procedure with two bootstrap sample
sizes $m_{1}=n^{\rho _{1}}$ and $m_{2}=n^{\rho _{2}}$ for $\rho _{1},\rho
_{2}\in (0,1)$. The 90\% and 95\% CI reported in Table \ref{tab_res} are
calculated with $B=500$ bootstrap replications, $m=n^{7/8}$, and $\hat{%
\lambda}=0.309$ (obtained with $\rho _{1}=6/7$ and $\rho _{2}=7/8$).

\begin{table}[htbp]
\caption{Estimates of Preference Coefficients}
\label{tab_res}\centering
\begin{tabular}{clLrcrc}
\hline\hline
                   & Variable & \multicolumn{1}{c}{Estimate} &  \multicolumn{2}{c}{[90\% Conf.Int.]} &  \multicolumn{2}{c}{[95\% Conf.Int.]}  \\
    \midrule
        \multirow{5}[2]{*}{$c = 1.0$} & $y_{it-1}$ & 5.275^{\ast\ast} & 0.714 & 20.664 & 0.278 & 21.400 \\
          & $\textrm{Above30}_{it}$ & 5.190^{\ast\ast} & 0.319 & 20.286 & 0.016 & 21.465 \\
          & $\textrm{Age}_{it}$  & -0.465 & -11.384 & 7.298 & -11.953 & 9.039 \\
          & $\textrm{GCC}_{it}$ & -0.317 & -11.140 & 9.127 & -11.645 & 9.738 \\
          & $D_{2014,t}$ & -1.548 & -13.747 & 5.664 & -14.276 & 6.648 \\
\hline
\multirow{5}[2]{*}{$c = 1.1$} & $y_{it-1}$ & 5.140^{\ast\ast} & 1.008 & 20.042 & 0.370 & 20.927 \\
          & $\textrm{Above30}_{it}$ & 5.227^{\ast\ast} & 1.000 & 19.524 & 0.455 & 20.581 \\
          & $\textrm{Age}_{it}$ & -0.505 & -10.909 & 7.203 & -11.470 & 8.683 \\
          & $\textrm{GCC}_{it}$   & -1.770 & -13.725 & 4.950 & -14.299 & 6.573 \\
          & $D_{2014,t}$ & -1.578 & -13.274 & 5.303 & -13.598 & 6.582 \\
\hline\hline
    \end{tabular}
\end{table}

We can see from Table \ref{tab_res} that the estimation results are similar
for the two tuning parameters. Therefore, the following discussion on the empirical
results will be mainly based on the estimates obtained with $c=1.0$. The insignificant coefficient indicates that living in GCC may not affect people's
willingness to buy PHC.  The significant positive coefficient
on $y_{it-1}$ demonstrates the strong state persistence of PHC, which explains
why we can only observe a small percentage of switchers in the data.
Surprisingly, people's decision to buy PHC is hardly influenced by age. For
the two policy variables, the large positive coefficient on $\text{Above30}%
_{it}$ confirms that the LHC policy is a strong incentive for people to buy
PHC, while the change in MLS income threshold does not exhibit a strong
impact represented by the coefficient on $D_{2014,t}$. An intuitive
explanation for the latter is that although MLS promoted PHC purchases when
it was introduced in 1997--1998, the subsequent adjustments of its income
threshold only affected a small group of people whose incomes were near the
threshold.

We end this section with some remarks. First, our approach is more suitable
for data with a relatively large proportion of ``switchers'' which make up
the effective sample for the estimator. Second, to implement our method, the
model should have a continuous covariate with large support and ideally weak
dependence on other included covariates. Third, in the
absence of knowledge (or at least a good estimate) of the free-varying
variable's tail probabilities, the asymptotics of our estimator derived in
Section \ref{Sec:asymp_dist} cannot provide a ``rule of thumb'' for choosing
optimal tuning parameter $\sigma_{n}$. Perhaps a practical way is to try
different $\sigma_{n}$'s, use \citeauthor{LeePun2006}'s (%
\citeyear{LeePun2006}) proposed method (or other similar methods) to
estimate the convergence rates, and pick the $\sigma_{n}$ that gives the
fastest (estimated) rate. The last remark is for the $m$-out-of-$n$
bootstrap inference. The choice of the bootstrap sample size $m$ is the key
issue. Remark 1 of \cite{LeePun2006} provides some existing data-driven
methods. However, none of them can confirm an (asymptotically) optimal
choice of $m$ in nonstandard M-estimation like ours. Theoretical research on
this topic is necessary, but this is beyond the scope of the current paper.

\section{Conclusions}\label{Sec:Conlcusions}
This paper proposes new identification and estimation methods for a class of distribution-free dynamic panel data binary choice models that is first studied in \cite{honore-k}. We show that in the presence of a free-varying continuous covariate with unbounded support, an ``identification at infinity'' strategy in the spirit of \cite{chamberlain1986asymptotic} enables the point identification of the model coefficients without the need of element-by-element matching of covariates over time, in contrast to the method proposed in \cite{honore-k}. This property makes our methods more practical for models with many covariates or important covariates whose support may not overlap over time. Our identification arguments motivate a conditional maximum score estimator that is proven to be consistent and with the convergence rate independent of the model dimension. However, the asymptotic distribution of the proposed estimator is non-Gaussian, in line with well-established literature on cube-root asymptotics. We suggest valid bootstrap methods for conducting statistical inference. The results of a Monte Carlo study demonstrate that our estimator performs adequately in finite samples. Lastly, we use the HILDA data to investigate the demand for private hospital insurance in Australia.

This paper leaves some open questions for future research. For instance, although we suggest several theoretically feasible bootstrap inference methods in Section \ref{Sec:Asymptotics}, their asymptotic validity, finite-sample performance, and implementability (e.g., choice of tuning parameters) are not examined. Alternatively, one can also investigate whether it is possible to achieve a faster rate of convergence and obtain an asymptotically normal distribution by combining \citeauthor{horowitz}'s (\citeyear{horowitz}) and \citeauthor{andrews}'s (\citeyear{andrews}) methods to smooth the sample objective function.

\nocite{HILDA,HILDA2}

\bibliography{references}

\begin{thebibliography}{55}
\newcommand{\enquote}[1]{``#1''}
\expandafter\ifx\csname natexlab\endcsname\relax\def\natexlab#1{#1}\fi

\bibitem[\protect\citeauthoryear{Abrevaya and Huang}{Abrevaya and Huang}{2005}]{AbrevayaHuang2005}
\textsc{Abrevaya, J. and J.~Huang} (2005): \enquote{On the bootstrap of the maximum score estimator,} \emph{Econometrica}, 73, 1175--1204.

\bibitem[\protect\citeauthoryear{Andrews and Schafgans}{Andrews and Schafgans}{1998}]{andrews}
\textsc{Andrews, D. and M.~Schafgans} (1998): \enquote{Semiparametric Estimation of the Intercept of a Sample Selection Model,} \emph{The Review of Economic Studies}, 65, 497--517.

\bibitem[\protect\citeauthoryear{Arellano and Honor{\'e}}{Arellano and Honor{\'e}}{2001}]{ArellanoHonore2001}
\textsc{Arellano, M. and B.~Honor{\'e}} (2001): \enquote{Panel data models: some recent developments,} Elsevier, vol.~5 of \emph{Handbook of Econometrics}, 3229--3296.

\bibitem[\protect\citeauthoryear{Aristodemou}{Aristodemou}{2021}]{arist}
\textsc{Aristodemou, E.} (2021): \enquote{Semiparametric Identification in Panel Data Discrete Response Models,} \emph{Journal of Econometrics}, 220, 253--271.

\bibitem[\protect\citeauthoryear{Bajari, Hong, and Ryan}{Bajari et~al.}{2010}]{bajari2010identification}
\textsc{Bajari, P., H.~Hong, and S.~P. Ryan} (2010): \enquote{Identification and estimation of a discrete game of complete information,} \emph{Econometrica}, 78, 1529--1568.

\bibitem[\protect\citeauthoryear{Cattaneo, Jansson, and Nagasawa}{Cattaneo et~al.}{2020}]{cattaneo2020bootstrap}
\textsc{Cattaneo, M.~D., M.~Jansson, and K.~Nagasawa} (2020): \enquote{Bootstrap-Based Inference for Cube Root Asymptotics,} \emph{Econometrica}, 88, 2203--2219.

\bibitem[\protect\citeauthoryear{Chamberlain}{Chamberlain}{1986}]{chamberlain1986asymptotic}
\textsc{Chamberlain, G.} (1986): \enquote{Asymptotic efficiency in semi-parametric models with censoring,} \emph{Journal of Econometrics}, 32, 189--218.

\bibitem[\protect\citeauthoryear{Charlier}{Charlier}{1997}]{charlier1997limited}
\textsc{Charlier, E.} (1997): \enquote{Limited dependent variable models for panel data,} Ph.D. thesis, Tilburg University, School of Economics and Management.

\bibitem[\protect\citeauthoryear{Chen, Khan, and Tang}{Chen et~al.}{2019}]{ChenEtal2019}
\textsc{Chen, S., S.~Khan, and X.~Tang} (2019): \enquote{Exclusion Restrictions in Dynamic Binary Choice Panel Data Models: Comment on “Semiparametric Binary Choice Panel Data Models Without Strictly Exogenous Regressors,} \emph{Econometrica}, 87, 1781--1785.

\bibitem[\protect\citeauthoryear{Cheng}{Cheng}{2014}]{cheng2014measuring}
\textsc{Cheng, T.~C.} (2014): \enquote{Measuring the effects of reducing subsidies for private insurance on public expenditure for health care,} \emph{Journal of Health Economics}, 33, 159--179.

\bibitem[\protect\citeauthoryear{Dano}{Dano}{2023}]{dano2023transition}
\textsc{Dano, K.} (2023): \enquote{Transition probabilities and identifying moments in dynamic fixed effects logit models,} \emph{arXiv preprint arXiv:2303.00083}.

\bibitem[\protect\citeauthoryear{Delgado, Rodr\'{i}guez-Poo, and Wolf}{Delgado et~al.}{2001}]{DelgadoEtal2001}
\textsc{Delgado, M., J.~Rodr\'{i}guez-Poo, and M.~Wolf} (2001): \enquote{Subsampling inference in cube root asymptotics with an application to Manski's maximum score estimator,} \emph{Economics Letters}, 73, 241--250.

\bibitem[\protect\citeauthoryear{{Department of Social Services, Melbourne Institute of Applied Economic and Social Research}}{{Department of Social Services, Melbourne Institute of Applied Economic and Social Research}}{2021}]{HILDA}
\textsc{{Department of Social Services, Melbourne Institute of Applied Economic and Social Research}} (2021): \enquote{The Household, Income and Labour Dynamics in Australia (HILDA) Survey, GENERAL RELEASE 20 (Waves 1-20),} \emph{doi:10.26193/YP7MNU, ADA Dataverse, V3}.

\bibitem[\protect\citeauthoryear{Dobronyi, Gu, and Kim}{Dobronyi et~al.}{2021}]{dobronyi-gu-kim}
\textsc{Dobronyi, C., J.~Gu, and K.~I. Kim} (2021): \enquote{Identification of dynamic panel logit models with fixed effects,} \emph{arXiv preprint arXiv:2104.04590}.

\bibitem[\protect\citeauthoryear{Fox}{Fox}{2007}]{Fox2007}
\textsc{Fox, J.~T.} (2007): \enquote{Semiparametric estimation of multinomial discrete-choice models using a subset of choices,} \emph{The RAND Journal of Economics}, 38, 1002--1019.

\bibitem[\protect\citeauthoryear{Giné and Nickl}{Giné and Nickl}{2015}]{gine_nickl_2015}
\textsc{Giné, E. and R.~Nickl} (2015): \emph{Mathematical Foundations of Infinite-Dimensional Statistical Models}, Cambridge Series in Statistical and Probabilistic Mathematics, Cambridge University Press.

\bibitem[\protect\citeauthoryear{Heckman}{Heckman}{1990}]{heckman-90}
\textsc{Heckman, J.} (1990): \enquote{Varieties of Selection Bias,} \emph{American Economic Review}, 80, 313--318.

\bibitem[\protect\citeauthoryear{Heckman}{Heckman}{1981{\natexlab{a}}}]{Heckman1981b}
\textsc{Heckman, J.~J.} (1981{\natexlab{a}}): \enquote{The incidental parameters problem and the problem of initial condition in estimating a discrete time-discrete data stochastic process,} in \emph{Structural Analysis of Discrete Data and Econometric Applications}, ed. by C.~F. Manski and D.~L. McFadden, MIT Press.

\bibitem[\protect\citeauthoryear{Heckman}{Heckman}{1981{\natexlab{b}}}]{Heckman1981a}
---\hspace{-.1pt}---\hspace{-.1pt}--- (1981{\natexlab{b}}): \enquote{Statistical models for discrete panel data,} in \emph{Structural Analysis of Discrete Data and Econometric Applications}, ed. by C.~F. Manski and D.~L. McFadden, MIT Press.

\bibitem[\protect\citeauthoryear{Hong and Li}{Hong and Li}{2020}]{HongLi2020}
\textsc{Hong, H. and J.~Li} (2020): \enquote{The numerical bootstrap,} \emph{The Annals of Statistics}, 48, 397--412.

\bibitem[\protect\citeauthoryear{Honor\'{e} and Kyriazidou}{Honor\'{e} and Kyriazidou}{2000}]{honore-k}
\textsc{Honor\'{e}, B. and E.~Kyriazidou} (2000): \enquote{Panel Data Discrete Choice Models with Lagged Dependent Variables,} \emph{Econometrica}, 68, 839--874.

\bibitem[\protect\citeauthoryear{Honor\'{e} and Lewbel}{Honor\'{e} and Lewbel}{2002}]{hl}
\textsc{Honor\'{e}, B. and A.~Lewbel} (2002): \enquote{Semiparametric Binary Choice Panel Data Models without Strictly Exogeneous Regressors,} \emph{Econometrica}, 70, 2053--2063.

\bibitem[\protect\citeauthoryear{Honor{\'e} and De~Paula}{Honor{\'e} and De~Paula}{2021}]{honore2021identification}
\textsc{Honor{\'e}, B.~E. and {\'A}.~De~Paula} (2021): \enquote{Identification in simple binary outcome panel data models,} \emph{The Econometrics Journal}, 24, C78--C93.

\bibitem[\protect\citeauthoryear{Honor{\'e} and Weidner}{Honor{\'e} and Weidner}{2020}]{honore-w}
\textsc{Honor{\'e}, B.~E. and M.~Weidner} (2020): \enquote{Moment conditions for dynamic panel logit models with fixed effects,} \emph{arXiv preprint arXiv:2005.05942}.

\bibitem[\protect\citeauthoryear{Horowitz}{Horowitz}{1992}]{horowitz}
\textsc{Horowitz, J.} (1992): \enquote{A Smoothed Maximum Score Estimator for the Binary Response Model,} \emph{Econometrica}, 60, 505--531.

\bibitem[\protect\citeauthoryear{Hsiao}{Hsiao}{2022}]{Hsiaobook}
\textsc{Hsiao, C.} (2022): \emph{Analysis of Panel Data}, Econometric Society Monographs, Cambridge University Press, 4 ed.

\bibitem[\protect\citeauthoryear{Khan, Ouyang, and Tamer}{Khan et~al.}{2021}]{khan2021inference}
\textsc{Khan, S., F.~Ouyang, and E.~Tamer} (2021): \enquote{Inference on semiparametric multinomial response models,} \emph{Quantitative Economics}, 12, 743--777.

\bibitem[\protect\citeauthoryear{Khan, Ponomareva, and Tamer}{Khan et~al.}{2023}]{KhanEtal2020}
\textsc{Khan, S., M.~Ponomareva, and E.~Tamer} (2023): \enquote{Identification of dynamic binary response models,} \emph{Journal of Econometrics}, 237, 1--24.

\bibitem[\protect\citeauthoryear{Khan and Tamer}{Khan and Tamer}{2010}]{khan2010irregular}
\textsc{Khan, S. and E.~Tamer} (2010): \enquote{Irregular identification, support conditions, and inverse weight estimation,} \emph{Econometrica}, 78, 2021--2042.

\bibitem[\protect\citeauthoryear{Kim and Pollard}{Kim and Pollard}{1990}]{KimPollard1990}
\textsc{Kim, J. and D.~Pollard} (1990): \enquote{Cube root asymptotics,} \emph{The Annals of Statistics}, 18, 191--219.

\bibitem[\protect\citeauthoryear{Kitazawa}{Kitazawa}{2022}]{kitazawa2022transformations}
\textsc{Kitazawa, Y.} (2022): \enquote{Transformations and moment conditions for dynamic fixed effects logit models,} \emph{Journal of Econometrics}, 229, 350--362.

\bibitem[\protect\citeauthoryear{Kosorok}{Kosorok}{2008}]{kosorok2008}
\textsc{Kosorok, M.~R.} (2008): \emph{Introduction to Empirical Processes and Semiparametric Inference}, Springer.

\bibitem[\protect\citeauthoryear{Kyriazidou}{Kyriazidou}{1997}]{kyriazidou1997estimation}
\textsc{Kyriazidou, E.} (1997): \enquote{Estimation of a Panel Data Sample Selection Model,} \emph{Econometrica}, 65, 1335--1364.

\bibitem[\protect\citeauthoryear{Lee and Pun}{Lee and Pun}{2006}]{LeePun2006}
\textsc{Lee, S. M.~S. and M.~C. Pun} (2006): \enquote{On $m$ out of $n$ bootstrapping for nonstandard M-estimation with nuisance parameters,} \emph{Journal of American Statistical Association}, 101, 1185--1197.

\bibitem[\protect\citeauthoryear{Manski}{Manski}{1975}]{manski-75}
\textsc{Manski, C.} (1975): \enquote{The Maximum Score Estimation of the Stochastic Utility Model of Choice,} \emph{Econometrica}, 3, 205--228.

\bibitem[\protect\citeauthoryear{Manski}{Manski}{1985}]{manski-85}
---\hspace{-.1pt}---\hspace{-.1pt}--- (1985): \enquote{Semiparametric Analysis of Discrete Response: Asymptotic Properties of the Maximum Score Estimator,} \emph{Journal of Econometrics}, 27, 313--333.

\bibitem[\protect\citeauthoryear{Manski}{Manski}{1987}]{manski-87}
---\hspace{-.1pt}---\hspace{-.1pt}--- (1987): \enquote{Semiparametric Analysis of Random Effects Linear Models from Binary Panel Data,} \emph{Econometrica}, 55, 357--362.

\bibitem[\protect\citeauthoryear{Mullen, Ardia, Gil, Windover, and Cline}{Mullen et~al.}{2011}]{mullen2011deoptim}
\textsc{Mullen, K., D.~Ardia, D.~L. Gil, D.~Windover, and J.~Cline} (2011): \enquote{DEoptim: An R package for global optimization by differential evolution,} \emph{Journal of Statistical Software}, 40, 1--26.

\bibitem[\protect\citeauthoryear{Newey and McFadden}{Newey and McFadden}{1994}]{NeweyMcFadden1994}
\textsc{Newey, W.~K. and D.~McFadden} (1994): \enquote{Large sample estimation and hypothesis testing,} \emph{Handbook of Econometrics}, 4, 2111--2245.

\bibitem[\protect\citeauthoryear{Ouyang and Yang}{Ouyang and Yang}{2024}]{OuyangYang2024binary}
\textsc{Ouyang, F. and T.~T. Yang} (2024): \enquote{Semiparametric Estimation of Dynamic Binary Choice Panel Data Models,} \emph{Econometric Theory}, 1--40, published online, doi: \url{https://doi.org/10.1017/S0266466624000057}.

\bibitem[\protect\citeauthoryear{Ouyang, Yang, and Zhang}{Ouyang et~al.}{2020}]{ouyang2020semiparametric}
\textsc{Ouyang, F., T.~T. Yang, and H.~Zhang} (2020): \enquote{Semiparametric identification and estimation of discrete choice models for bundles,} \emph{Economics Letters}, 193, 109321.

\bibitem[\protect\citeauthoryear{Pakes and Pollard}{Pakes and Pollard}{1989}]{PakesPollard1989}
\textsc{Pakes, A. and D.~Pollard} (1989): \enquote{Simulation and the asymptotics of optimization estimators,} \emph{Econometrica}, 57, 1027--1057.

\bibitem[\protect\citeauthoryear{Pollard}{Pollard}{1984}]{Pollard1984}
\textsc{Pollard, D.} (1984): \emph{Convergence of Stochastic Processes}, Springer Series in Statistics, Springer-Verlag.

\bibitem[\protect\citeauthoryear{Pollard}{Pollard}{1989}]{pollard1989}
---\hspace{-.1pt}---\hspace{-.1pt}--- (1989): \enquote{{Asymptotics via Empirical Processes},} \emph{Statistical Science}, 4, 341--354.

\bibitem[\protect\citeauthoryear{Seo and Otsu}{Seo and Otsu}{2018}]{SeoOtsu2018}
\textsc{Seo, M.~H. and T.~Otsu} (2018): \enquote{Local M-estimation with discontinuous criterion for dependent and limited observations,} \emph{The Annals of Statistics}, 46, 344--369.

\bibitem[\protect\citeauthoryear{Shi, Shum, and Song}{Shi et~al.}{2018}]{ShiEtal2018}
\textsc{Shi, X., M.~Shum, and W.~Song} (2018): \enquote{Estimating semiparametric panel multinomial choice models using cyclic monotonicity,} \emph{Econometrica}, 86, 737--761.

\bibitem[\protect\citeauthoryear{Storn and Price}{Storn and Price}{1997}]{storn1997differential}
\textsc{Storn, R. and K.~Price} (1997): \enquote{Differential evolution-a simple and efficient heuristic for global optimization over continuous spaces,} \emph{Journal of Global Optimization}, 11, 341--359.

\bibitem[\protect\citeauthoryear{Tamer}{Tamer}{2003}]{tamer2003incomplete}
\textsc{Tamer, E.} (2003): \enquote{Incomplete simultaneous discrete response model with multiple equilibria,} \emph{The Review of Economic Studies}, 70, 147--165.

\bibitem[\protect\citeauthoryear{{v}an~der Vaart and Wellner}{{v}an~der Vaart and Wellner}{1996}]{vdVaartWellner1996}
\textsc{{v}an~der Vaart, A. and J.~Wellner} (1996): \emph{Weak Convergence and Empirical Processes}, Springer, New York.

\bibitem[\protect\citeauthoryear{Vershynin}{Vershynin}{2018}]{VershyninHDP}
\textsc{Vershynin, R.} (2018): \emph{High-Dimensional Probability: An Introduction with Applications in Data Science}, vol.~47, Cambridge University Press.

\bibitem[\protect\citeauthoryear{Wainwright}{Wainwright}{2019}]{WainwrightHDS}
\textsc{Wainwright, M.~J.} (2019): \emph{High-Dimensional Statistics: A Non-Asymptotic Viewpoint}, vol.~48, Cambridge University Press.

\bibitem[\protect\citeauthoryear{Wan and Xu}{Wan and Xu}{2014}]{wan2014semiparametric}
\textsc{Wan, Y. and H.~Xu} (2014): \enquote{Semiparametric identification of binary decision games of incomplete information with correlated private signals,} \emph{Journal of Econometrics}, 182, 235--246.

\bibitem[\protect\citeauthoryear{Watson and Wooden}{Watson and Wooden}{2012}]{HILDA2}
\textsc{Watson, N. and M.~Wooden} (2012): \enquote{The HILDA Survey: A Case Study in the Design and Development of a Successful Household Panel Study,} \emph{Longitudinal and Life Course Studies}, 3, 369--381.

\bibitem[\protect\citeauthoryear{Williams}{Williams}{2019}]{Williams2019}
\textsc{Williams, B.} (2019): \enquote{Nonparametric identification of discrete choice models with lagged dependent variables,} \emph{Journal of Econometrics}.

\bibitem[\protect\citeauthoryear{Yan and Yoo}{Yan and Yoo}{2019}]{YanYoo2019}
\textsc{Yan, J. and H.~I. Yoo} (2019): \enquote{Semiparametric estimation of the random utility model with rank-ordered choice data,} \emph{Journal of Econometrics}, 211, 414--438.

\end{thebibliography}

\appendix

\part*{\centering \protect\Large Supplementary Appendix}\label{appendix}
This supplementary appendix is organized into four sections. In Appendix \ref{appendix0}, we present supplementary discussions that are omitted from the main text due to space limitations. Specifically, We compares our method with the methods proposed in \cite{honore-k} (HK) and \cite{OuyangYang2024binary} (OY) in Appendices \ref{appendix0_1} and \ref{appendix0_2}, respectively. We also provide further discussions on on Assumption A2, Assumption A3, and their role in the empirical application in Appendices \ref{appendix0_31}, \ref{appendix0_32}, and \ref{appendix0_33}, respectively. Moreover, Appendix \ref{appendix0_34} offers a set of sufficient conditions for Assumption B3.  Appendix \ref{appendixA} proves Theorems \ref{T:identify} (identification) and \ref{T:consistency} (consistency). Appendix \ref{appendixB} derives the
convergence rate and asymptotic distribution results summarized in Theorem \ref{T:limiting_dist}. Results of supplementary simulation studies are collected in Appendix \ref{appendixC}.

As discussed in the paper, all proofs presented in
these appendices are for the model with $T=3$ and the estimator
\begin{equation}
\hat{\theta}_{n}:=\arg\max_{\vartheta\in\Theta}Q_{n1}(\vartheta)  \notag
\end{equation}
with $Q_{n1}(\vartheta)$ defined in (\ref{eq:Qn1}) (corresponding to
identification equation (\ref{eq:iden_ineq1})).\footnote{In what follows, we may scale $Q_{n1}(\vartheta)$ with different factors
such as $P\left(z_{2}>\sigma_{n}\right)$ and $P\left(z_{2}>\sigma_{n}|y_2=1%
\right)$ in the proofs of different results to ease the derivation. Scaling $%
Q_{n1}(\vartheta)$ with these factors does not affect the value of $\hat{%
\theta}_n$.} The estimators obtained with longer panels or from maximizing
objective functions (\ref{eq:Qn2}) or (\ref{eq:Qn}) are of the same
structure, and hence the generalization is straightforward. In what follows,
we will use compact notation $\xi_{ts}$ for generic vectors $\xi_t$ and $%
\xi_s$ to denote $\xi_t-\xi_s$.

\section{Some Supplementary Discussions}\label{appendix0}

\subsection{Comparing with HK}\label{appendix0_1}
It is clear that the derivation of (\ref{eq:prob_ratio_1}) relies
on the assumption that the marginal distribution of $z_{2}$, conditional
on $\alpha$, $y_{0}$, and all other covariates, has an unbounded
support. HK identify model (\ref{model}) under the restriction
that $(x_{32},z_{32})$ has a support in some open neighborhood
of zero, conditional on $\alpha$, $y_{0}$, and all other covariates.\footnote{The $x_{t}$ in \citet{honore-k} corresponds to our $(x_{t},z_{t})$.}
Specifically, HK consider the following choice histories for the
model with $T=3$:
\begin{align*}
A & =\{y_{0}=d_{0},y_{1}=0,y_{2}=1,y_{3}=d_{3}\},\\
B & =\{y_{0}=d_{0},y_{1}=1,y_{2}=0,y_{3}=d_{3}\},
\end{align*}
where $d_{0},d_{3}\in\{0,1\}$. Applying arguments similar to those
used for obtaining (\ref{eq:prob_ratio_1}) yields
\begin{align}
&\frac{P(A|\alpha,y_{0}=d_{0},x^{T},z^{T},x_{2}=x_{3},z_{2}=z_{3})}{P(B|\alpha,y_{0}=d_{0},x^{T},z^{T},x_{2}=x_{3},z_{2}=z_{3})} \nonumber \\
= & \frac{1-F_{\epsilon|\alpha}(\alpha+\gamma d_{0}+x_{1}^{\prime}\beta+\varpi z_{1})}{1-F_{\epsilon|\alpha}(\alpha+\gamma d_{3}+x_{2}^{\prime}\beta+\varpi z_{2})}  \times\frac{F_{\epsilon|\alpha}(\alpha+\gamma d_{3}+x_{2}^{\prime}\beta+\varpi z_{2})}{F_{\epsilon|\alpha}(\alpha+\gamma d_{0}+x_{1}^{\prime}\beta+\varpi z_{1})}.
\end{align}
Then, this expression implies
\begin{align}
 & \text{sgn}\left\{ P(A|\alpha,y_{0}=d_{0},x^{T},z^{T},x_{2}=x_{3},z_{2}=z_{3})-P(B|\alpha,y_{0}=d_{0},x^{T},z^{T},x_{2}=x_{3},z_{2}=z_{3})\right\} \nonumber \\
= & \text{sgn}\left\{ \gamma(d_{3}-d_{0})+x_{21}^{\prime}\beta+\varpi z_{21}\right\},\label{eq:HK}
\end{align}
based on which the point identification can be established. Through comparing (\ref{eq:HK}) for HK and (\ref{eq:iden_ineq1}) for our estimator,  we see that both approaches impose restrictive conditions on the observed covariates. To achieve point identification, our method requires Assumption A3, and HK require that there exists at least one relevant, continuous element of $(x_{21},z_{21})$ can vary freely on a large support conditional on $\{x_{2}=x_{3},z_{2}=z_{3}\}$, $\alpha$, and all other covariates. Note that conditioning on
$\{x_{2}=x_{3},z_{2}=z_{3}\}$ in (\ref{eq:HK}) essentially excludes covariates with non-overlapping support over time (e.g., time trend or time dummies for controlling fixed time effects) and may suffer from the curse of dimensionality when there are many relevant covariates. Our method avoids these
two issues because we do not match $(x_{t},z_{t})$ over time.

We end this discussion by summarizing the advantages and disadvantages of our method, compared with
HK.

\noindent $\mathbf{Pros}$
\begin{enumerate}
\item Our method allows for general forms of time effects, while
HK do not.
\item Our estimator can be applied to models with many covariates and has a fixed rate of convergence,
while the applicability and the rate of convergence of HK's estimator deteriorate as the number of covariates grows.
\end{enumerate}
$\mathbf{Cons}$
\begin{enumerate}
\item Our method lacks a practical guide for choosing the tuning parameter
$\sigma_{n}$, which relies on unknown tail distribution of $z_2$.
This is a common feature for all estimators built on the ``identification at infinity''.
\item The convergence rate of our estimator is generally unknown. The rate of convergence of HK's estimator is determined by the choice of tuning parameter and the number of continuous covariates.
\end{enumerate}

\subsection{Comparing with OY}\label{appendix0_2}
To point identify model (\ref{model}), OY require $T\geq4$ and the process
of $\{x_{t},z_{t}\}$ to be serially independent and strictly stationary,
conditional on $\alpha$, in addition to the conditions stated in HK’s Theorem 4. The identification proceeds in two steps. First, $(\beta,\varpi)$
can be identified based on the following identification equation:
\begin{align}
 & \textrm{sgn}\{P(y_{3}=1|x_{1},z_1,x_{3},z_3,y_{0}=y_{2}=y_{4},\alpha)-P(y_{1}=1|x_{1},z_1,x_{3},z_3,y_{0}=y_{2}=y_{4},\alpha)\}\nonumber \\
= & \textrm{sgn}\{x_{31}'\beta+\varpi z_{31}\}.\label{eq:OY}
\end{align}
After identifying $(\beta,\varpi)$ from (\ref{eq:OY}), one can further use (\ref{eq:HK}), with
$\{x_{2}=x_{3},z_{2}=z_{3}\}$ replaced by $\{x_{3}'\beta+\varpi z_{3}=x_{2}'\beta+\varpi z_{2}\}$,
to identify $\gamma$. OY's approach also avoids the curse of dimensionality that HK's method faces, as it only needs to match $y_{t}$ and the index $x_{t}'\beta+\varpi z_{t}$ over time to identify $(\beta,\varpi)$ and $\gamma$, respectively. However, unlike our method, OY assume $\left(x_{t},z_{t}\right)$ to be serially independent conditional on $\alpha$, which rules out general time effects. Note that if we assume $(x_{t},z_{t})$ are i.i.d. conditional on $\alpha$ over $t$, Assumptions A2 and A3 can be significantly simplified to the statement that ``\textit{$z_{2}$ has unbounded support conditional on $(\alpha,x_{2})$}''. Specifically, Assumption A3 can be met by directly setting $\xi_{31} = z_{31}$.

We conclude this discussion by summarizing the advantages of our method over OY.
\noindent $\mathbf{Pros}$
\begin{enumerate}
\item Our method allows for general forms of time effects, while OY do not.
\item Our method needs $T=3$, as a minimum, while OY's identification requires $T \geq 4$.
\item Our method allows for general forms of serial dependence of the covariatess, while OY impose specific restrictions on their serial dependence, as stated in their Assumptions SI and SD.
\end{enumerate}
The disadvantages of our method compared with OY are the same as those listed
in the last section. We omit them to avoid repetition.

\subsection{Discussions on Assumptions A and B}\label{appendix0_3}

\subsubsection{On the Role of Assumption A2 in Identification}\label{appendix0_31}
Assumption A2 plays a pivotal role in applying the identification
at infinity strategy. From the derivation of equation (\ref{eq:iden_ineq1}), we can
see that this assumption guarantees that $z_{2}$ can be large enough
(assuming $d_{2}=1$ and $\varpi>0$) for the direct effects of $y_{1}$
on $y_{2}$ in events C and D to be almost negligible, i.e., $F_{\epsilon|\alpha}(\alpha+\gamma+x_{2}'\beta+\varpi z_{2})\approx F_{\epsilon|\alpha}(\alpha+x_{2}'\beta+\varpi z_{2})$
($\approx1$). It is worth noting that Assumption A2 is a sufficient
but not necessary condition. For example, if $F_{\epsilon|\alpha}(\cdot)$
has bounded support, the two indexes in it need only be close enough
to the boundary of the support, and $z_{2}$ may not have to be very
large. However, we can see from this example that relaxing Assumption
A2 may require additional restrictions on the parameter space $\Theta$,
the support of $x_{2}$, and the distribution of $(\epsilon,\alpha)$.
This paper deliberately avoids imposing distributional restrictions
on unobservables. The price for this generality is the need for strong
assumptions on observed variables, i.e., $z_{2}$, ensuring the applicability
of our method in the most general cases.

\subsubsection{On the Role of Assumption A3 in Identification}\label{appendix0_32}
Assumption A3 serves as the ``full (large) support'' condition commonly employed by maximum score estimators. Specifically, Assumptions A1 and A2 establish the identification equation (\ref{eq:iden_ineq1}), indicating that the sign of $\gamma d_{20}+x_{31}^{\prime}\beta+\varpi z_{31}$ determines the rank order of the conditional probabilities of events C and D. This implies that the true parameter $\theta$ maximizes the population criterion defined in (\ref{eq:pop_obj}). However, any other parameter $\vartheta\neq\theta$ can also be the maximizer of (\ref{eq:pop_obj}) as long as it satisfies (\ref{eq:iden_ineq1}) almost surely. Assumption A3 rules out such a possibility, ensuring that for any $\vartheta\neq\theta$, $rd_{20}+x_{31}'b+wz_{31}$ and $\gamma d_{20}+x_{31}'\beta+\varpi z_{31}$ can have opposite signs with positive probability, as seen in our proof of Theorem \ref{T:identify}. This guarantees that any $\vartheta\neq\theta$ will have a positive probability of violating the identification equation (\ref{eq:iden_ineq1}), establishing the point identification of $\theta$. Therefore, as argued in \cite{manski-87}, this condition ``prevents a local failure of the (point) identification.''

It is essential to note that the $\xi_{31}$ in Assumption A3 does not necessarily have to be $z_{31}$. Any element in $(x_{31},z_{31})$ that can vary freely on a large support conditioning on other covariates, $\alpha$,
and $\{z_{2}>\sigma\}\cup\{z_{2}<-\sigma\}$, suffices for the point identification of $\theta$. However, in situations where no such continuous covariate exists, and all covariates except $z_t$ are, for example, binary, we have no choice but to require $z_t$ to satisfy both Assumptions A1 and A2. An immediate example for such a case is the empirical application presented in Section \ref{sec_application}.

In this application, other than the coefficient before $y_{t-1}$, we only have the coefficient before a dummy variable to be statistically significant. By ignoring all the insignificant covariates in the model, we are left with a univariate, binary regressor $x_t$, and $z_t$ is the only continuous variable (then, $\xi_{31}=z_{31}$ in Assumption A3). Thus, the model reduces to
\begin{equation}
y_{t}=\mathds{1}\left\{ \gamma y_{t-1}+\beta x_{t}+\varpi z_{t}\geq\varepsilon_{it}\right\}. \label{eq:A3_1}
\end{equation}
In what follows, we assume (\ref{eq:A3_1}) is the true model, and use it as an illustrating example to show how we can relax the full support condition imposed in Assumption A3. In fact, if we have prior knowledge about $x_t$ and the parameter space $\Theta$, like in (\ref{eq:A3_1}), we do not need the support of $z_{31}$ to be the whole
real line. Specifically, we can achieve the point identification of model (\ref{eq:A3_1}) under Assumption A3' below, which is weaker than the Assumption A3 stated in the main text for a general model.

\bigskip

\noindent \textbf{Assumption A}3':  The distribution of $z_{31}$, conditional on $(\alpha,x_{31})$
and $\{z_{2}>\sigma\}\cup\{z_{2}<-\sigma\}$ as $\sigma\rightarrow+\infty$, has support that contains an interval $\left[-K,L\right]$,
such that $K,L\geq0$ and $\max\left\{ K,L\right\} \geq\frac{\max\left\{ |\beta|,|\gamma|\right\} }{|\varpi|}+\delta$
for some small positive $\delta$.

\bigskip

We show the identification under Assumptions A1, A2, A3', A4, and
A5 in the following.
\begin{proof}[Proof of Theorem \ref{T:identify} for (\ref{eq:A3_1})]
Without loss of generality, $K\geq\frac{\max\left\{ \beta,\gamma\right\} }{\varpi}+\delta$,
and $\beta,\gamma,\varpi>0$; other cases are similar. Note that $d_{20}$
and $x_{31}$ can take only three values, $-1,0,\text{\text{and} }1$,
and all cases happen with positive probabilities. For any $r/w\neq\gamma/\varpi$,
take $d_{20}=1$ and $x_{31}=0$, then
\[
P\left[\left\{ -\frac{r}{w}<z_{31}<-\frac{\gamma}{\varpi}\right\} \cup\left\{ -\frac{\gamma}{\varpi}<z_{31}<-\frac{r}{w}\right\} \right]>0,
\]
due to the assumption that the support of $z_{31}$ contains $\left[-\frac{\gamma}{\varpi}-\delta,0\right]$
which is a subset of $[-K,0]$. For any $b\neq\beta,$ take $d_{20}=0$
and $x_{31}=1$, then
\[
P\left[\left\{ -\frac{b}{w}<z_{31}<-\frac{\beta}{\varpi}\right\} \cup\left\{ -\frac{\beta}{\varpi}<z_{31}<-\frac{b}{w}\right\} \right]>0,
\]
due to the assumption that the support of $z_{31}$ contains $\left[-\frac{\beta}{\varpi}-\delta,0\right]$
which is a subset of $[-K,0]$. Then, the identification of model (\ref{eq:A3_1}) follows by combining these results with the arguments used in the proof of Theorem \ref{T:identify}, presented in Appendix \ref{appendixA}, for a general model.
\end{proof}

A final note is that the full support condition imposed in Assumption A3 is sufficient but not necessary. Such condition simplifies the proof and exposition of identification and is widely adopted in maximum-score literature. However, our Assumption A3' provides an example where we can relax such full support conditions to a large enough bounded support in special cases where we possess more knowledge about the covariates and the parameter space of the model.

\subsubsection{On the Role of Assumption A in Empirical Application}\label{appendix0_33}
In our empirical application, we use the standardized household income as the $z_t$ in Assumptions A2 and A3’. In any four consecutive waves $(t-2,t-1,t,t+1)$, $I_t$ and $I_{t+1}-I_{t-1}$ play the roles of $z_2$ and $z_{31}$ in these assumptions, respectively. To apply the identification at infinity strategy, we need $I_t$ to satisfy Assumption A2, meaning that it can reach sufficiently large values in the sample. Moreover, for point identification, we also need $I_{t+1}-I_{t-1}$ to meet Assumption A3’, which is specific to our application where the only significant covariates (besides income) are the lagged dependent variable and the age dummy. We provide summary statistics to support the validity of these assumptions in the context of our application. In particular, Table \ref{tab_sum_eff} reproduces Table \ref{tab_sum}, but only uses observations for ``switchers with $I_t>\sigma_n$'', which are the effective sample for our estimation. Here, $\sigma_n=1.478$ is the tuning parameter that we use to obtain the results in the top panel of Table \ref{tab_res}.

\begin{table}[tpb]
\caption{Summary Statistics (Effective Sample)}
\label{tab_sum_eff}%\small
\centering
\begin{tabular}{lccccc}
\hline\hline
Variable & $n\times T$ & Mean & Std.Dev. & Min & Max \\
\midrule $y_{it}$ & 2160 & 0.509 & 0.500 & 0 & 1 \\
$\text{Above30}_{it}$ & 2160 & 0.678 & 0.468 & 0 & 1 \\
$\text{Age}_{it}$ & 2160 & 40.080 & 14.372 & 18 & 83 \\
$\text{GCC}_{it}$ & 2160 & 0.761 & 0.427 & 0 & 1 \\
$D_{2014,t}$ & 2160 & 0.143 & 0.350 & 0 & 1 \\
$I_{t+1}-I_{t-1}$ & 2160 & -0.126 & 1.524 & -11.877 & 10.325 \\\hline\hline
\end{tabular}%
\end{table}

\begin{figure}[H]
\centering
\includegraphics[width=0.75\textwidth]{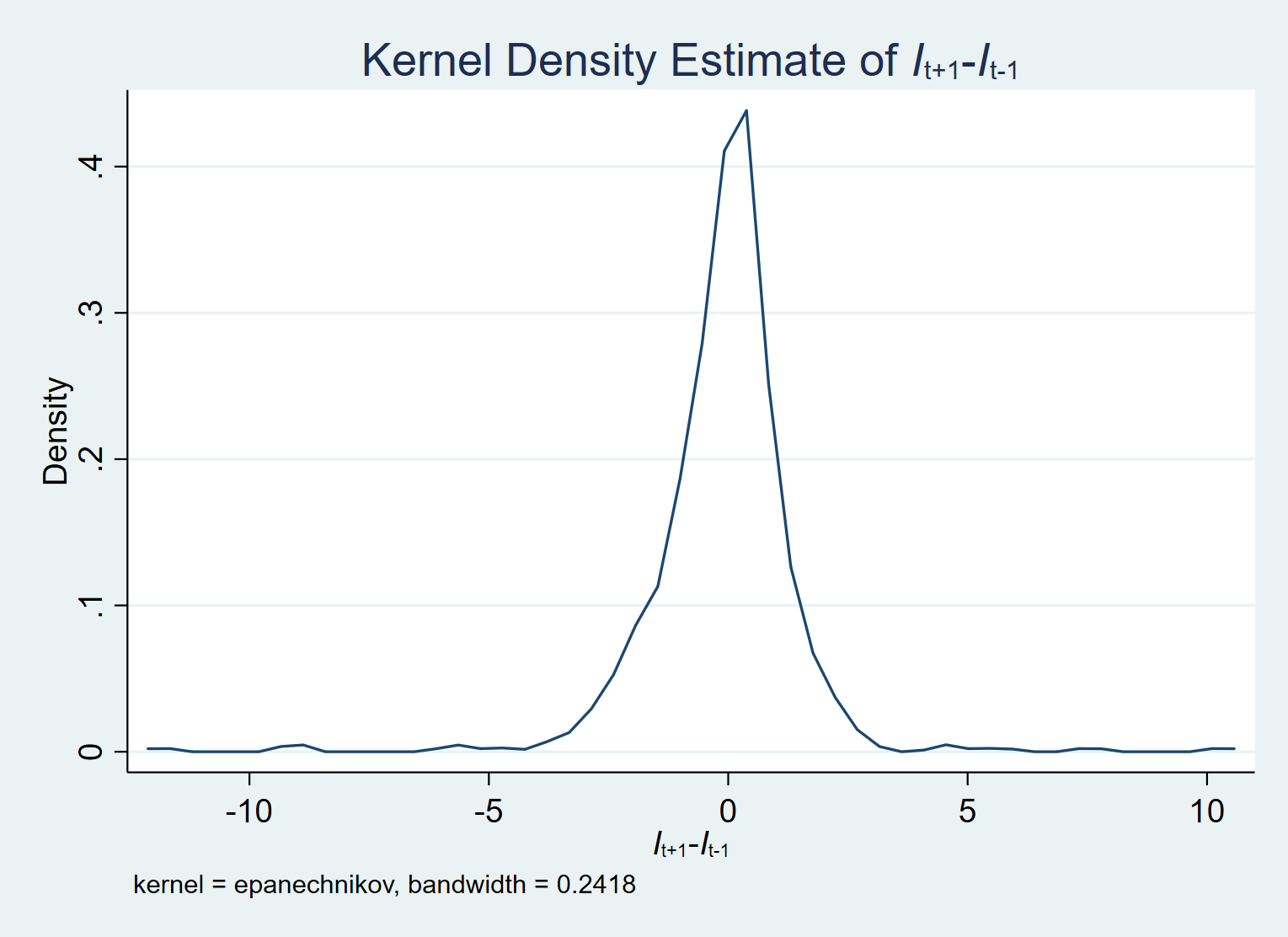}
\caption{\label{figA1}Kernel Density Estimate of $I_{t+1}-I_{t-1}$}
\end{figure}

We can see that only 2160 observations are left once the sample is restricted to ``switchers with $I_{t}>\sigma_{n}$''. This small sample size and the slow convergence rate of the estimator account for the wide confidence intervals (CI) in Table \ref{tab_res}. Moreover, this sub-sample has a lower average age and a higher proportion of GCC residents than the full sample in Table \ref{tab_sum_eff}. We omit the summary statistics for income because this sub-sample is defined with an income threshold. Instead, we calculate the range of $I_{t+1}-I_{t-1}$, which is $[-11.877,10.325]$. Moreover, to provide a better sense
of its variation in this sub-sample, we also calculate its $\{5\%,10\%,25\%,50\%,75\%,90\%,95\%\}$
quantiles, which are 
\[
\{-2.272,-1.710,-0.708,0.043,0.567,1.232,1.749\},
\]
respectively, and visualize its PDF via kernel density estimation as illustrated in Figure \ref{figA1}. These imply that the ``$z_{31}$'' in our application has a large support. In fact, to estimate the model coefficients, our approach needs observations of either low-income households that start purchasing PHC due to an increase in income or high-income households that stop maintaining PHC due to an income decline.

One important implication of Assumption A is that we should use observations with large absolute values of $I_{t}$. For our proposed estimator, this is controlled by the tuning parameter (threshold) $\sigma_{n}$. Recall that in the main text, we use $\sigma_{n} = c \cdot \widehat{\textrm{std}(I_{it})}\sqrt{\log n^{*} / 2.95}$ with $c = 1.0$ and $1.1$. We present Table \ref{tab_res2}, which summarizes empirical results obtained using smaller $\sigma_{n}$, i.e., by taking $c = 0.5,0.7$, and $0.9$. These estimations also use a small proportion of observations in the left tail. However, considering the ``bell'' shape of the PDF of $I_{t}$, a smaller $\sigma_{n}$ includes significantly more observations in the estimation.

In general, a larger $\sigma_{n}$ results in a smaller bias but a wider confidence interval (CI), while a smaller $\sigma_{n}$ leads to a larger bias but a narrower CI, reflecting the typical bias-variance trade-off. The results in Table A.2 illustrate this. Although these results exhibit a pattern similar to those in Table 6—with $y_{it-1}$ and $\textrm{Above30}_{it}$ showing much stronger (positive) effects compared to other covariates—the coefficient estimates appear biased toward zero, with CIs skewed to the right of the origin and including zero. We observe narrower CIs for coefficients estimated with $c=0.7$ and $0.9$ as expected, but wider CIs with $c=0.5$. One possible explanation is that a too-small $\sigma_n$ fails to achieve ``identification at infinity'', likely causing our method to estimate the coefficients of $I_t$ biased toward zero, especially during resampling. Recall that we perform scale normalization by setting the coefficient of $I_t$ to 1 (equivalent to dividing the other coefficients by the $I_t$ coefficient). Consequently, $m$-out-of-$n$ resampling is more likely to generate extreme values, resulting in longer CIs.

\begin{table}[htbp]
\caption{Estimates of Preference Coefficients with Smaller $\sigma_n$}
\label{tab_res2}\centering
\begin{tabular}{clLrcrc}
\hline\hline
                   & Variable & \multicolumn{1}{c}{Estimate} &  \multicolumn{2}{c}{[90\% Conf.Int.]} &  \multicolumn{2}{c}{[95\% Conf.Int.]}  \\
    \midrule
        \multirow{5}[2]{*}{$c = 0.5$} & $y_{it-1}$ & 2.598 &	-5.993 &	 10.512 & -6.632 & 14.059 \\
          & $\textrm{Above30}_{it}$ & 2.186 &	-7.161 & 10.420 & -7.634 & 12.606 \\
          & $\textrm{Age}_{it}$  & 0.099 &	-11.691 & 5.933 & -12.417 & 8.355 \\
          & $\textrm{GCC}_{it}$ & 0.139 &	-11.505 & 4.864 & -12.328 & 7.765 \\
          & $D_{2014,t}$ & 0.207 &	-11.225 & 5.019 & -11.884 & 7.121 \\
\hline
\multirow{5}[2]{*}{$c = 0.7$} & $y_{it-1}$ & 3.524 & -1.072 & 11.118 & -1.459 & 13.035 \\
          & $\textrm{Above30}_{it}$ & 3.838 &	-3.122 & 9.738 & -3.608 & 10.797 \\
          & $\textrm{Age}_{it}$ & -0.455 &	 -8.779 & 4.575 & -9.249	& 5.956 \\
          & $\textrm{GCC}_{it}$   & -1.110 & -9.764 & 2.404 & -10.320 & 3.993 \\
          & $D_{2014,t}$ & 1.091 &	-5.727 & 7.831 & -6.152 & 9.105 \\
          \hline
\multirow{5}[2]{*}{$c = 0.9$} & $y_{it-1}$ & 4.248^{\ast} & 0.090 & 14.646 & -0.339 & 15.668\\
          & $\textrm{Above30}_{it}$ & 4.691 & -2.805 & 11.963 & -3.042 & 12.665 \\
          & $\textrm{Age}_{it}$ & -0.473 &	 -8.757	& 5.530 &	-9.222 & 6.734 \\
          & $\textrm{GCC}_{it}$   & -1.820 & -11.585	& 3.583 &	-11.796 & 4.601 \\
          & $D_{2014,t}$ & -2.508 & -12.530 & 	1.900 &	-12.806 & 2.559 \\
\hline\hline
    \end{tabular}
\end{table}

\subsubsection{Sufficient Conditions for Assumption B3}\label{appendix0_34}
We provide the following sufficient conditions for Assumption B3:
\begin{enumerate}
\item[1] The limit of the joint distribution of $(x^{T},z_{1},z_{3})$,
conditional on $z_{2}>\sigma$, as $\sigma\rightarrow +\infty$, exists.
\item[2] For any element $v_{31}$ in $(x_{31},z_{31})$, the first moment
of $\vert v_{31}\vert$, conditional on $z_{2}>\sigma$ and $(x_{31},z_{31})\setminus v_{31}$,
is uniformly bounded for all $\sigma>0$ and $(x_{31},z_{31})\setminus v_{31}$.
\end{enumerate}
It is clear that Condition 1 implies Assumption B3(i). Note that if we remove the condition $z_{2}>\sigma$, Assumption B3(ii) becomes a classic Lipschitz condition. In what follows, we
demonstrate that Condition 2 is sufficient for Assumption B3(ii).
To ease our exposition, we assume $y_{0}$ is fixed and illustrate
the claim with the following expression:
\[
\tilde{\Lambda}(\vartheta):=y_{2}\mathds{1}[r(y_{2}-y_{0})+x_{31}'b+wz_{31}>0].
\]
Note that $(y_{3}-y_{1})\tilde{\Lambda}(\vartheta)$ is the $\Lambda(\vartheta)$
defined in Assumption B3.

Apply the law of iterated expectation to write
\begin{equation}
\mathbb{E}[\tilde{\Lambda}(\vartheta)|z_{2}>\sigma]=\mathbb{E}[\mathbb{E}[\tilde{\Lambda}(\vartheta)|x_{31},z_{31},z_{2}>\sigma]|z_{2}>\sigma].\label{eq:B3_1}
\end{equation}
Note that the inner expectation can be further expressed as:
\begin{align}
 & \mathbb{E}[\tilde{\Lambda}(\vartheta)|x_{31},z_{31},z_{2}>\sigma]\nonumber \\
= & \mathbb{E}[\tilde{\Lambda}(\vartheta)|y_{2}=1,x_{31},z_{31},z_{2}>\sigma]P(y_{2}=1|x_{31},z_{31},z_{2}>\sigma)\nonumber \\
 & +\mathbb{E}[\tilde{\Lambda}(\vartheta)|y_{2}=0,x_{31},z_{31},z_{2}>\sigma]P(y_{2}=0|x_{31},z_{31},z_{2}>\sigma)\nonumber \\
= & \mathbb{E}[\tilde{\Lambda}(\vartheta)|y_{2}=1,x_{31},z_{31},z_{2}>\sigma]P(y_{2}=1|x_{31},z_{31},z_{2}>\sigma)=\mathds{1}[\xi'\vartheta>0]\varphi(x_{31},z_{31},\sigma),\label{eq:B3_2}
\end{align}
where $\xi:=(1-y_{0},x_{31},z_{31})$ and $\varphi(x_{31},z_{31},\sigma):=P(y_{2}=1|x_{31},z_{31},z_{2}>\sigma)$. Then, substituting (\ref{eq:B3_2}) into (\ref{eq:B3_1}) yields
\begin{equation}
\mathbb{E}[\tilde{\Lambda}(\vartheta)|z_{2}>\sigma]=\int\mathds{1}[\xi'\vartheta>0]\varphi(x_{31},z_{31},\sigma)f_{\sigma}(x_{31},z_{31})dx_{31}dz_{31},\label{eq:B3_3}
\end{equation}
where $f_{\sigma}\left(x_{31},z_{31}\right):= f\left(\left.x_{31},z_{31}\right|z_{2}>\sigma\right)$.

Applying standard results from classic differential geometry to equation (\ref{eq:B3_3}) yields
\begin{align}
 & \left|\frac{\partial}{\partial\vartheta}\mathbb{E}[\tilde{\Lambda}(\vartheta)|z_{2}>\sigma]\right| \nonumber \\
= & \left|\int\xi\cdot\mathds{1}[\xi'\vartheta=0]\varphi(x_{31},z_{31},\sigma)f_{\sigma}(x_{31},z_{31})d\Delta_{0}\right|\leq\int\vert\xi\vert\cdot\mathds{1}[\xi'\vartheta=0]f_{\sigma}(x_{31},z_{31})d\Delta_{0}, \label{eq:B3_4}
\end{align}
where $\varDelta_{0}$ denotes the surface measure on $\left\{ \xi:\xi'\vartheta=0\right\}$. Since $\left\{ \xi:\xi'\vartheta=0\right\} $ is a hyperplane in $\mathbb{R}^{p+2}$,
$\varDelta_{0}=\left\Vert \vartheta\right\Vert$ and has a constant
density (see, e.g., Example 6.4 of \cite{KimPollard1990}). Under Assumption A5, the parameter space $\Theta$ of $\vartheta$ is a compact set, and thus $\Delta_{0}$ is (totally) bounded. It
is evident that Assumption B3(ii) is satisfied if $\left|\frac{\partial}{\partial\vartheta}\mathbb{E}[\tilde{\Lambda}(\vartheta)|z_{2}>\sigma]\right|$ is uniformly bounded. Recall that $\xi$ is defined as $(1-y_{0},x_{31},z_{31})$.
Then, by (\ref{eq:B3_4}), each element of $\left|\frac{\partial}{\partial\vartheta}\mathbb{E}[\tilde{\Lambda}(\vartheta)|z_{2}>\sigma]\right|$ is essentially bounded by a (weighted) first conditional moment of the corresponding element of $|(1-y_{0},x_{31},z_{31})|$ over $\left\{ \xi'\vartheta=0\right\} $. Then, Condition 2 suffices and the desired result follows.

\section{Proofs of Theorems 2.1 and 4.1 (Identification and Consistency)}\label{appendixA}
In this appendix, we prove Theorems \ref{T:identify} (identification) and \ref{T:consistency} (consistency). For ease of exposition, all proofs are
based on identification equation (\ref{eq:iden_ineq1}) and sample objective
function (\ref{eq:Qn1}). The same arguments apply to sample objective function (\ref{eq:Qn2}) and its corresponding identification equation.

\begin{proof}[Proof of Theorem \ref{T:identify}]

It suffices to prove the identification of $\theta$ based on equation
(\ref{eq:iden_ineq1}). According to (\ref{eq:iden_ineq1}), for any $(\alpha,y_{0},x^{T},z_{1},z_{3})$, we
can find a sufficiently large $\sigma$ such that the sign of $\gamma d_{20}+x_{31}'\beta+\varpi z_{31}$
matches the sign of $P(C|\alpha,y_{0}=d_{0},x^{T},z^{T})-P(D|\alpha,y_{0}=d_{0},x^{T},z^{T})$,
given $z_{2}>\sigma$. Note that the probability difference is independent
of the preference coefficients. Hence, the true parameter $\theta$
maximizes
\[
\bar{Q}_{1}(\vartheta):=\lim_{\sigma\rightarrow+\infty}\mathbb{E}\left[\left(P(C|\alpha,y_{0}=d_0,x^{T},z^{T})-P(D|\alpha,y_{0}=d_0,x^{T},z^{T})\right)\cdot\text{sgn}\left(rd_{20}+x_{31}^{\prime}b+wz_{31}\right)|z_{2}>\sigma\right],
\]
where we let $\sigma\rightarrow+\infty$ to ensure that (\ref{eq:iden_ineq1}) holds
for all $(\alpha,y_{0},x^{T},z_{1},z_{3})$.

To further show that $\theta$ attains a unique maximum, consider
any $\vartheta\in\Theta$ such that $\bar{Q}_{1}(\vartheta)=\bar{Q}_{1}(\theta)$.
We want to show that $\vartheta=\theta$ must hold under Assumption
A.

Under Assumption A3, assuming $\xi_t$ is an element of $x_t$ w.l.o.g., we can write, for the true parameter $\theta$,
\[
\gamma y_{t-1}+x_t'\beta+\varpi z_t=\gamma y_{t-1}+c^{*}\xi_t+\tilde{x}_t'\tilde{\beta}+\varpi z_t,
\]
and for any $\vartheta\in\Theta$,
\[
r y_{t-1}+x_t'b+w z_t=r y_{t-1}+c \xi_t+\tilde{x}_t'\tilde{b}+w z_t,
\]
where $\tilde{x}_t=x_t\left\backslash \xi_t\right.$, $\tilde{\beta}=\beta\left\backslash c^{*}\right.$, and $\tilde{b}=b\left\backslash c \right.$. Then, note that if
\begin{align*}
\lim_{\sigma\rightarrow+\infty}P\left[\left\{ \frac{rd_{20}+\tilde{x}_{31}^{\prime}\tilde{b}+wz_{31}}{-c}<\xi_{31}<\frac{\gamma d_{20}+\tilde{x}_{31}^{\prime}\tilde{\beta}+\varpi z_{31}}{-c^{*}}\right\} \cup\right.\\
\left.\left.\left\{ \frac{\gamma d_{20}+\tilde{x}_{31}^{\prime}\tilde{\beta}+\varpi z_{31}}{-c^{*}}<\xi_{31}<\frac{rd_{20}+\tilde{x}_{31}^{\prime}\tilde{b}+wz_{31}}{-c}\right\} \right|z_{2}>\sigma\right] & >0,
\end{align*}
$\vartheta$ and $\theta$ yield different values of the $\text{sgn}(\cdot)$
function in $\bar{Q}_{1}(\cdot)$ with strictly positive probability
under Assumption A3, and hence $\bar{Q}_{1}(\vartheta)<\bar{Q}_{1}(\theta)$.
This observation implies that for all $\vartheta\in\Theta$ satisfying
$\bar{Q}_{1}(\vartheta)=\bar{Q}_{1}(\theta)$, we must have
\begin{align*}
\lim_{\sigma\rightarrow+\infty}P\left[\left\{ \frac{rd_{20}+\tilde{x}_{31}^{\prime}\tilde{b}+wz_{31}}{-c}<\xi_{31}<\frac{\gamma d_{20}+\tilde{x}_{31}^{\prime}\tilde{\beta}+\varpi z_{31}}{-c^{*}}\right\} \cup\right.\\
\left.\left.\left\{ \frac{\gamma d_{20}+\tilde{x}_{31}^{\prime}\tilde{\beta}+\varpi z_{31}}{-c^{*}}<\xi_{31}<\frac{rd_{20}+\tilde{x}_{31}^{\prime}\tilde{b}+wz_{31}}{-c}\right\} \right|z_{2}>\sigma\right] & =0,
\end{align*}
which, by Assumption A3, is equivalent to
\[
\lim_{\sigma\rightarrow+\infty}P\left[\left.\left(r-\gamma\right)d_{20}+\tilde{x}_{31}^{\prime}(\tilde{b}-\tilde{\beta})+\left(c-c^{*}\right)\xi_{31}+\left(w-\varpi\right)z_{31}=0\right|z_{2}>\sigma\right]=1.
\]
Then the desired result follows from Assumption A4.
\end{proof}

We next prove the consistency of $\hat{\theta}_{n}$. For ease of
illustration, we work with the following sample and population objective
functions with a bit abuse of notation:
\begin{equation}
Q_{n1}(\vartheta ):=\frac{1}{nP\left( z_{2}>\sigma _{n}\right) }%
\sum_{i=1}^{n}\mathds{1}\{z_{i2}>\sigma _{n}\}\cdot \Lambda _{i}(\vartheta )
\label{eq:appendix_obj1}
\end{equation}%
and
\begin{equation}
Q_{1}(\vartheta ):=\lim_{\sigma \rightarrow +\infty }\mathbb{E}[\Lambda
(\vartheta )|z_{2}>\sigma ],
\end{equation}
where $\Lambda _{i}(\vartheta ):=y_{i2}y_{i31}\cdot \mathds{1}%
\{ry_{i20}+x_{i31}^{\prime }b+wz_{i31}>0\}$. Note that (\ref%
{eq:appendix_obj1}) and (\ref{eq:Qn1}) have the same maximum, and hence are
equivalent.

\begin{proof}[Proof of Theorem \ref{T:consistency}]
We prove the consistency of $\hat{\theta}_n$ via verifying the four
sufficient conditions for applying Theorem 2.1 of \cite{NeweyMcFadden1994}:
(S1) $\Theta$ is compact, (S2) $\sup_{\vartheta\in\Theta}\vert
Q_{n1}(\vartheta)-Q_{1}(\vartheta)\vert=o_{p}(1)$, (S3) $Q_{1}(\vartheta)$
is continuous in $\vartheta$, and (S4) $Q_{1}(\vartheta)$ is uniquely
maximized at $\theta$.

The compactness condition (S1) is satisfied by Assumption A5. The
identification condition (S4) follows from Theorem \ref{T:identify}. To see
this, let $\bar{\chi}^{\prime }\vartheta :=ry_{20}+x_{31}^{\prime }b+wz_{31}$
and note that by definition, when $d_{2}=1$ in (\ref{eq:iden_ineq1}), we
have
\begin{align}
& Q_{1}(\vartheta )=\lim_{\sigma \rightarrow +\infty }\mathbb{E}\left[
y_{2}y_{31}\cdot \mathds{1}\{\bar{\chi}^{\prime }\vartheta >0\}|z_{2}>\sigma %
\right]  \notag \\
=& \lim_{\sigma \rightarrow +\infty }\sum_{d_{0}}\mathbb{E}\left[
y_{2}y_{31}\cdot \mathds{1}\{\bar{\chi}^{\prime }\vartheta >0\}|z_{2}>\sigma
,y_{0}=d_{0}\right] P(y_{0}=d_{0}|z_{2}>\sigma )  \notag \\
=& \lim_{\sigma \rightarrow +\infty }\sum_{d_{0}}\mathbb{E}\left\{ \mathbb{E}%
\left[ y_{2}y_{31}\cdot \mathds{1}\{\bar{\chi}^{\prime }\vartheta
>0\}|\alpha ,y_{0}=d_{0},x^{T},z^{T}\right] |z_{2}>\sigma
,y_{0}=d_{0}\right\} P(y_{0}=d_{0}|z_{2}>\sigma )  \notag \\
=& \lim_{\sigma \rightarrow +\infty }\sum_{d_{0}}\mathbb{E}\left\{ \mathbb{E}%
\left[ \left( \mathds{1}\{C\}-\mathds{1}\{D\}\right) \cdot \mathds{1}\{\bar{%
\chi}^{\prime }\vartheta >0\}|\alpha ,y_{0}=d_{0},x^{T},z^{T}\right]
|z_{2}>\sigma ,y_{0}=d_{0}\right\} P(y_{0}=d_{0}|z_{2}>\sigma )  \notag \\
=& \lim_{\sigma \rightarrow +\infty }\sum_{d_{0}}\mathbb{E}\left\{ \left[
P(C|\alpha ,y_{0},x^{T},z^{T})-P(D|\alpha ,y_{0},x^{T},z^{T})\right] \cdot %
\mathds{1}\{y_{0}=d_{0}\}\cdot \mathds{1}\{\bar{\chi}^{\prime }\vartheta
>0\}|z_{2}>\sigma \right\}  \notag \\
=& \lim_{\sigma \rightarrow +\infty }\mathbb{E}\left\{ \left[ P(C|\alpha
,y_{0},x^{T},z^{T})-P(D|\alpha ,y_{0},x^{T},z^{T})\right] \cdot \mathds{1}\{%
\bar{\chi}^{\prime }\vartheta >0\}|z_{2}>\sigma \right\} =\frac{1}{2}\bar{Q}%
_{1}(\vartheta )+c,  \label{eq:iden_equiv}
\end{align}%
where $\bar{Q}_{1}(\vartheta )$ is defined in the proof of Theorem \ref{T:identify} and $c$ is an absolute constant.

We now verify the uniform convergence condition (S2). Let $\mathcal{F}%
_{n}:=\{\mathds{1}\{z_{2}>\sigma _{n}\}\cdot \Lambda (\vartheta )|\vartheta
\in \Theta \}$, which is clearly a sub-class of the fixed class $\mathcal{F}%
:=\{\mathds{1}\{z_{2}>\sigma \}\cdot \Lambda (\vartheta )|\vartheta \in
\Theta ,\sigma >0\}$. First, note that the collection of right-sided
half-intervals, $\mathcal{C}:=\{(\sigma ,+\infty )|\sigma \in \mathbb{R}%
_{+}\}$, is a Vapnik-\v{C}ervonenkis (VC) class with VC-index (or called
VC-dimension) $V(\mathcal{C})=2$.\footnote{%
Here we use the same definition of VC-index as \cite{vdVaartWellner1996},
\cite{kosorok2008}, and \cite{gine_nickl_2015}. Some more recent literature,
such as \cite{VershyninHDP} and \cite{WainwrightHDS}, uses a slightly
different definition (i.e., $V(\mathcal{C})-1$ in our notation).} Then by
Lemma 9.8 of \cite{kosorok2008}, the class $\mathcal{F}_{1}:=\{\mathds{1}%
\{z_{2}\in C\}|C\in \mathcal{C}\}$ of indicator functions is VC-subgraph
with envelope $F_{1}=1$ and VC-index $V(\mathcal{F}_{1})=V(\mathcal{C})=2$.
Next applying Lemma 9.8 of \cite{kosorok2008}, Lemmas 2.6.15, 2.6.18 (vi),
and Problem 2.6.12 of \cite{vdVaartWellner1996} to obtain that the class $%
\mathcal{F}_{2}:=\{\Lambda (\vartheta )|\vartheta \in \Theta \}$ is a
VC-subgraph class of functions with envelope $F_{2}=1$, whose VC-index $V(%
\mathcal{F}_{2})\leq 2p+7$. Put all these results together and apply Theorem
9.15 of \cite{kosorok2008} (or equivalently Theorem 2.6.7 of \cite%
{vdVaartWellner1996}) to conclude that the class $\mathcal{F}$ has bounded
uniform entropy integral (BUEI) with envelope $F:=F_{1}F_{2}=1$, i.e., the
covering number of $\mathcal{F}$ satisfies
\begin{equation}
\sup_{\mu }N(\epsilon ,\mathcal{F},L_{2}(\mu ))\leq A(\mathcal{F})\left(
\frac{1}{\epsilon }\right) ^{2(V(\mathcal{F}_{1})+V(\mathcal{F}_{2})-2)}=A(%
\mathcal{F})\left( \frac{1}{\epsilon }\right) ^{2(2p+7)},\text{ }\forall
\epsilon \in (0,1),\footnote{
Here we essentially prove that $\mathcal{F}$ is an Euclidean (manageable) class of
functions with envelope $F=1$ in the sense of \cite{PakesPollard1989} and
\cite{pollard1989}.}  \label{eq:BUEI}
\end{equation}%
with constant $A(\mathcal{F})$ only depending on $\mathcal{F}$, where the
supremum is taken over all probability measures $\mu $.\footnote{%
More precisely, $\mathcal{F} $ can be showed to be BUEI after taking logs, square
roots, and then integrating both sides of (\ref{eq:BUEI}) with respect to $%
\epsilon $.}

Next, note that
\begin{align}
& \sup_{\mathcal{F}_{n}}\mathbb{E}\left[ \left\vert \mathds{1}\{z_{2}>\sigma
_{n}\}\cdot \Lambda (\vartheta )\right\vert \right] =\sup_{\mathcal{F}_{n}}%
\mathbb{E}\left[ \mathds{1}\{z_{2}>\sigma _{n}\}\cdot \left\vert \Lambda
(\vartheta )\right\vert \right]  \notag \\
=& \sup_{\mathcal{F}_{n}}\int \mathds{1}\{z_{2}>\sigma _{n}\}\cdot \mathbb{E}%
\left[ \left\vert \Lambda (\vartheta )\right\vert |z_{2}=z\right]
f_{z_{2}}(z)dz\leq \sup_{\mathcal{F}_{n}}\int \mathds{1}\{z_{2}>\sigma
_{n}\}f_{z_{2}}(z)dz=O\left( \delta _{n}\right) ,  \label{eq:bounded1}
\end{align}%
where $f_{z_{2}}(\cdot )$ denotes the PDF of $z_{2}$ and $\delta
_{n}:=P(z_{2}>\sigma _{n})$.

With (\ref{eq:BUEI}) and (\ref{eq:bounded1}), apply Lemma 5 of \cite%
{honore-k} (see also Theorem 37 in Chapter 2 of \cite{Pollard1984}) to
obtain
\begin{equation*}
\sup_{\mathcal{F}_{n}}\left\vert \frac{1}{n}\sum_{i=1}^{n}\mathds{1}%
\{z_{i2}>\sigma _{n}\}\cdot \Lambda _{i}(\vartheta )-\mathbb{E}\left[ %
\mathds{1}\{z_{2}>\sigma _{n}\}\cdot \Lambda (\vartheta )\right] \right\vert
=O_{p}\left( \sqrt{\frac{\delta _{n}\log n}{n}}\right) =o_{p}\left( \delta
_{n}\right) ,
\end{equation*}%
where the last equality follows by Assumption B2. This then implies that
\begin{equation*}
\sup_{\mathcal{F}_{n}}\left\vert Q_{n1}(\vartheta )-\mathbb{E}\left[ \Lambda
(\vartheta )|z_{2}>\sigma _{n}\right] \right\vert =\sup_{\mathcal{F}%
_{n}}\left\vert Q_{n1}(\vartheta )-\frac{\mathbb{E}\left[ \mathds{1}%
\{z_{2}>\sigma _{n}\}\cdot \Lambda (\vartheta )\right] }{P\left(
z_{2}>\sigma _{n}\right) }\right\vert =o_{p}\left( 1\right) .
\end{equation*}

The remaining task for verifying (S2) is to show $\sup_{\mathcal{F}%
_{n}}\left\vert \mathbb{E}[\Lambda (\vartheta )|z_{2}>\sigma
_{n}]-Q_{1}(\vartheta )\right\vert =o(1)$ and apply triangle inequality.
Recall that $\Theta $ is a compact subset of $\mathbb{R}^{p+2}$. Then for
any $\varepsilon >0$, there exists a finite $\varepsilon /3L$-net $\mathcal{N%
}_{\Theta }\text{($\varepsilon $)}:=\{\vartheta _{1},...,\vartheta
_{N_{\varepsilon }}\}$ of $\Theta $ such that every $\vartheta \in \Theta $
is within a distance $\varepsilon /3L$ of some $\vartheta _{j}\in \mathcal{N}%
_{\Theta }\text{($\varepsilon $)}$, i.e., $\forall \vartheta \in \Theta $, $%
\exists \vartheta _{j}\in \mathcal{N}_{\Theta }\text{($\varepsilon $)}$ such
that $\Vert \vartheta -\vartheta _{j}\Vert \leq \varepsilon /3L$. The
smallest possible cardinality $N_{\varepsilon }$ of $\mathcal{N}_{\Theta }%
\text{($\varepsilon $)}$ can be the covering number of $\Theta $, denoted by
$N(\Theta ,\Vert \cdot \Vert ,\varepsilon /3L)$.\footnote{%
By Lemma 4.2.8 and Corollary 4.2.13 of \cite{VershyninHDP}, we have
\begin{equation*}
N_{\varepsilon }=N(\Theta ,\Vert \cdot \Vert ,\varepsilon /3L)\leq
N(B_{1}^{p+2},\Vert \cdot \Vert ,\varepsilon /6L)\leq \left(
12L/\varepsilon +1\right) ^{p+2},
\end{equation*}%
where $B_{1}^{p+2}$ is an Euclidean ball in $\mathbb{R}^{p+2}$ with radius $%
1 $ containing $\Theta $ under Assumption A5.} Since $\mathbb{E}[\Lambda
(\vartheta )|z_{2}>\sigma _{n}]\rightarrow Q_{1}(\vartheta )$ (and so $\{%
\mathbb{E}[\Lambda (\vartheta )|z_{2}>\sigma _{n}]\}$ is a Cauchy sequence),
for each $\vartheta _{j}$, we can find a positive integer $M_{j}$ so that $|%
\mathbb{E}[\Lambda (\vartheta _{j})|z_{2}>\sigma _{n}]-\mathbb{E}[\Lambda
(\vartheta _{j})|z_{2}>\sigma _{m}]|<\varepsilon /3$ for all $n,m>M_{j}$. We
next verify the Cauchy criterion for uniform convergence, i.e., for all $%
\vartheta \in \Theta $ and $n,m>M:=\max \{M_{1},...,M_{N_{\varepsilon }}\}$,
\begin{align*}
& |\mathbb{E}[\Lambda (\vartheta )|z_{2}>\sigma _{n}]-\mathbb{E}[\Lambda
(\vartheta )|z_{2}>\sigma _{m}]| \\
\leq & |\mathbb{E}[\Lambda (\vartheta )|z_{2}>\sigma _{n}]-\mathbb{E}%
[\Lambda (\vartheta _{j})|z_{2}>\sigma _{n}]|+|\mathbb{E}[\Lambda (\vartheta
_{j})|z_{2}>\sigma _{n}]-\mathbb{E}[\Lambda (\vartheta _{j})|z_{2}>\sigma
_{m}]| \\
& +|\mathbb{E}[\Lambda (\vartheta _{j})|z_{2}>\sigma _{m}]-\mathbb{E}%
[\Lambda (\vartheta )|z_{2}>\sigma _{m}]|\leq \varepsilon /3+\varepsilon
/3+\varepsilon /3=\varepsilon ,
\end{align*}%
where the last inequality follows from Assumption B3. Furthermore, for all $%
n>M$ and $\vartheta \in \Theta $,
\begin{equation*}
|\mathbb{E}[\Lambda (\vartheta )|z_{2}>\sigma _{n}]-Q_{1}(\vartheta
)|=\lim_{m\rightarrow \infty }|\mathbb{E}[\Lambda (\vartheta )|z_{2}>\sigma
_{n}]-\mathbb{E}[\Lambda (\vartheta )|z_{2}>\sigma _{m}]|\leq \varepsilon .
\end{equation*}%
As $\varepsilon $ is arbitrary, we establish $\sup_{\mathcal{F}%
_{n}}\left\vert \mathbb{E}[\Lambda (\vartheta )|z_{2}>\sigma
_{n}]-Q_{1}(\vartheta )\right\vert =o(1)$, and thus (S2).

The remaining task is to verify the continuity condition (S3). Note that $%
Q_{1}(\vartheta )$ can be expressed as the sum of terms of the form
\begin{align*}
& \lim_{\sigma \rightarrow +\infty }P\left(
y_{1}=0,y_{2}=y_{3}=1,wz_{31}>-r-x_{31}^{\prime }b|z_{2}>\sigma \right) \\
=& \lim_{\sigma \rightarrow +\infty }\int \int_{-(
r+x_{31}^{\prime }b)/w }^{+\infty
}P(y_{1}=0,y_{2}=y_{3}=1|x_{31},z_{2}>\sigma
,z_{31}=z)f_{z_{31}}(z|x_{31},z_{2}>\sigma )dzdF_{x_{31}|z_{2}>\sigma }
\end{align*}%
with $F_{x_{31}|z_{2}>\sigma }$ representing the CDF of $x_{31}$ conditional
on $\{z_{2}>\sigma \}$. Then we see $Q_{1}(\vartheta )$ is continuous in $%
\vartheta $ if $\lim_{\sigma \rightarrow +\infty
}[P(y_{1}=0,y_{2}=y_{3}=1|x_{31},z_{2}>\sigma
,z_{31}=z)f_{z_{31}}(z|x_{31},z_{2}>\sigma )]$ is continuous in $z$, which
is secured by Assumptions A1 and A3.
\end{proof}

\section{Proof of Theorem \ref{T:limiting_dist} (Asymptotic Distribution)}\label{appendixB}

This appendix derives the convergence rate and asymptotic distribution of
the proposed MS estimator $\hat{\theta}_{n}$. Throughout this section, we
use $\mathbb{P}_{n}$ to denote empirical measure, $\mathbb{P}$ to denote
expectation, and $\mathbb{G}_{n}$ to denote empirical process in the sense
of Section 2.1 of \cite{vdVaartWellner1996}, following the notation
conventions in the empirical processes literature. Our proof strategy is
along the lines of \cite{KimPollard1990} and \cite{SeoOtsu2018}, and so we
deliberately keep notation similar to those used in these papers. To ease
exposition, we use the abbreviation
\begin{equation*}
q_{n1,\vartheta }(\chi ):=h_{n}^{-1}y_{2}y_{31}\cdot \mathds{1}%
\{z_{2}>\sigma _{n}\}\cdot \left[ u(\vartheta )-u(\theta )\right]
\end{equation*}%
for all $\vartheta \in \Theta $, where $\chi :=(y_{0},y^{T},x^{T},z^{T})$.
Note that the estimator $\hat{\theta}_{n}$ obtained from objective function (%
\ref{eq:appendix_obj1}) can be equivalently obtained from maximizing $%
\mathbb{P}_{n}q_{n1,\vartheta }$ w.r.t. $\vartheta $.

Lemmas \ref{lemma_M23}--\ref{lemma_M1} below verify technical conditions
(similar to Assumption M of \cite{SeoOtsu2018}) as required by Lemma \ref%
{lemma_rate} to derive the rate of convergence of $\hat{\theta}_{n}$.

\begin{lemma}[$L_2(P)$-Norm and Envelope Condition]
\label{lemma_M23} Suppose Assumptions A--C hold. Then

\begin{enumerate}
\item[(i)] There exist positive constants $C_{1}$ and $C_{2}$ such that
\begin{equation*}
h_{n}\mathbb{P}\left[ \left( q_{n1,\vartheta _{1}}(\chi )-q_{n1,\vartheta
_{2}}(\chi )\right) ^{2}\right] \geq C_{1}\Vert \vartheta _{1}-\vartheta
_{2}\Vert ^{2},
\end{equation*}%
for all $n$ large enough and $\vartheta _{1},\vartheta _{2}\in \left\{
\Theta :\Vert \vartheta -\theta \Vert \leq C_{2}\right\} $.

\item[(ii)] There exists a positive constant $C_{3}$ such that
\begin{equation*}
\mathbb{P}\left[ \sup_{\vartheta _{1}\in \{\Theta :\Vert \vartheta
_{1}-\vartheta _{2}\Vert <\varepsilon \}}h_{n}|q_{n1,\vartheta _{1}}(\chi
)-q_{n1,\vartheta _{2}}(\chi )|^{2}\right] \leq C_{3}\varepsilon ,
\end{equation*}%
for all $n$ large enough, $\varepsilon >0$ small enough, and $\vartheta _{2}$
in a neighborhood of $\theta $.
\end{enumerate}
\end{lemma}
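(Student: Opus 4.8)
The plan is to derive both inequalities from a single reduction. Since the common term $u(\theta)$ cancels, $q_{n1,\vartheta_1}(\chi)-q_{n1,\vartheta_2}(\chi)=h_n^{-1}y_2y_{31}\mathds{1}\{z_2>\sigma_n\}[u(\vartheta_1)-u(\vartheta_2)]$. Using $y_2^2=y_2$, $y_{31}^2=\mathds{1}\{y_1\neq y_3\}$, and $[u(\vartheta_1)-u(\vartheta_2)]^2=\mathds{1}\{u(\vartheta_1)\neq u(\vartheta_2)\}$, I would write $h_n(q_{n1,\vartheta_1}-q_{n1,\vartheta_2})^2=h_n^{-1}\mathds{1}\{y_2=1,z_2>\sigma_n\}\mathds{1}\{y_1\neq y_3\}\mathds{1}\{u(\vartheta_1)\neq u(\vartheta_2)\}$. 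Taking expectations and factoring $P(y_2=1,z_2>\sigma_n)=P(y_2=1)h_n$, the scale $h_n$ cancels and both parts reduce to controlling
\[
h_n\mathbb{P}\big[(q_{n1,\vartheta_1}-q_{n1,\vartheta_2})^2\big]=P(y_2=1)\,\mathbb{E}\big[\mathds{1}\{y_1\neq y_3\}\mathds{1}\{u(\vartheta_1)\neq u(\vartheta_2)\}\,\big|\,y_2=1,z_2>\sigma_n\big],
\]
where $\{u(\vartheta_1)\neq u(\vartheta_2)\}$ is the ``wedge'' between the hyperplanes $\bar{\chi}'\vartheta_1=0$ and $\bar{\chi}'\vartheta_2=0$.

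For part (i) I would first drop $\mathds{1}\{y_1\neq y_3\}$ at the cost of a constant. Since $w\geq\iota>0$ near $\theta$ (Assumption A5), for fixed $(y_{20},x_{31})$ the event $u(\vartheta)=1$ reads $z_{31}>z^*(\vartheta):=-(ry_{20}+x_{31}'b)/w$, so the wedge is the $z_{31}$-interval between $z^*(\vartheta_1)$ and $z^*(\vartheta_2)$; because $(\alpha,x^T)$ has compact support (Assumption C5(ii)) and $w\geq\iota$, both endpoints are uniformly bounded by some $R$, so on the wedge $z_{31}\in[-R,R]$. On this bounded region, integrating the model probability $P(y_1=0,y_3=1\mid\cdot)=(1-F_{\epsilon|\alpha}(\cdot))F_{\epsilon|\alpha}(\cdot)$ over moderate values of $(z_1,z_3,\alpha,x^T)$ yields $P(y_1\neq y_3\mid\bar{\chi},y_2=1,z_2>\sigma_n)\geq c>0$ uniformly, using strict monotonicity of $F_{\epsilon|\alpha}$ (Assumption A1(iv)) and the nondegeneracy of the conditioning (Assumption C2). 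It then remains to bound the wedge probability below; using the a.e.-positive density of $z_{31}$ (Assumption A3) together with the full-rank condition A4(i), I would show the $z_{31}$-interval has length $\gtrsim\|\vartheta_1-\vartheta_2\|$ on a set of $(y_{20},x_{31})$ of positive measure, so the wedge probability is $\gtrsim\|\vartheta_1-\vartheta_2\|\geq\|\vartheta_1-\vartheta_2\|^2$ on $\{\|\vartheta-\theta\|\leq C_2\}$ with $C_2\leq 1$.

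For part (ii) the supremum of $\mathds{1}\{u(\vartheta_1)\neq u(\vartheta_2)\}$ over $\{\|\vartheta_1-\vartheta_2\|<\varepsilon\}$ is the indicator that $z_{31}$ lies in the ``flip interval'' swept out by $z^*(\vartheta_1)$, $\|\vartheta_1-\vartheta_2\|<\varepsilon$, around $z^*(\vartheta_2)$. The key point is again that $z^*$ is uniformly bounded and, on the compact $(y_{20},x_{31})$-support with $w\geq\iota$, Lipschitz in $\vartheta$ with a constant depending only on $R$ and $\iota$; hence the flip interval lies in $[-R,R]$ and has length $\lesssim\varepsilon$. Bounding $\mathds{1}\{y_1\neq y_3\}\leq 1$ and integrating the (uniformly bounded) conditional density of $z_{31}$ (Assumptions A3 and C3(i)) over this interval and the compact $(y_{20},x_{31})$-support gives $h_n\mathbb{P}[\sup\cdots]\leq C_3\varepsilon$.

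The hard part will be the uniformity in $n$: the conditioning set $\{z_2>\sigma_n\}$ drifts to infinity, so the density of $\bar{\chi}$ and the probability $P(y_1\neq y_3\mid\cdot)$ are both evaluated along a moving conditional law. I would discharge this by appealing to Assumption C3(ii) (uniform convergence $f_{\bar{\chi}}(\cdot\mid z_2>\sigma_n,y_2=1)\to f_{\bar{\chi}}^+(\cdot\mid y_2=1)$), which transfers the bounded and nondegenerate density bounds to all large $n$; to Assumption C5(ii), which bounds $z^*$ and thereby localizes the wedge and flip regions despite $z_{31}$ being unbounded; and to Assumption C5(iii) together with C2, which guarantee that conditioning on $\{z_2>\sigma_n\}$ does not wash out the lower bound on $P(y_1\neq y_3\mid\cdot)$.
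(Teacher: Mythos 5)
Your proposal is correct and follows essentially the same route as the paper: both factor $h_n(q_{n1,\vartheta_1}-q_{n1,\vartheta_2})^2$ into a weight term (involving $y_2$, $|y_{31}|$, $\mathds{1}\{z_2>\sigma_n\}$, $h_n^{-1}$) that is bounded above and below by constants uniformly in $n$, times the wedge indicator $\mathds{1}\{u(\vartheta_1)\neq u(\vartheta_2)\}$, whose conditional probability is then controlled exactly as in Example 6.4 of \cite{KimPollard1990}. The only difference is presentational---the paper conditions on $\bar{\chi}$ and bounds $h_n\mathbb{P}[e_n(\chi)^2\mid\bar{\chi}]$ via Assumption C2, whereas you apply Bayes' rule to condition on $\{y_2=1,z_2>\sigma_n\}$ and bound $P(y_1\neq y_3\mid\cdot)$, and you spell out the wedge geometry that the paper delegates to the Kim--Pollard citation.
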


\begin{proof}[Proof of Lemma \protect\ref{lemma_M23}]
Let $e_{n}(\chi ):=h_{n}^{-1}y_{2}y_{31}\mathds{1}\{z_{2}>\sigma _{n}\}$. By
Assumptions A and C2, we have
\begin{align}
h_{n}\mathbb{P}\left[ e_{n}(\chi )^{2}|\bar{\chi}\right] =& \int_{\sigma
_{n}}^{\infty }h_{n}^{-1}f_{z_{2}}(z|y_{2}=1,y_{1}\neq y_{3},\bar{\chi}%
)dzP\left( y_{2}=1,y_{1}\neq y_{3}|\bar{\chi}\right)   \notag \\
=& \frac{P\left( z_{2}>\sigma _{n}|y_{2}=1,y_{1}\neq y_{3},\bar{\chi}\right)
}{P\left( z_{2}>\sigma _{n}|y_{2}=1\right) }\cdot P\left( y_{2}=1,y_{1}\neq
y_{3}|\bar{\chi}\right) \geq c_{1}.  \label{eq:lb1-1}
\end{align}%
hold for some constant $c_{1}>0$ almost surely. Then for any $\vartheta
_{1},\vartheta _{2}\in \Theta $, we can write
\begin{align*}
h_{n}\mathbb{P}\left[ \left( q_{n1,\vartheta _{1}}(\chi )-q_{n1,\vartheta
_{2}}(\chi )\right) ^{2}\right] & =h_{n}\mathbb{P}\left\{ \mathbb{P}\left[
e_{n}(\chi )^{2}|\bar{\chi}\right] \cdot \left( u(\vartheta
_{1})-u(\vartheta _{2})\right) ^{2}\right\}  \\
& \geq c_{1}\mathbb{P}\left[ \left( u(\vartheta _{1})-u(\vartheta
_{2})\right) ^{2}\right] \geq c_{2}\Vert \vartheta _{1}-\vartheta _{2}\Vert
^{2}
\end{align*}%
for some constant $c_{2}>0$, where the first inequality uses (\ref{eq:lb1-1}%
) and the last inequality follows from the same argument (p. 214) to Example
6.4 of \cite{KimPollard1990}. This proves Lemma \ref{lemma_M23}(i). Lemma %
\ref{lemma_M23}(ii) can be proved using similar argument as
\begin{align*}
& h_{n}\mathbb{P}\left[ \sup_{\vartheta _{1}\in \left\{ \Theta :\Vert
\vartheta _{1}-\vartheta _{2}\Vert _{2}<\varepsilon \right\}
}|q_{n1,\vartheta _{1}}(\chi )-q_{n1,\vartheta _{2}}(\chi )|^{2}\right]  \\
=& \mathbb{P}\left\{ h_{n}\mathbb{P}\left[ e_{n}(\chi )^{2}|\bar{\chi}\right]
\cdot \sup_{\vartheta _{1}\in \left\{ \Theta :\Vert \vartheta _{1}-\vartheta
_{2}\Vert _{2}<\varepsilon \right\} }|u(\vartheta _{1})-u(\vartheta
_{2})|\right\}  \\
\leq & c_{3}\mathbb{P}\left[ \sup_{\vartheta _{1}\in \left\{ \Theta :\Vert
\vartheta _{1}-\vartheta _{2}\Vert _{2}<\varepsilon \right\} }|u(\vartheta
_{1})-u(\vartheta _{2})|\right] \leq c_{4}\varepsilon
\end{align*}%
for some constants $c_{3},c_{4}>0$.
\end{proof}

\begin{lemma}[First-order Bias]
\label{lemma_C5} Suppose Assumptions A--C hold. Then
\begin{equation*}
\left. \frac{\partial \mathbb{P}[\bar{q}_{n1,\vartheta }(\bar{\chi})]}{%
\partial \vartheta }\right\vert _{\vartheta =\theta }=O\left(
(nh_{n})^{-1/3}\right) .
\end{equation*}
\end{lemma}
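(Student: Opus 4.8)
The plan is to reduce the derivative to an integral over the ``switching surface'' $\{\bar\chi:\bar\chi^{\prime}\theta=0\}$ and to exploit that the limiting regression $\kappa^{+}$ vanishes there, so that what survives is a pure finite-$\sigma_{n}$ bias controlled by Assumption C5(iii). First I would note that, since $u(\vartheta)=\mathds{1}\{\bar\chi^{\prime}\vartheta>0\}$ is $\bar\chi$-measurable, $\bar q_{n1,\vartheta}(\bar\chi)=\kappa_{n}(\bar\chi)(u(\vartheta)-u(\theta))$, whence
\[
\mathbb{P}[\bar q_{n1,\vartheta}(\bar\chi)]=\int\kappa_{n}(\nu)\big(\mathds{1}\{\nu^{\prime}\vartheta>0\}-\mathds{1}\{\nu^{\prime}\theta>0\}\big)f_{\bar\chi}(\nu|z_{2}>\sigma_{n},y_{2}=1)\,d\nu,
\]
the second indicator contributing a $\vartheta$-free constant. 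Because $w\geq\iota>0$ on $\Theta$ (Assumption A5), I would rewrite $u(\vartheta)=\mathds{1}\{z_{31}>t_{\vartheta}(y_{20},x_{31})\}$ with threshold $t_{\vartheta}(y_{20},x_{31}):=-(ry_{20}+x_{31}^{\prime}b)/w$, integrate out $z_{31}$ first, and differentiate in $\vartheta$ (legitimate by the $C^{2}$ smoothness of Assumption C5(i), differentiability of $\kappa_{n}$ in C4(i), and the bounded density in C3). Moving the threshold yields, at $\vartheta=\theta$,
\[
\frac{\partial\mathbb{P}[\bar q_{n1,\vartheta}(\bar\chi)]}{\partial\vartheta}\Big|_{\vartheta=\theta}=-\sum_{y_{20}\in\{0,1\}}\int\kappa_{n}(\nu_{\theta})\,f_{\bar\chi}(\nu_{\theta}|z_{2}>\sigma_{n},y_{2}=1)\,\frac{\partial t_{\vartheta}}{\partial\vartheta}\Big|_{\vartheta=\theta}\,dx_{31},
\]
where $\nu_{\theta}=(y_{20},x_{31},t_{\theta}(y_{20},x_{31}))$ is the point on the switching surface, i.e. the $\bar\chi$ with $\nu_{\theta}^{\prime}\theta=0$.

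The key observation is that $\kappa^{+}(\nu)=0$ whenever $\nu^{\prime}\theta=0$. I would justify this from the identification step: given $y_{2}=1$ and $\bar\chi=\nu$, $\kappa_{n}(\nu)=\mathbb{E}[y_{31}|z_{2}>\sigma_{n},y_{2}=1,\nu]$ equals the conditional probability of history $C$ minus that of $D$, and letting $z_{2}\to\infty$ equation (\ref{eq:iden_ineq1}) gives $\mathrm{sgn}\,\kappa^{+}(\nu)=\mathrm{sgn}(\nu^{\prime}\theta)$, so by continuity $\kappa^{+}(\nu)=0$ on $\{\nu^{\prime}\theta=0\}$. Consequently $\partial\mathbb{P}[\bar q_{1\vartheta}^{+}]/\partial\vartheta|_{\theta}=0$ (consistent with $\theta$ maximizing the limiting criterion, Theorem \ref{T:identify}), and on the switching surface $\kappa_{n}(\nu_{\theta})=\kappa_{n}(\nu_{\theta})-\kappa^{+}(\nu_{\theta})$ is a pure finite-$\sigma_{n}$ bias.

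It then remains to bound this bias uniformly on the switching surface by $(nh_{n})^{-1/3}$. I would return to the model-implied expression for $\kappa_{n}$, writing it as the ratio of the $z_{2}$-tail integrals of $\mathbb{E}[y_{31}\mathds{1}\{y_{2}=1\}|z_{2}=z,\nu]$ and $P(y_{2}=1|z_{2}=z,\nu)$ over $(\sigma_{n},\infty)$ (the denominator staying bounded away from zero by Assumption C2). The only obstruction to exact cancellation against $\kappa^{+}$ is the residual dependence of $y_{2}$ on $y_{1}$, which for $z_{2}>\sigma_{n}$ is controlled by $1-F_{\epsilon|\alpha}(\alpha+x_{2}^{\prime}\beta+\varpi z_{2})=P(\epsilon_{2}\geq\alpha+x_{2}^{\prime}\beta+\varpi z_{2}|\alpha)$. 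Since $(\alpha,x^{T})$ has compact support (Assumption C5(ii)), this is at most $\sup_{\alpha}P(\epsilon_{2}\geq\varsigma+\sigma_{n}|\alpha)$ for a fixed constant $\varsigma$, which is $O((nh_{n})^{-1/3})$ by Assumption C5(iii); hence $\sup_{\nu^{\prime}\theta=0}|\kappa_{n}(\nu)|=O((nh_{n})^{-1/3})$. Substituting this into the display and using that $\partial t_{\vartheta}/\partial\vartheta$ is bounded (again because $w\geq\iota$ and $x_{31}$ has compact support), that $f_{\bar\chi}(\cdot|z_{2}>\sigma_{n},y_{2}=1)$ is uniformly bounded (Assumption C3), and that the remaining integral over the compactly supported $x_{31}$ is finite, delivers $\partial\mathbb{P}[\bar q_{n1,\vartheta}(\bar\chi)]/\partial\vartheta|_{\vartheta=\theta}=O((nh_{n})^{-1/3})$.

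I expect the main obstacle to be this last step: rigorously showing that the finite-$\sigma_{n}$ bias of $\kappa_{n}$ on the switching surface inherits the $O((nh_{n})^{-1/3})$ rate of the error tail in Assumption C5(iii). This is precisely the ``irregular identification'' bias (cf. \cite{khan2010irregular})---the entire role of Assumption C5(iii) is to make $\sigma_{n}$ grow fast enough that the approximation from conditioning on $\{z_{2}>\sigma_{n}\}$ rather than $z_{2}\to\infty$ is negligible relative to the $(nh_{n})^{-1/3}$ sampling scale. Care is needed to pass the tail bound uniformly through the ratio defining $\kappa_{n}$ while keeping its denominator bounded below, which is where Assumption C2 enters.
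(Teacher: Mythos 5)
Your proposal is correct and follows essentially the same route as the paper: reduce the derivative at $\theta$ to a surface integral of $\kappa_{n}$ over $\{\bar{\chi}:\bar{\chi}^{\prime}\theta=0\}$ (you do this via the $z_{31}$-threshold parametrization rather than the Kim--Pollard transformation $T_{\vartheta}$, which is only a cosmetic difference), observe that $\kappa^{+}$ vanishes there, and bound $\kappa_{n}$ on the surface by the factor $F_{\epsilon|\alpha}(\alpha+x_{2}^{\prime}\beta+\varpi z_{2})-F_{\epsilon|\alpha}(\alpha+\gamma+x_{2}^{\prime}\beta+\varpi z_{2})$ --- exactly the ``residual dependence of $y_{2}$ on $y_{1}$'' you identify --- which Assumptions C5(ii)--(iii) control at rate $O((nh_{n})^{-1/3})$. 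The paper makes your heuristic cancellation step explicit by computing $\kappa_{n}$ on the surface as $A(1-A)\bigl[F_{\epsilon|\alpha}(\alpha+x_{2}^{\prime}\beta+\varpi z_{2})-F_{\epsilon|\alpha}(\alpha+\gamma+x_{2}^{\prime}\beta+\varpi z_{2})\bigr]$ using $\gamma d_{0}+x_{1}^{\prime}\beta+\varpi z_{1}=\gamma+x_{3}^{\prime}\beta+\varpi z_{3}$ on the surface, but the substance is the same.
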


\begin{proof}[Proof of Lemma \protect\ref{lemma_C5}]
Applying argument similar to Section 5 (pp. 205--206) and Example 6.4 (pp.
213--215) of \cite{KimPollard1990}, we have\footnote{%
Recall that $\Vert \vartheta \Vert =1$ for all $\vartheta \in \Theta $.}
\begin{equation}
\frac{\partial \mathbb{P}[\bar{q}_{n1,\vartheta }(\bar{\chi})]}{\partial
\vartheta }=\vartheta ^{\prime }\theta (I+\vartheta \vartheta ^{\prime
})\int \mathds{1}\{\nu ^{\prime }\theta =0\}\kappa _{n}(T_{\vartheta }\nu
)\nu f_{\bar{\chi}}(T_{\vartheta }\nu |z_{2}>\sigma _{n},y_{2}=1)d\mu
_{\theta },  \label{eq:C5-1}
\end{equation}%
for $\vartheta $ near $\theta $, where the transformation $T_{\vartheta
}:=(I-\vartheta \vartheta ^{\prime })(I-\theta \theta ^{\prime })+\vartheta
\theta ^{\prime }$ maps the region $\{\bar{\chi}:\bar{\chi}^{\prime }\theta
\geq 0\}$ to $\{\bar{\chi}:\bar{\chi}^{\prime }\vartheta \geq 0\}$ (taking $%
\{\bar{\chi}:\bar{\chi}^{\prime }\theta =0\}$ to $\{\bar{\chi}:\bar{\chi}%
^{\prime }\vartheta =0\}$), $\mu _{\theta }$ is the surface measure on $\{%
\bar{\chi}:\bar{\chi}^{\prime }\theta =0\}$, and $\kappa _{n}(\cdot ):=%
\mathbb{P}[y_{31}|z_{2}>\sigma _{n},y_{2}=1,\bar{\chi}=\cdot ]$.

Note that by definition, $T_{\theta }\bar{\chi}=\bar{\chi}$ along $\{\bar{%
\chi}:\bar{\chi}^{\prime }\theta =0\}$, and hence we can use (\ref{eq:C5-1})
to obtain
\begin{equation}
\left. \frac{\partial \mathbb{P}[\bar{q}_{n1,\vartheta }(\bar{\chi})]}{%
\partial \vartheta }\right\vert _{\vartheta =\theta }=(I+\theta \theta
^{\prime })\int \mathds{1}\{\nu ^{\prime }\theta =0\}\kappa _{n}(\nu )\nu f_{%
\bar{\chi}}(\nu |z_{2}>\sigma _{n},y_{2}=1)d\mu _{\theta }.  \label{eq:C5-2}
\end{equation}%
Since $\kappa ^{+}(\bar{\chi}):=\lim_{n\rightarrow \infty }\kappa _{n}(\bar{%
\chi})=0$ along $\{\bar{\chi}:\bar{\chi}^{\prime }\theta =0\}$, (\ref%
{eq:C5-2}) implies $\left. \frac{\partial \mathbb{P}[\bar{q}_{n1,\vartheta }(%
\bar{\chi})]}{\partial \vartheta }\right\vert _{\vartheta =\theta
}\rightarrow 0$. The remaining task is to derive its convergence rate.
Recall that $y_{31}=\mathds{1}\{C\}-\mathds{1}\{D\}$ for any fixed $%
(y_{0},y_{2})$, and so letting $\Omega :=\{z_{2}>\sigma _{n},y_{2}=1,\bar{\chi}\}$, we can write%\footnote{%Here we ignore the scale normality to ease the exposition.}
\begin{align}
\kappa _{n}(\bar{\chi})=& \mathbb{P}[\mathds{1}\{C\}-\mathds{1}\{D\}|\Omega
]=\mathbb{P}\left\{ \mathbb{P}\left[ \mathds{1}\{C\}-\mathds{1}\{D\}|\alpha
,y_{0}=d_0,x^{T},z^{T}\right] |\Omega \right\}   \notag \\
=& \mathbb{P}\left\{ (1-F_{\epsilon |\alpha }(\alpha +\gamma
d_{0}+x_{1}^{\prime }\beta +\varpi z_{1}))F_{\epsilon |\alpha }(\alpha
+x_{2}^{\prime }\beta +\varpi z_{2})F_{\epsilon |\alpha }(\alpha +\gamma
+x_{3}^{\prime }\beta +\varpi z_{3})\right.   \notag \\
& \left. -F_{\epsilon |\alpha }(\alpha +\gamma d_{0}+x_{1}^{\prime }\beta
+\varpi z_{1})F_{\epsilon |\alpha }(\alpha +\gamma +x_{2}^{\prime }\beta
+\varpi z_{2})(1-F_{\epsilon |\alpha }(\alpha +\gamma +x_{3}^{\prime }\beta
+\varpi z_{3}))|\Omega \right\}   \notag \\
=& \mathbb{P}\left\{ (1-F_{\epsilon |\alpha }(\alpha +\gamma +x_{3}^{\prime
}\beta +\varpi z_{3}))F_{\epsilon |\alpha }(\alpha +\gamma +x_{3}^{\prime
}\beta +\varpi z_{3})\right.   \notag \\
& \left. \times \left( F_{\epsilon |\alpha }(\alpha +x_{2}^{\prime }\beta
+\varpi z_{2})-F_{\epsilon |\alpha }(\alpha +\gamma +x_{2}^{\prime }\beta
+\varpi z_{2})\right) |\Omega \right\} ,  \label{eq:C5-3}
\end{align}%
where $F_{\epsilon |\alpha }(\cdot )$ denotes the conditional (on $\alpha $)
CDF of $\epsilon _{t}$ and the third equality follows as we calculate $%
\kappa _{n}(\bar{\chi})$ along $\{\bar{\chi}:\bar{\chi}^{\prime }\theta =0\}$%
. Assume w.o.l.g. $\gamma <0$. Then Assumption C5 implies that there exists
some constant $c_{1}>0$ such that
\begin{align}
& F_{\epsilon |\alpha }(\alpha +x_{2}^{\prime }\beta +\varpi
z_{2})-F_{\epsilon |\alpha }(\alpha +\gamma +x_{2}^{\prime }\beta +\varpi
z_{2})\notag \\
 =&P(\alpha +\gamma +x_{2}^{\prime }\beta +\varpi z_{2}\leq \epsilon
_{2}<\alpha +x_{2}^{\prime }\beta +\varpi z_{2}|\alpha ,y_{0},x^{T},z^{T})
\notag \\
 \leq &P(\epsilon _{2}\geq \alpha +\gamma +x_{2}^{\prime }\beta +\varpi
\sigma _{n}|\alpha ,y_{0},x^{T},z^{T})\leq c_{1}(nh_{n})^{-1/3}
\label{eq:C5-4}
\end{align}%
for $n$ large enough. Plug (\ref{eq:C5-4}) into (\ref{eq:C5-3}) to conclude
that $\kappa _{n}(\bar{\chi})=O_{p}\left( (nh_{n})^{-1/3}\right) $. Then
combining this result with (\ref{eq:C5-2}) completes the proof.
\end{proof}

\begin{lemma}[Quadratic Approximation]
\label{lemma_M1} Suppose Assumptions A--C hold. Then for all $n$ large
enough and $\vartheta $ in a neighborhood of $\theta $,
\begin{equation*}
\mathbb{P}\left[ q_{n1,\vartheta }(\chi )-q_{n1,\theta }(\chi )\right] =%
\frac{1}{2}(\vartheta -\theta )^{\prime }V(\vartheta -\theta )+o\left( \Vert
\vartheta -\theta \Vert ^{2}\right) +O\left( (nh_{n})^{-1/3}\Vert \vartheta
-\theta \Vert \right)
\end{equation*}%
where $V:=-\int \mathds{1}\left\{ \nu ^{\prime }\theta =0\right\} \dot{\kappa%
}^{+}(\nu )^{\prime }\theta \nu \nu ^{\prime }f_{\bar{\chi}}^{+}(\nu
|y_{2}=1)d\mu _{\theta }P(y_{2}=1)$ and $\mu _{\theta }$ is the surface
measure on the boundary of $\{\bar{\chi}:\bar{\chi}^{\prime }\theta \geq 0\}$%
.
\end{lemma}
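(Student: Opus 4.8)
The plan is to reduce the claim to a second-order Taylor expansion of the smooth population criterion $\mathbb{P}[\bar{q}_{n1,\vartheta}(\bar{\chi})]$ and to borrow Lemma~\ref{lemma_C5} for the linear term. First I would note that $q_{n1,\theta}(\chi)\equiv 0$, since $u(\theta)-u(\theta)=0$, so only $\mathbb{P}[q_{n1,\vartheta}(\chi)]$ needs expanding. Conditioning on $\bar{\chi}$, using $y_2y_{31}=\mathds{1}\{y_2=1\}y_{31}$, and combining $h_n=P(z_2>\sigma_n\mid y_2=1)$ with $P(z_2>\sigma_n,y_2=1)=h_nP(y_2=1)$, the factor $h_n^{-1}$ cancels and Bayes' rule yields the identity
\begin{equation*}
\mathbb{P}[q_{n1,\vartheta}(\chi)]=P(y_2=1)\,\mathbb{P}[\bar{q}_{n1,\vartheta}(\bar{\chi})],\qquad \mathbb{P}[\bar{q}_{n1,\vartheta}(\bar{\chi})]=\int(u(\vartheta)-u(\theta))\kappa_n(\bar{\chi})\,f_{\bar{\chi}}(\bar{\chi}\mid z_2>\sigma_n,y_2=1)\,d\bar{\chi}.
\end{equation*}
It therefore suffices to expand $\mathbb{P}[\bar{q}_{n1,\vartheta}(\bar{\chi})]$ around $\vartheta=\theta$ and multiply through by $P(y_2=1)$.

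By Assumption~C5(i) the map $\vartheta\mapsto\mathbb{P}[\bar{q}_{n1,\vartheta}(\bar{\chi})]$ is twice continuously differentiable near $\theta$, and $\mathbb{P}[\bar{q}_{n1,\theta}(\bar{\chi})]=0$, so
\begin{equation*}
\mathbb{P}[\bar{q}_{n1,\vartheta}(\bar{\chi})]=(\vartheta-\theta)^{\prime}\left.\frac{\partial\mathbb{P}[\bar{q}_{n1,\vartheta}(\bar{\chi})]}{\partial\vartheta}\right|_{\vartheta=\theta}+\tfrac12(\vartheta-\theta)^{\prime}H_n(\vartheta-\theta)+o(\Vert\vartheta-\theta\Vert^2),
\end{equation*}
with $H_n$ the Hessian at $\theta$. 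For the linear term I would quote Lemma~\ref{lemma_C5} directly: its gradient is $O\big((nh_n)^{-1/3}\big)$, so after multiplication by $P(y_2=1)$ it contributes exactly the stated $O\big((nh_n)^{-1/3}\Vert\vartheta-\theta\Vert\big)$ remainder.

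The substance is the Hessian. I would differentiate the first-order expression (\ref{eq:C5-1}) once more in $\vartheta$ and evaluate at $\theta$, using the surface-measure device of \citeauthor{KimPollard1990}'s (\citeyear{KimPollard1990}) Example~6.4: the integral in (\ref{eq:C5-1}) lives on the hyperplane $\{\nu:\nu^{\prime}\theta=0\}$ via the change of variables $T_\vartheta$. Because $\kappa^{+}(\nu)=0$ along $\{\nu^{\prime}\theta=0\}$, every term produced by differentiating the prefactor $\vartheta^{\prime}\theta(I+\vartheta\vartheta^{\prime})$ or the density $f_{\bar{\chi}}(T_\vartheta\nu\mid\cdots)$ carries a factor $\kappa_n$ that vanishes in the limit; the only surviving contribution comes from differentiating $\kappa_n(T_\vartheta\nu)$ through $T_\vartheta$, leaving
\begin{equation*}
H_n=-\int\mathds{1}\{\nu^{\prime}\theta=0\}\,\dot{\kappa}_n(\nu)^{\prime}\theta\,\nu\nu^{\prime}\,f_{\bar{\chi}}(\nu\mid z_2>\sigma_n,y_2=1)\,d\mu_\theta.
\end{equation*}
Assumptions~C3(ii) and C4(ii), i.e. the uniform convergences $f_{\bar{\chi}}(\cdot\mid z_2>\sigma_n,y_2=1)\to f_{\bar{\chi}}^{+}(\cdot\mid y_2=1)$ and $\dot{\kappa}_n\to\dot{\kappa}^{+}$, then give $H_n\to V/P(y_2=1)$; since $\tfrac12(\vartheta-\theta)^{\prime}(H_n-\lim_n H_n)(\vartheta-\theta)=o(\Vert\vartheta-\theta\Vert^2)$, multiplying the whole expansion by $P(y_2=1)$ reproduces the $\tfrac12(\vartheta-\theta)^{\prime}V(\vartheta-\theta)$ term.

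The main obstacle is the rigorous execution of this surface-measure differentiation when the integrand depends on $n$ through $\sigma_n$: unlike the fixed-density setting of \cite{KimPollard1990}, both $\kappa_n$ and $f_{\bar{\chi}}(\cdot\mid z_2>\sigma_n,y_2=1)$ drift with $n$, so I must justify differentiation under the integral, the passage $H_n\to\lim_n H_n$, and — most delicately — that the $o(\Vert\vartheta-\theta\Vert^2)$ remainder holds uniformly in $n$. This is precisely where the uniform-boundedness and uniform-convergence halves of Assumptions~C3 and C4, together with the compact-support condition C5(ii), do the work, ensuring the Hessian limit is attained cleanly without the remainder swamping the quadratic term.
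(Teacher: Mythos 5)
Your proposal is correct and follows essentially the same route as the paper: reduce to the population criterion $\mathbb{P}[\bar{q}_{n1,\vartheta}(\bar{\chi})]$ scaled by $P(y_2=1)$, invoke Lemma \ref{lemma_C5} for the $O\left((nh_n)^{-1/3}\Vert\vartheta-\theta\Vert\right)$ linear term, compute the Hessian via the surface-measure argument of Example 6.4 of \cite{KimPollard1990}, and use Assumptions C3(ii) and C4(ii) to replace $V_n$ (your $H_n\cdot P(y_2=1)$) by $V$ at a cost of $o\left(\Vert\vartheta-\theta\Vert^2\right)$. The only difference is cosmetic — the paper expands the $n\to\infty$ limit (where the gradient vanishes exactly and $\kappa^{+}=0$ on the hyperplane) and treats the finite-$n$ discrepancy separately, whereas you expand the finite-$n$ objective directly and pass to the limit afterwards — and you correctly identify the same uniformity-in-$n$ issues that the paper's stated assumptions are designed to handle.
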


\begin{proof}[Proof of Lemma \protect\ref{lemma_M1}]
Note that
\begin{align}
\lim_{\sigma _{n}\rightarrow +\infty }\mathbb{P}\left[ q_{n1,\vartheta
}(\chi )\right] & =\lim_{\sigma _{n}\rightarrow +\infty }\mathbb{P}\left[
y_{31}\left( u(\vartheta )-u(\theta )\right) |z_{2}>\sigma _{n},y_{2}=1%
\right] \cdot P(y_{2}=1)  \notag \\
& =\mathbb{P}\left[ \bar{q}_{1\vartheta }^{+}(\bar{\chi})\right] \cdot
P(y_{2}=1).  \label{eq:lb2-limit}
\end{align}%
We first derive a second order expansion of $\lim_{\sigma _{n}\rightarrow
+\infty }\mathbb{P}\left[ q_{n1,\vartheta }(\chi )\right] $. Recall that
equation (\ref{eq:iden_ineq1}), together with Assumption A, implies $\text{%
sgn}\left\{ \mathbb{P}[y_{31}|z_{2}>\sigma _{n},y_{2}=1,\bar{\chi}]\right\} =%
\text{sgn}\left\{ \gamma y_{20}+x_{31}^{\prime }\beta +\varpi z_{31}\right\} $
holds as $\sigma _{n}\rightarrow +\infty $. Let $\mathcal{Z}_{\theta }:=\{%
\bar{\chi}:u(\vartheta )\neq u(\theta )\}$. Then using the same argument as
in the proof of Lemma 7 in \cite{honore-k}, we can write
\begin{equation*}
-\mathbb{P}\left[ \bar{q}_{1\vartheta }^{+}(\bar{\chi})\right] =\int_{%
\mathcal{Z}_{\theta }}\lim_{\sigma _{n}\rightarrow +\infty }|\mathbb{P}\left[
y_{31}|z_{2}>\sigma _{n},y_{2}=1,\bar{\chi}\right] |dF_{\bar{\chi}%
|y_{2}=1}^{+}>0
\end{equation*}%
for all $\vartheta \neq \theta $. Therefore, we have\footnote{%
It is easy to verify this using similar argument to (\ref{eq:iden_equiv}).}
\begin{equation}
\left. \frac{\partial \mathbb{P}\left[ \bar{q}_{1\vartheta }^{+}(\bar{\chi})%
\right] }{\partial \vartheta }\right\vert _{\vartheta =\theta }=0.
\label{eq:lb2-foc}
\end{equation}%
Furthermore, by applying argument similar to Example 6.4 of \cite%
{KimPollard1990}, we obtain
\begin{equation}
-\frac{\partial ^{2}\mathbb{P}\left[ \bar{q}_{1\vartheta }^{+}(\bar{\chi})%
\right] }{\partial \vartheta \partial \vartheta ^{\prime }}=\int \mathds{1}%
\left\{ \nu ^{\prime }\theta =0\right\} \dot{\kappa}^{+}(\nu )^{\prime
}\theta \nu \nu ^{\prime }f_{\bar{\chi}}^{+}(\nu |y_{2}=1)d\mu _{\theta }.
\label{eq:lb2-soc}
\end{equation}%
Combining (\ref{eq:lb2-limit}), (\ref{eq:lb2-foc}), and (\ref{eq:lb2-soc}),
we obtain
\begin{align}
\lim_{\sigma _{n}\rightarrow +\infty }\mathbb{P}\left[ q_{n1,\vartheta
}(\chi )-q_{n1,\theta }(\chi )\right] & =\mathbb{P}\left[ \bar{q}%
_{1\vartheta }^{+}(\bar{\chi})-\bar{q}_{1\theta }^{+}(\bar{\chi})\right]
\cdot P(y_{2}=1)  \notag \\
& =\frac{1}{2}(\vartheta -\theta )^{\prime }V(\vartheta -\theta )+o\left(
\Vert \vartheta -\theta \Vert ^{2}\right) ,  \label{eq:lb2-taylor1}
\end{align}%
where $V=-\int \mathds{1}\left\{ \nu ^{\prime }\theta =0\right\} \dot{\kappa}%
^{+}(\nu )^{\prime }\theta \nu \nu ^{\prime }f_{\bar{\chi}}^{+}(\nu
|y_{2}=1)d\mu _{\theta }P(y_{2}=1)$.

Next, applying similar argument to $\mathbb{P}\left[ q_{n1,\vartheta }(\chi )%
\right] $ yields
\begin{equation}
\mathbb{P}\left[ q_{n1,\vartheta }(\chi )-q_{n1,\theta }(\chi )\right]
=\left. (\vartheta -\theta )^{\prime }\frac{\partial \mathbb{P}\left[ \bar{q}%
_{n,\vartheta }(\bar{\chi})\right] }{\partial \vartheta }\right\vert
_{\vartheta =\theta }+\frac{1}{2}(\vartheta -\theta )^{\prime
}V_{n}(\vartheta -\theta )+o\left( \Vert \vartheta -\theta \Vert ^{2}\right)
,  \label{eq:lb2-taylor2}
\end{equation}%
where $V_{n}=-\int \mathds{1}\left\{ \nu ^{\prime }\theta =0\right\} \dot{%
\kappa}_{n}(\nu )^{\prime }\theta \nu \nu ^{\prime }f_{\bar{\chi}}(\nu
|z_{2}>\sigma _{n},y_{2}=1)d\mu _{\theta }P(y_{2}=1)$. Then under
Assumptions C3--C5, the desired result follows by combining (\ref%
{eq:lb2-taylor1}), (\ref{eq:lb2-taylor2}), and Lemma \ref{lemma_C5} as
\begin{align*}
& \mathbb{P}\left[ q_{n1,\vartheta }(\chi )-q_{n1,\theta }(\chi )\right] \\
=& \lim_{\sigma _{n}\rightarrow +\infty }\mathbb{P}\left[ q_{n1,\vartheta
}(\chi )-q_{n1,\theta }(\chi )\right] +\left( \mathbb{P}\left[
q_{n1,\vartheta }(\chi )-q_{n1,\theta }(\chi )\right] -\lim_{\sigma
_{n}\rightarrow +\infty }\mathbb{P}\left[ q_{n1,\vartheta }(\chi
)-q_{n1,\theta }(\chi )\right] \right) \\
=& \frac{1}{2}(\vartheta -\theta )^{\prime }V(\vartheta -\theta )+\frac{1}{2}%
(\vartheta -\theta )^{\prime }(V_{n}-V)(\vartheta -\theta )+o\left( \Vert
\vartheta -\theta \Vert ^{2}\right) +\left. (\vartheta -\theta )^{\prime }%
\frac{\partial \mathbb{P}\left[ \bar{q}_{n,\vartheta }(\bar{\chi})\right] }{%
\partial \vartheta }\right\vert _{\vartheta =\theta } \\
=& \frac{1}{2}(\vartheta -\theta )^{\prime }V(\vartheta -\theta )+o\left(
\Vert \vartheta -\theta \Vert ^{2}\right) +O\left( (nh_{n})^{-1/3}\Vert
\vartheta -\theta \Vert \right) .
\end{align*}
\end{proof}

\begin{lemma}[Convergence Rate of $\hat{\protect\theta}_{n}$]
\label{lemma_rate} Under Assumptions A--C, $\hat{\theta}_{n}-\theta=O_{p}%
\left((nh_{n})^{-1/3}\right)$.
\end{lemma}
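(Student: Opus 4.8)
The plan is to apply a rate theorem for M-estimation under cube-root asymptotics --- in the style of the Main Theorem of \cite{KimPollard1990} and its extension to drifting criterion classes in \cite{SeoOtsu2018} --- to the rescaled empirical criterion $\vartheta\mapsto\mathbb{P}_n q_{n1,\vartheta}$, whose (approximate) maximizer is $\hat\theta_n$. Since the local quadratic expansion of Lemma \ref{lemma_M1} and the modulus bounds of Lemma \ref{lemma_M23} are valid only for $\vartheta$ near $\theta$, the first step is to invoke Theorem \ref{T:consistency} to guarantee $\hat\theta_n\overset{p}{\rightarrow}\theta$, so that with probability tending to one $\hat\theta_n$ lies in the neighbourhood where these lemmas hold; the remainder is then a purely local peeling argument there. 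Note also that Assumption C1 permits $\hat\theta_n$ to be an approximate maximizer with slack $o_p((nh_n)^{-2/3})$, which is precisely $o_p(r_n^{-2})$ for the target rate $r_n=(nh_n)^{1/3}$, i.e. exactly the tolerance allowed in the rate lemma.

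Next I would assemble the two ingredients that the rate lemma balances. For the deterministic drift, Lemma \ref{lemma_M1} gives $\mathbb{P}[q_{n1,\vartheta}-q_{n1,\theta}]=\tfrac12(\vartheta-\theta)'V(\vartheta-\theta)+o(\|\vartheta-\theta\|^2)+O((nh_n)^{-1/3}\|\vartheta-\theta\|)$; since $V$ is negative definite on the tangent space to the unit sphere at $\theta$ (the effective parameter space under the normalization in Assumption A5), this yields the separation bound $\mathbb{P}[q_{n1,\vartheta}-q_{n1,\theta}]\leq -c\|\vartheta-\theta\|^2+O((nh_n)^{-1/3}\|\vartheta-\theta\|)$ for some $c>0$. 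Crucially, whenever $\|\vartheta-\theta\|\gg(nh_n)^{-1/3}$ the quadratic term dominates the linear bias term coming from Lemma \ref{lemma_C5}, so the bias does not degrade the rate --- it only affects the centering of the limit law treated in Theorem \ref{T:limiting_dist}. For the stochastic part, I would apply a maximal inequality for the empirical process $\mathbb{G}_n$ over the shrinking class $\{q_{n1,\vartheta}-q_{n1,\theta}:\|\vartheta-\theta\|\leq\delta\}$. The $n$-uniform entropy control needed here is the BUEI (Euclidean) property of the underlying class established in the proof of Theorem \ref{T:consistency}, while the $L_2(\mathbb{P})$ size of the increments is furnished by Lemma \ref{lemma_M23}(ii), which gives $\mathbb{P}[\sup_{\|\vartheta-\theta\|\leq\delta}|q_{n1,\vartheta}-q_{n1,\theta}|^2]\leq C_3\delta/h_n$. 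Together these produce a modulus $\mathbb{E}^*\sup_{\|\vartheta-\theta\|\leq\delta}|\mathbb{G}_n(q_{n1,\vartheta}-q_{n1,\theta})|\lesssim\phi_n(\delta)$ with $\phi_n(\delta)\asymp\sqrt{\delta/h_n}$, the factor $h_n^{-1}$ reflecting the envelope scaling $h_n^{-1}\mathds{1}\{z_2>\sigma_n\}$, whose second moment is of order $h_n^{-1}$.

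The rate then follows by balancing drift against fluctuation in the standard way: $r_n$ is the largest sequence for which $r_n^2\phi_n(r_n^{-1})\leq\sqrt{n}$. Substituting $\phi_n(\delta)\asymp\sqrt{\delta/h_n}$ gives $r_n^{2}\sqrt{r_n^{-1}/h_n}=r_n^{3/2}h_n^{-1/2}\lesssim\sqrt{n}$, i.e. $r_n\asymp(nh_n)^{1/3}$, and the peeling/maximal-inequality lemma (Theorem 3.2.5 of \cite{vdVaartWellner1996}, as adapted in \cite{SeoOtsu2018}) then delivers $\|\hat\theta_n-\theta\|=O_p((nh_n)^{-1/3})$. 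The main obstacle I anticipate is not any single estimate but the bookkeeping forced by the fact that $q_{n1,\vartheta}$ forms a genuinely $n$-dependent (triangular-array) class: I cannot simply cite the fixed-class Kim--Pollard theorem, and must instead ensure that every constant in the maximal inequality and in the separation bound is uniform in $n$ --- leaning on the $n$-uniform entropy bound from the consistency proof --- and that the $h_n$ scaling is tracked consistently through the drift of Lemma \ref{lemma_M1}, the modulus of Lemma \ref{lemma_M23}, and the final balance. Managing the interaction between the shrinking conditioning event $\{z_2>\sigma_n\}$ and the localization of $\hat\theta_n$ is exactly where the argument departs from the classical template.
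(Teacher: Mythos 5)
Your proposal is correct and follows essentially the same route as the paper: the paper's proof likewise feeds the quadratic drift of Lemma \ref{lemma_M1} (with its $O((nh_n)^{-1/3}\|\vartheta-\theta\|)$ bias term) and the modulus/envelope bounds of Lemma \ref{lemma_M23} into the drifting-class maximal inequality and rate lemma of \cite{SeoOtsu2018} (their Lemmas M and 1), restricting to the shell $(nh_n)^{-1/3}\leq\|\vartheta-\theta\|\leq C$ and using the $o_p((nh_n)^{-2/3})$ slack from Assumption C1 exactly as you describe. Your explicit computation of the modulus $\phi_n(\delta)\asymp\sqrt{\delta/h_n}$ and the balance $r_n^{3/2}h_n^{-1/2}\lesssim\sqrt{n}$ simply unpacks what the cited Seo--Otsu lemmas package.
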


\begin{proof}[Proof of Lemma \protect\ref{lemma_rate}]
Recall that $h_{n}q_{n1,\vartheta }(\chi )$ is uniformly bounded by
definition and $\mathbb{P}[q_{n1,\vartheta }(\chi )]$ is twice continuously
differentiable at $\theta $ for all $n$ large enough under Assumption C5.
Furthermore, we have shown in Theorem \ref{T:identify} that $%
\lim_{n\rightarrow \infty }\mathbb{P}[q_{n1,\vartheta }(\chi )]$ is uniquely
maximized at $\theta $ under Assumption A. Putting all these results and
Lemma \ref{lemma_M23} together enables us to apply Lemmas M and 1 of \cite%
{SeoOtsu2018} to obtain that there exist a sequence of random variables $%
R_{n}=O_{p}(1)$ and some positive constant $C$ such that
\begin{equation}
\left\vert \mathbb{P}_{n}\left( q_{n1,\vartheta }(\chi _{i})-q_{n1,\theta
}(\chi _{i})\right) -\mathbb{P}\left[ q_{n1,\vartheta }(\chi )-q_{n1,\theta
}(\chi )\right] \right\vert \leq \varepsilon \Vert \vartheta -\theta \Vert
^{2}+(nh_{n})^{-2/3}R_{n}^{2}  \label{eq:lb3-1}
\end{equation}%
holds for all $\vartheta \in \{\Theta :(nh_{n})^{-1/3}\leq \Vert \vartheta
-\theta \Vert \leq C\}$ and $\varepsilon >0$ as $n\rightarrow \infty $.
Then, assuming $\Vert \hat{\theta}_{n}-\theta \Vert \geq (nh_{n})^{-1/3}$,
we can take a positive constant $c$ such that for all $\varepsilon >0$
\begin{align}
& o_{p}\left( (nh_{n})^{-2/3}\right)  \notag \\
\leq & \mathbb{P}_{n}\left( q_{n1,\hat{\theta}_{n}}(\chi _{i})-q_{n1,\theta
}(\chi _{i})\right)  \notag \\
\leq & \mathbb{P}\left[ q_{n1,\hat{\theta}_{n}}(\chi )-q_{n1,\theta }(\chi )%
\right] +\varepsilon \Vert \hat{\theta}_{n}-\theta \Vert
^{2}+(nh_{n})^{-2/3}R_{n}^{2}  \notag \\
\leq & (-c+\varepsilon )\Vert \hat{\theta}_{n}-\theta \Vert ^{2}+o\left(
\Vert \hat{\theta}_{n}-\theta \Vert ^{2}\right) +O_{p}\left(
(nh_{n})^{-1/3}\Vert \hat{\theta}_{n}-\theta \Vert \right) +O_{p}\left(
(nh_{n})^{-2/3}\right) ,  \notag  \label{eq:lb3-2}
\end{align}%
where the first inequality is due to Assumption C1, the second inequality
uses (\ref{eq:lb3-1}), and the last inequality follows from Lemma \ref%
{lemma_M1}. As $\varepsilon $ can be arbitrarily small, taking some $%
\varepsilon <c$ justifies the convergence rate of $\hat{\theta}_{n}$ claimed
in Lemma \ref{lemma_rate}.
\end{proof}

The rate established in Lemma \ref{lemma_rate} enables us to consider the
following centered and normalized empirical process
\begin{equation}
Z_{n}(s):=n^{1/6}h_{n}^{2/3}\mathbb{G}_{n}\left(q_{n1,%
\theta+s(nh_{n})^{-1/3}}-q_{n1,\theta}\right)  \label{eq:ep}
\end{equation}
for $\Vert s\Vert\le K$ with some $K>0$. Lemma \ref{lemma_finite} below
yields a finite dimensional convergence result which characterizes the weak
convergence of $Z_{n}(s)$. Lemma \ref{lemma_sae} establishes the stochastic
asymptotic equicontinuity of $Z_{n}(s)$. With these results, the limiting
distribution of $\hat{\theta}_{n}$ then follows by the continuous mapping
theorem of an argmax element (see Theorem 2.7 of \cite{KimPollard1990}).

\begin{lemma}[Finite Dimensional Convergence]
\label{lemma_finite} Let $g_{n}$ be any finite dimensional projection of the
process $\{g_{n}(s)-\mathbb{P}[g_{n}(s)]\}$ for $\Vert s\Vert\le K$ with
some $K>0$, where
\begin{equation}
g_{n}(s):=n^{1/6}h_{n}^{2/3}\left(q_{n1,\theta+s(nh_{n})^{-1/3}}-q_{n1,%
\theta}\right).  \label{eq:lb4-0}
\end{equation}
Suppose Assumptions A--C hold. Then $\Sigma:=\lim_{n\rightarrow\infty}\text{%
Var}(\mathbb{G}_{n}g_{n})$ exists and $\mathbb{G}_{n}g_{n}\overset{d}{%
\rightarrow}N(0,\Sigma)$.
\end{lemma}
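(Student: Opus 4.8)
The plan is to establish the central limit theorem for the finite-dimensional projection by verifying the hypotheses of a Lindeberg--Feller triangular-array CLT, exploiting the i.i.d.\ structure from Assumption B1 and the cube-root scaling already built into $g_{n}$. By the Cramér--Wold device it suffices to show that for any finite collection $s_{1},\dots ,s_{m}$ with $\Vert s_{j}\Vert \le K$ and any coefficients $a_{1},\dots ,a_{m}$, the scalar $\sum_{j}a_{j}\mathbb{G}_{n}g_{n}(s_{j})=\mathbb{G}_{n}\tilde{g}_{n}$, with $\tilde{g}_{n}:=\sum_{j}a_{j}g_{n}(s_{j})$, is asymptotically normal. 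Since $\mathbb{G}_{n}\tilde{g}_{n}=n^{-1/2}\sum_{i=1}^{n}(\tilde{g}_{n}(\chi _{i})-\mathbb{P}\tilde{g}_{n})$ is a normalized sum of row-wise i.i.d.\ mean-zero terms $W_{ni}$, I would (i) show the centering $\mathbb{P}[g_{n}(s)]$ is asymptotically negligible, (ii) compute the limiting variance $\Sigma $, and (iii) verify the Lindeberg condition.

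First I would dispatch the centering. Writing $g_{n}(s)=n^{1/6}h_{n}^{-1/3}y_{2}y_{31}\mathds{1}\{z_{2}>\sigma _{n}\}[u(\theta +s(nh_{n})^{-1/3})-u(\theta )]$, Lemma \ref{lemma_M1} applied with $\vartheta -\theta =s(nh_{n})^{-1/3}$ gives $\mathbb{P}[q_{n1,\vartheta }-q_{n1,\theta }]=O((nh_{n})^{-2/3})$, whence $\mathbb{P}[g_{n}(s)]=n^{1/6}h_{n}^{2/3}\cdot O((nh_{n})^{-2/3})=O(n^{-1/2})\rightarrow 0$; the outer product of means is therefore $O(n^{-1})$ and contributes nothing to $\Sigma $. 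The main work is then the variance. Because the observations are i.i.d., $\text{Var}(\mathbb{G}_{n}g_{n})=\mathbb{P}[g_{n}g_{n}^{\prime }]-\mathbb{P}[g_{n}]\mathbb{P}[g_{n}]^{\prime }$, so I need $\lim_{n}\mathbb{P}[g_{n}(s_{1})g_{n}(s_{2})^{\prime }]$. Using $(y_{2}y_{31})^{2}=y_{2}|y_{31}|$ and the fact that products of the indicator increments reduce to indicators of thin slabs about the hyperplane $\{\bar{\chi}:\bar{\chi}^{\prime }\theta =0\}$, I would condition on $\bar{\chi}$ and factor out $h_{n}^{-1}\mathbb{P}[y_{2}|y_{31}|\mathds{1}\{z_{2}>\sigma _{n}\}\mid \bar{\chi}]$, which is bounded and convergent by the same computation used for (\ref{eq:lb1-1}) in Lemma \ref{lemma_M23} together with Assumption C2. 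The residual factor $n^{1/3}h_{n}^{1/3}=(nh_{n})^{1/3}$ cancels the $O((nh_{n})^{-1/3})$ Lebesgue measure of the slab, and passing to the limit via the change of variables and surface-measure representation of Example~6.4 of \cite{KimPollard1990} --- justified by the density convergence in Assumption C3 and the smoothness in Assumption C4 --- produces a finite limit expressed as a surface integral over $\{\nu ^{\prime }\theta =0\}$ against $f_{\bar{\chi}}^{+}(\cdot |y_{2}=1)$. Collecting the diagonal and cross terms defines $\Sigma $ as the finite-dimensional restriction of the kernel $H$.

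Finally, for asymptotic normality I would invoke Lindeberg--Feller. Each increment is uniformly bounded, $|g_{n}(s)|\le n^{1/6}h_{n}^{-1/3}=(nh_{n})^{-1/3}\sqrt{n}$, so $\max_{i}|W_{ni}|\le C(\sum_{j}|a_{j}|)(nh_{n})^{-1/3}\rightarrow 0$ because Assumption C6 forces $nh_{n}\ge c\,n^{3(1/2-\varrho )}\rightarrow \infty $ (as $\varrho <1/2$). A uniformly bounded, infinitesimal triangular array with convergent variance satisfies the Lindeberg condition automatically, so $\mathbb{G}_{n}\tilde{g}_{n}\overset{d}{\rightarrow }N(0,a^{\prime }\Sigma a)$, and Cramér--Wold then delivers $\mathbb{G}_{n}g_{n}\overset{d}{\rightarrow }N(0,\Sigma )$. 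The step I expect to be the main obstacle is the variance computation --- specifically, justifying the interchange of limit and integration in the slab/surface-measure reduction and verifying that the conditional law of $\bar{\chi}$ given $\{z_{2}>\sigma _{n},y_{2}=1\}$ converges to $F_{\bar{\chi}}^{+}(\cdot |y_{2}=1)$ uniformly enough (Assumptions C3--C4) for $\Sigma $ to be well defined. The CLT itself is then routine given the infinitesimality of the array.
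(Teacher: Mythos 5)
Your proposal is correct, but it takes a more self-contained route than the paper. The paper's own proof is short: it verifies the tail-probability bound $P(|g_{n}(s)|\geq c_{1})=O\left((nh_{n}^{-2})^{-1/3}\right)$ --- which follows from $P(|A_{n,s}(\bar{\chi})|=1)\asymp (nh_{n})^{-1/3}$ together with Assumption C2 --- and then delegates both the existence of $\Sigma$ and the asymptotic normality to Lemmas 2 and C of \cite{SeoOtsu2018}. You instead unpack that machinery: Cram\'{e}r--Wold plus a Lindeberg--Feller triangular-array CLT, with the Lindeberg condition holding trivially because the summands $W_{ni}$ are deterministically bounded by $C(nh_{n})^{-1/3}\rightarrow 0$ (your exponent bookkeeping here is right, and Assumption C6 does give $nh_{n}\rightarrow\infty$). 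Your treatment of the centering via Lemma \ref{lemma_M1} is also correct: $\mathbb{P}[g_{n}(s)]=O(n^{-1/2})$, so the product of means is negligible in the variance. The one piece of genuine extra work you take on is the explicit computation of $\lim_{n}\mathbb{P}[g_{n}(s_{1})g_{n}(s_{2})]$ via conditioning on $\bar{\chi}$, the bound \eqref{eq:lb1-1}, and the slab/surface-measure reduction of Example 6.4 of \cite{KimPollard1990}; the paper performs exactly this computation, but only later, in the proof of Theorem \ref{T:limiting_dist}, where it derives the covariance kernel \eqref{eq:cov_kernel} via Theorem 4.7 of \cite{KimPollard1990}, relying on Assumptions C3--C4 for the convergence of the conditional densities just as you do. So nothing is missing: what you gain is a constructive proof that $\Sigma$ exists (rather than an appeal to cited lemmas), at the cost of duplicating the kernel computation that the paper defers; what the paper's route buys is brevity and reuse of the Seo--Otsu framework that it also needs for Lemma \ref{lemma_sae}.
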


\begin{proof}[Proof of Lemma \protect\ref{lemma_finite}]
Let $A_{n,s}(\bar{\chi}):=u(\theta +s(nh_{n})^{-1/3})-u(\theta )$. Note that
we can write
\begin{align}
& P(|g_{n}(s)|\geq c_{1})  \notag \\
=& P(|y_{2}y_{31}\mathds{1}\{z_{2}>\sigma _{n}\}|\geq
c_{1}n^{-1/6}h_{n}^{1/3}||A_{n,s}(\bar{\chi})|=1)P(|A_{n,s}(\bar{\chi})|=1)
\notag \\
=& P(y_{2}=1,y_{1}\neq y_{3},z_{2}>\sigma _{n}||A_{n,s}(\bar{\chi}%
)|=1)P(|A_{n,s}(\bar{\chi})|=1)  \notag \\
\leq & c_{2}P(z_{2}>\sigma _{n}||A_{n,s}(\bar{\chi})|=1,y_{2}=1,y_{1}\neq
y_{3})P(|A_{n,s}(\bar{\chi})|=1)\leq c_{3}(nh_{n}^{-2})^{-1/3},
\label{eq:lb4-1}
\end{align}%
for some $c_{1},c_{2},c_{3}>0$ and $n$ large enough, where the last
inequality follows by Assumption C2 and the fact that $P(|A_{n,s}(\bar{\chi}%
)|=1)\asymp (nh_{n})^{-1/3}$ (see the argument to Example 6.4 of \cite%
{KimPollard1990}). Then, under Assumptions A--C and (\ref{eq:lb4-1}), the
finite dimensional convergence claimed in Lemma \ref{lemma_finite} follows
by first applying Lemma 2 and then Lemma C of \cite{SeoOtsu2018}.
\end{proof}

\begin{lemma}[Stochastic Asymptotic Equicontinuity]
\label{lemma_sae} For any $\eta>0$, there exists $\delta>0$ and a positive
integer $N_{\delta}$ large enough such that
\begin{equation}
\mathbb{P}\left[\sup_{(s_{1},s_{2}):\Vert s_{1}-s_{2}\Vert<\delta}\left|%
\mathbb{G}_{n}\left(g_{n}(s_{1})-g_{n}(s_{2})\right)\right|\right]\leq\eta
\label{eq:lb5-0}
\end{equation}
holds for all $n\geq N_{\delta}$, where $g_{n}(s)$ is defined in (\ref%
{eq:lb4-0}).
\end{lemma}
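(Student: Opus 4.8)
The plan is to treat \eqref{eq:lb5-0} as a standard stochastic equicontinuity statement for a manageable (VC-type) triangular array of functions, exactly in the mold of Section 4 of \cite{KimPollard1990} and the empirical-process machinery of \cite{SeoOtsu2018}. First I would record that the rescaled increments $g_n(s_1)-g_n(s_2)$ range over a class that inherits the VC structure already established in the proof of Theorem \ref{T:consistency}: since $\mathds{1}\{z_2>\sigma_n\}$ sits in the fixed VC class of right-sided half-intervals and $\{u(\vartheta):\vartheta\in\Theta\}$ is VC-subgraph with an $n$-free VC index, the products $q_{n1,\vartheta}$, and hence the differences $g_n(s_1)-g_n(s_2)$, form a manageable class whose uniform-entropy (capacity) bound does not depend on $n$. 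This uniform entropy control is what ultimately lets a single maximal inequality apply along the whole sequence.

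Second, the key quantitative input is an $L_2(\mathbb{P})$ modulus of continuity for the increments, and this comes almost for free from Lemma \ref{lemma_M23}(ii) once the normalization in \eqref{eq:lb4-0} is tracked. Writing $\vartheta_j=\theta+s_j(nh_n)^{-1/3}$, so that $\|\vartheta_1-\vartheta_2\|=\|s_1-s_2\|(nh_n)^{-1/3}$, Lemma \ref{lemma_M23}(ii) gives
\begin{equation*}
\mathbb{P}\left[\sup_{\|s_1-s_2\|<\delta}\bigl|q_{n1,\vartheta_1}(\chi)-q_{n1,\vartheta_2}(\chi)\bigr|^2\right]\leq \frac{C_3}{h_n}\,\delta\,(nh_n)^{-1/3}.
\end{equation*}
Multiplying through by the squared normalization $\bigl(n^{1/6}h_n^{2/3}\bigr)^2=n^{1/3}h_n^{4/3}$ and simplifying the powers of $n$ and $h_n$ — the factor $n^{1/3}h_n^{4/3}\cdot h_n^{-1}(nh_n)^{-1/3}$ collapses to $1$ — yields the clean bound
\begin{equation*}
\mathbb{P}\left[\sup_{\|s_1-s_2\|<\delta}\bigl|g_n(s_1)-g_n(s_2)\bigr|^2\right]\leq C_3\,\delta
\end{equation*}
for all $n$ large enough. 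Thus the increments class admits a square-integrable envelope whose $L_2(\mathbb{P})$ norm is $O(\sqrt{\delta})$, uniformly in $n$.

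Third, with manageability and an $L_2$ envelope of order $\sqrt{\delta}$ in hand, I would invoke the maximal inequality for manageable processes (the form used in Lemma 2 and Lemma C of \cite{SeoOtsu2018}, itself descended from the functional-CLT apparatus of \cite{KimPollard1990}) to conclude that
\begin{equation*}
\mathbb{P}\left[\sup_{\|s_1-s_2\|<\delta}\bigl|\mathbb{G}_n\bigl(g_n(s_1)-g_n(s_2)\bigr)\bigr|\right]\leq C\,J(\delta),
\end{equation*}
where $J(\delta)\to 0$ as $\delta\to 0$ (for a VC-type class, $J(\delta)\lesssim\sqrt{\delta}$ up to a logarithmic factor) and the constant $C$ is independent of $n$. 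Choosing $\delta$ so small that $C\,J(\delta)\leq\eta$, and then taking $N_\delta$ large enough for the envelope bound above to hold, delivers \eqref{eq:lb5-0}.

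The main obstacle, and the point requiring the most care, is the triangular-array nature of the problem: the functions $g_n$ depend on $n$ both through $\sigma_n$ inside $\mathds{1}\{z_2>\sigma_n\}$ and through the $(nh_n)^{-1/3}$ rescaling, so every entropy, envelope, and maximal-inequality bound must hold uniformly in $n$ rather than for a single fixed class. What rescues the argument is precisely that the VC indices computed in the consistency proof are absolute constants (free of $n$), so the capacity integral governing the maximal inequality stays bounded along the whole sequence; combined with the exact cancellation of the $n$- and $h_n$-powers that produces the $O(\sqrt{\delta})$ modulus, this is what makes the single choice of $\delta$ work simultaneously for all large $n$. Verifying that the specific $n$-indexed class here meets the manageability and envelope hypotheses of the \cite{SeoOtsu2018} maximal inequality is therefore the principal bookkeeping burden of the proof.
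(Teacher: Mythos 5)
Your proposal is correct and follows essentially the same route as the paper: both reduce the claim to the maximal inequality for manageable triangular arrays in \cite{SeoOtsu2018} (the paper cites its Lemma M$'$), feeding in the $L_2$ modulus from Lemma \ref{lemma_M23}(ii), and your bookkeeping of the normalization $n^{1/3}h_n^{4/3}\cdot h_n^{-1}(nh_n)^{-1/3}=1$ is exactly the cancellation the paper exploits to bound $\mathbb{P}G_n^2$. The one hypothesis you acknowledge only generically but do not verify is the polynomial growth restriction on the sup-norm envelope: the paper explicitly checks $G_n/n^{\varrho}\leq n^{1/6-\varrho}h_n^{-1/3}=O(1)$, and this is precisely where Assumption C6 ($h_n\geq cn^{1/2-3\varrho}$) enters the argument, so that step should be made explicit.
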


\begin{proof}[Proof of Lemma \protect\ref{lemma_sae}]
Denote $G_{n}=\sup_{\Vert s\Vert \leq K}|g_{n}(s)|$ as the envelope of the
class of functions $\mathcal{F}_{n}:=\{g_{n}(s):\Vert s\Vert \leq K\}$. By
Assumption C6, we have $G_{n}/n^{\varrho }\leq n^{1/6-\varrho
}h_{n}^{-1/3}=O(1)$ for $\varrho \leq \frac{1}{2}-\frac{\varepsilon}{3}$. Furthermore, note that
\begin{equation*}
\mathbb{P}G_{n}^{2}=(nh_{n})^{1/3}\cdot \mathbb{P}\left[ \sup_{\Vert s\Vert
\leq K}h_{n}\left\vert q_{n1,\theta +s(nh_{n})^{-1/3}}(\chi )-q_{n1,\theta
}(\chi )\right\vert ^{2}\right] \leq (nh_{n})^{1/3}\cdot
C_{3}K(nh_{n})^{-1/3}=C_{3}K
\end{equation*}%
for some $C_{3}>0$ and all $n$ large enough, where the inequality follows
from Lemma \ref{lemma_M23}(ii). Then with all these results and Lemma \ref%
{lemma_M23}, applying Lemma M' of \cite{SeoOtsu2018} proves (\ref{eq:lb5-0}).
\end{proof}

\begin{proof}[Proof of Theorem \ref{T:limiting_dist}]
Part (i) of Theorem 4.2 has been proved in Lemma \ref%
{lemma_rate}. This result implies that when deriving the asymptotic
distribution of $\hat{\theta}_{n}$, we can restrict our attention to the
empirical process defined in (\ref{eq:ep}). Note that the finite dimensional
convergence and stochastic asymptotic equicontinuity results obtained in
Lemmas \ref{lemma_finite} and \ref{lemma_sae}, respectively, guarantees that
$Z_{n}(s)\overset{d}{\rightarrow }Z(s)$ with expected values $s^{\prime }Vs/2
$ and covariance kernel $H(s_{1},s_{2})$ by Theorem 2.7 of \cite%
{KimPollard1990} (see also Theorem 1 of \cite{SeoOtsu2018}). To derive $%
H(s_{1},s_{2})$, we use Theorem 4.7 of \cite{KimPollard1990} to write
\begin{align}
H(s_{1},s_{2})& =\lim_{n\rightarrow \infty }\mathbb{P}\left[
g_{n}(s_{1})g_{n}(s_{2})\right]   \notag \\
& =\lim_{n\rightarrow \infty }(nh_{n})^{1/3}\mathbb{P}\left[ h_{n}\left(
q_{n1,\theta +s_{1}(nh_{n})^{-1/3}}-q_{n1,\theta }\right) \left(
q_{n1,\theta +s_{2}(nh_{n})^{-1/3}}-q_{n1,\theta }\right) \right]   \notag \\
& =\frac{1}{2}\left( L(s_{1})+L(s_{2})-L(s_{1}-s_{2})\right) ,
\label{eq:cov_ker0}
\end{align}%
where
\begin{equation*}
L(s):=\lim_{n\rightarrow \infty }(nh_{n})^{1/3}\mathbb{P}\left[ h_{n}\left(
q_{n1,\theta +s(nh_{n})^{-1/3}}-q_{n1,\theta }\right) ^{2}\right]
\end{equation*}%
and
\begin{equation*}
L(s_{1}-s_{2}):=\lim_{n\rightarrow \infty }(nh_{n})^{1/3}\mathbb{P}\left[
h_{n}\left( q_{n1,\theta +s_{1}(nh_{n})^{-1/3}}-q_{n1,\theta
+s_{2}(nh_{n})^{-1/3}}\right) ^{2}\right] .
\end{equation*}%
Note that
\begin{align*}
& L(s_{1}-s_{2}) \\
=& \lim_{n\rightarrow \infty }(nh_{n})^{1/3}\mathbb{P}\left[
h_{n}^{-1}y_{2}|y_{31}|\cdot \mathds{1}\{z_{2}>\sigma _{n}\}\left( u(\theta
+s_{1}(nh_{n})^{-1/3})-u(\theta +s_{2}(nh_{n})^{-1/3})\right) ^{2}\right]  \\
=& \lim_{n\rightarrow \infty }(nh_{n})^{1/3}\mathbb{P}\left[
|y_{31}|\left\vert u(\theta +s_{1}(nh_{n})^{-1/3})-u(\theta
+s_{2}(nh_{n})^{-1/3})\right\vert ^{2}|z_{2}>\sigma _{n},y_{2}=1\right]
P(y_{2}=1).
\end{align*}%
Using the same argument to Example 6.4 (p. 215) of \cite{KimPollard1990}, we
can decompose vector $\bar{\chi}$ into $\varpi ^{\prime }\theta +\bar{\chi}%
^{\perp }$ with $\bar{\chi}^{\perp }$ orthogonal to $\theta $ and write
\begin{equation}
L(s_{1}-s_{2})=\int |\bar{\chi}^{\perp \prime }(s_{1}-s_{2})|f^{+}(0,\bar{%
\chi}^{\perp }|y_{2}=1)d\bar{\chi}^{\perp }\cdot P(y_{2}=1),
\label{eq:cov_ker1}
\end{equation}%
where $f^{+}(\cdot ,\cdot |y_{2}=1)$ denotes the limit of the joint PDF $%
f(\cdot ,\cdot |z_{2}>\sigma _{n},y_{2}=1)$ of $(\varpi ,\bar{\chi}^{\perp })
$ conditional on $\{z_{2}>\sigma _{n},y_{2}=1\}$ as $n\rightarrow \infty $.
Taking $(s_{1},s_{2})=(s,0)$ in (\ref{eq:cov_ker1}) gives
\begin{equation}
L(s)=\int |\bar{\chi}^{\perp \prime }s|f^{+}(0,\bar{\chi}^{\perp }|y_{2}=1)d%
\bar{\chi}^{\perp }\cdot P(y_{2}=1).  \label{eq:cov_ker2}
\end{equation}%
Then plugging (\ref{eq:cov_ker1}) and (\ref{eq:cov_ker2}) into (\ref%
{eq:cov_ker0}) yields
\begin{equation}
H(s_{1},s_{2})=\frac{1}{2}\int \left( |\bar{\chi}^{\perp \prime }s_{1}|+|%
\bar{\chi}^{\perp \prime }s_{2}|-|\bar{\chi}^{\perp \prime
}(s_{1}-s_{2})|\right) f^{+}(0,\bar{\chi}^{\perp }|y_{2}=1)d\bar{\chi}%
^{\perp }\cdot P(y_{2}=1).  \label{eq:cov_kernel}
\end{equation}%
This completes the proof.
\end{proof}

\section{Additional Simulation Results}\label{appendixC}

The sensitivity check results for Designs 1 and 2 are presented in Tables \ref{T:D1_robust} and \ref{T:D2_robust}, respectively. In addition, we conduct supplementary Monte Carlo experiments (Designs 3--6) to examine the impact of auto-correlations of covariates on the performance of our estimator and compare the performance of our estimator with that of HK and OY. Note that, for the latter, we remove the time trend term and set $T=4$ to make both HK and OY estimators applicable. As discussed in Appendix \ref{appendix0}, neither HK nor OY allow for time trends or dummies, and OY requires at least $T=4$.

\begin{comment}
In this section, we first examine the impact of auto-correlations
of regressors on the performance of our estimators. We then compare
the performance of our estimator with that of HK and OY. However,
to facilitate this comparison, we remove the time trend term, as HK
and OY do not allow this term. In addition, the sensitivity results
for Designs 1 and 2 are collected in this Section.
\end{comment}

We introduce Design 3, which is akin to Design 1 but with the distinction that $x_t$ and $z_t$ are auto-correlated. The data generating process (DGP) is formulated as follows:
\begin{align*}
y_{i0} & = \mathds{1}\left\{ \alpha_{i} + \delta \times (0 - 2) + \beta_{1}x_{i0,1} + z_{i0} \geq \epsilon_{i0} \right\}, \\
y_{it} & = \mathds{1}\left\{ \alpha_{i} + \delta \times (t - 2) + \gamma y_{it-1} + \beta_{1}x_{it,1} + z_{it} \geq \epsilon_{it} \right\}, \text{ for } t \in \{ 1, 2, 3 \},
\end{align*}
where \( \gamma = \beta_{1} = 1 \) and $\delta=1/2$. With all other aspects remaining the same as Design 1, we consider two sets of regressors:
\begin{enumerate}
\item Autoregressive (AR) regressors:
\[
x_{i0,1} = u_{i0,1},   x_{it,1} = \frac{1}{2}x_{it-1,1} + \frac{\sqrt{3}}{2}u_{it,1},  z_{i0} = u_{i0,2},   z_{it} = \frac{1}{2}z_{it-1} + \frac{\sqrt{3}}{2}u_{it,2}, \text{ for } t = 1, 2, 3.
\]
\item Moving Average (MA) regressors:
\[
x_{i0,1} = u_{i0,1},  x_{it,1} = \frac{\sqrt{3}}{2}u_{it,1} + \frac{1}{2}u_{it-1,1},  z_{i0} = u_{i0,2},   z_{it} = \frac{\sqrt{3}}{2}u_{it,2} + \frac{1}{2}u_{it-1,2}, \text{ for } t = 1, 2, 3.
\]
\end{enumerate}
We let \( u_{it,1} \overset{d}{\sim} N(0,1) \). For \( z_{it} \), we explore different tail behaviors. In the first scenario, termed ``Norm'', \( u_{it,2} \) is distributed as \( u_{it,2} \overset{d}{\sim} N(0,1) \), and in the second scenario, termed ``Lap'', as \( u_{it,2} \overset{d}{\sim} \text{Laplace}(0,\sqrt{2}/2) \).  Consequently, this leads to four distinct combinations arising from two scenarios and two sets of regressors.

Design 4 closely parallels Design 1, with two key differences: the removal of the time trend term and setting \( T = 4 \). These changes make the HK and OY estimators applicable. The DGP for this design is then formulated as follows:
\begin{align*}
y_{i0} & = \mathds{1}\left\{ \alpha_{i} + \beta_{1}x_{i0,1} + z_{i0} \geq \epsilon_{i0} \right\}, \\
y_{it} & = \mathds{1}\left\{ \alpha_{i} + \gamma y_{it-1} + \beta_{1}x_{it,1} + z_{it} \geq \epsilon_{it} \right\}, \text{ for } t \in \{ 1, 2, 3, 4 \},
\end{align*}
where \( \gamma = \beta_{1} = 1 \). The process of generating regressors and the error term follows the same approach as in Designs 1 and 3, i.e., we consider independent, AR, and MA covariates. Consequently, we examine six different combinations arising from two scenarios and three distinct sets of regressors. In Design 4, we compare our estimator against several others: the parametric estimator in HK (assuming $\epsilon_{it}$ to be logistic), denoted as HK1; the semiparametric estimator in HK (distribution-free), denoted as HK2; and the estimator in OY, denoted as OY.

Design 5 mirrors Design 2, but with covariates being auto-correlated as in Design 3. Specifically, we set
\begin{align*}
y_{i0} & = \mathds{1}\left\{ \alpha_{i} + \delta \times (0 - 2) + \beta_{1}x_{i0,1} + \beta_{2}x_{i0,2} + z_{i0} \geq \epsilon_{i0} \right\}, \\
y_{it} & = \mathds{1}\left\{ \alpha_{i} + \delta \times (t - 2) + \gamma y_{it-1} + \beta_{1}x_{it,1} + \beta_{2}x_{it,2} + z_{it} \geq \epsilon_{it} \right\}, \text{ for } t \in \{ 1, 2, 3 \},
\end{align*}
where \( \gamma = \beta_{1} = \beta_{2} = 1 \) and \( \delta = 1/2 \). We consider the same AR and MA DGP for \( x_{it,1}, x_{it,2} \) and \( z_{it} \). Again, we examine two types of \( z \) distributions: normal and Laplace. This results in a total of four cases.

The final design, Design 6, closely resembles Design 5, with the exceptions of omitting the time trend term and setting \( T = 4 \) to allow for the application of HK and OY. The model is then defined as:
\begin{align*}
y_{i0} & = \mathds{1}\left\{ \alpha_{i} + \beta_{1}x_{i0,1} + \beta_{2}x_{i0,2} + z_{i0} \geq \epsilon_{i0} \right\}, \\
y_{it} & = \mathds{1}\left\{ \alpha_{i} + \gamma y_{it-1} + \beta_{1}x_{it,1} + \beta_{2}x_{it,2} + z_{it} \geq \epsilon_{it} \right\}, \text{ for } t \in \{ 1, 2, 3, 4 \},
\end{align*}
where \( \gamma = \beta_{1} = \beta_{2} = 1 \). All scenarios considered in Design 5 are included in Design 6, along with the one in which \( x_{it,1}, x_{it,2} \), and \( z_{it} \) are serially independent. Therefore, Design 6 encompasses a total of six cases. Similar to Design 4, we compare the performance of our estimator, HK1, HK2, and OY.

Following the recommendation in HK, we adopt the bandwidth \( h_{n} = c \cdot n^{-1/6} \) for HK's estimators in Design 4. Experiments are conducted with \( c = 1, 2, 3, 4 \), and we report the simulation results corresponding to \( c = 3 \). This choice is based on the smallest bias and relatively smaller root mean square errors of the HK estimators of \( \gamma \) at this value. For Design 6, we set \( h_{n} = 3 \cdot n^{-1/7} \) for similar reasons. For the OY estimator, we set the bandwidth as \( h_{n} = n^{-1/4} \cdot (\log n)^{-1} \), following their recommendation. Our MS estimation uses
\[
\sigma_{n} = \widehat{\text{std}(z_{i2})} \cdot \sqrt{\log n^{*}/2.95},
\]
where \( \widehat{\text{std}(z_{i2})} \) is the sample standard deviation of \( z_{2} \), and \( n^{*} \) represents the number of ``switchers'' (i.e., observations with \( y_{3} \neq y_{1} \)). Sample sizes of \( n = 5000, 10000, \) and \( 20000 \) are considered. All results are based on 1000 replications for each sample size. We report the mean bias (MBIAS) and the root mean square errors (RMSE) of the estimates. The results are detailed in Tables \ref{T:D1_robust} to \ref{T:last}, which are titled to reflect their content and listed below for easy reference.

We briefly summarize our findings from these results here:
\begin{enumerate}
\item The simulation results of our estimator are not very sensitive to the choice of tuning parameters across all cases.
\item Despite serial correlations in covariates, our approach performs well, with some degradation observed in estimator performance, indicated by larger bias and RMSE. This deterioration arises from two factors: the reduced variation in covariates (especially for $z_{13}$ conditional on $\vert z_2\vert>\sigma_n$) and the challenge in selecting $\sigma_{n}$ due to differing tail behaviors on the left and right sides. As anticipated, a weaker serial correlation corresponds to better performance. For instance, $(x_{it},z_{it})$ in the AR setting has a serial correlation of 0.5, while $(x_{it},z_{it})$ in the MA setting exhibits a slightly lower serial correlation, approximately 0.4, where our estimator performs comparatively better.
\item  HK1 outperforms all semiparametric estimators, as expected, given its efficiency in low-dimensional cases with scaled logistic errors. Theoretically, OY converges faster than HK2 and our estimator (at least for $\beta$), which aligns with the simulation results. Our estimator outperforms HK2 across all designs with independent and MA regressors. However, in some designs with AR regressors (stronger serial correlations) and normally distributed $z$ (thin tails), it exhibits slightly larger RMSE. We anticipate that the advantage of our estimator over HK2 will become more obvious in higher dimensional settings. This is because our estimator, like OY, avoids the curse of dimensionality suffered by HK2.
\end{enumerate}
Importantly, it should be noted that HK1, HK2, and OY estimators are not suitable in the presence of time trends and dummies, commonly found in empirical applications. Our approach thus stands out as a valuable tool in these situations.

\begin{comment}
In summary, our results are not very sensitive to the choice of tuning parameters across all cases. However, our estimator's performance is not as robust with dependent regressors, particularly AR regressors, due to two factors: the reduced variation of \( z_{31} \) conditional on \( |z_{2}| > \sigma_{n} \) and the challenge in selecting \( \sigma_{n} \) due to differing tail behaviors on the left and right sides. Notably, HK1 outperforms all semiparametric estimators, as expected, given its efficiency in low dimensional cases. Theoretically, OY's estimator converges faster than HK2 and our estimator, with HK2's convergence rate being comparable to ours in this low-dimensional case.\footnote{To be more precise, this statement is valid only for certain specific bandwidths.} The simulation results align with this theoretical expectation. Lastly, it is important to note that HK1, HK2, and OY estimators are not suitable in the presence of dummies and time trends, commonly found in empirical applications. Our approach serves as a valuable tool in these situations.
\end{comment}

% Table generated by Excel2LaTeX from sheet 'Sheet1'
\begin{table}[H]
\small
  \centering
  %\caption{Add caption}
    \begin{tabular}{ll}
    \toprule
    \multicolumn{2}{c}{List of Tables} \\
    \midrule
    Table \ref{T:D1_robust} & Simulation Results of Design 1 (Sensitivity Check) \\
    Table \ref{T:D2_robust} & Simulation Results of Design 2: (Sensitivity Check) \\
    Table \ref{T:D.3} & Simulation Results of Design 3 with AR Regressors \\
    Table \ref{T:D.4} & Simulation Results of Design 3 with AR Regressors (Sensitivity Check) \\
    Table \ref{T:D.5} & Simulation Results of Design 3 with MA Regressors \\
    Table \ref{T:D.6} & Simulation Results of Design 3 with MA Regressors (Sensitivity Check) \\
    Table \ref{T:D.7} & Simulation Results of Design 4 with Independent Regressors \\
    Table \ref{T:D.8} & \multicolumn{1}{p{37.18em}}{Simulation Results of Design 4 with Independent Regressors (Sensitivity Check)} \\
    Table \ref{T:D.9} & Simulation Results of Design 4 with AR Regressors \\
    Table \ref{T:D.10} & Simulation Results of Design 4 with AR Regressors (Sensitivity Check) \\
    Table \ref{T:D.11} & Simulation Results of Design 4 with MA Regressors \\
    Table \ref{T:D.12} & Simulation Results of Design 4 with MA Regressors (Sensitivity Check) \\
    Table \ref{T:D.13} & Simulation Results of Design 5 with AR Regressors \\
    Table \ref{T:D.14} & Simulation Results of Design 5 with AR Regressors: (Sensitivity Check) \\
    Table \ref{T:D.15} & Simulation Results of Design 5 with MA Regressors \\
    Table \ref{T:D.16} & Simulation Results of Design 5 with MA Regressors: (Sensitivity Check) \\
    Table \ref{T:D.17} & Simulation Results of Design 6 with Independent Regressors \\
    Table \ref{T:D.18} & Simulation Results of Design 6 with Independent Regressors (Sensitivity Check) \\
    Table \ref{T:D.19} & Simulation Results of Design 6 with AR Regressors \\
    Table \ref{T:D.20} & Simulation Results of Design 6 with AR Regressors (Sensitivity Check) \\
    Table \ref{T:D.21} & Simulation Results of Design 6 with MA Regressors \\
Table \ref{T:last} & Simulation Results of Design 6 with MA Regressors (Sensitivity Check) \\
    \bottomrule
    \end{tabular}
\end{table}

\begin{table}[H]
\small
\centering \caption{Simulation Results of Design 1 (Sensitivity Check)}
\label{T:D1_robust} %
\begin{tabular}{r|cc|cc|cc}
\hline  \hline
 & \multicolumn{2}{c|}{$\beta_{1}$} & \multicolumn{2}{c|}{$\gamma$} & \multicolumn{2}{c}{$\delta$}\tabularnewline
 & MBIAS  & RMSE  & MBIAS  & RMSE  & MBIAS  & RMSE \tabularnewline
\hline
\multicolumn{7}{c}{Panel A: $\sigma_{n}=0.9\cdot\widehat{\text{std}(z_{i2})}\sqrt{\log n^{*}/2.95}$}\tabularnewline
\hline
$n_{1}$  & 0.071  & 0.331  & -0.042  & 0.476  & 0.057  & 0.198 \tabularnewline
Norm $n_{2}$  & 0.055  & 0.257  & -0.037  & 0.368  & 0.050  & 0.149 \tabularnewline
$n_{3}$  & 0.032  & 0.193  & -0.080  & 0.294  & 0.044  & 0.121 \tabularnewline
\hline
$n_{1}$  & 0.043  & 0.237  & -0.036  & 0.366  & 0.030  & 0.150 \tabularnewline
Lap $n_{2}$  & 0.031  & 0.177  & -0.046  & 0.286  & 0.023  & 0.111 \tabularnewline
$n_{3}$  & 0.015  & 0.139  & -0.068  & 0.238  & 0.015  & 0.087 \tabularnewline
\hline
\multicolumn{7}{c}{Panel B: $\sigma_{n}=1.1\cdot\widehat{\text{std}(z_{i2})}\sqrt{\log n^{*}/2.95}$}\tabularnewline
\hline
$n_{1}$  & 0.146  & 0.468  & 0.011  & 0.606  & 0.080  & 0.262 \tabularnewline
Norm $n_{2}$  & 0.107  & 0.375  & 0.022  & 0.507  & 0.070  & 0.223 \tabularnewline
$n_{3}$  & 0.069  & 0.289  & -0.033  & 0.415  & 0.048  & 0.163 \tabularnewline
\hline
$n_{1}$  & 0.037  & 0.267  & -0.014  & 0.407  & 0.024  & 0.160 \tabularnewline
Lap $n_{2}$  & 0.020  & 0.195  & -0.042  & 0.314  & 0.019  & 0.122 \tabularnewline
$n_{3}$  & 0.027  & 0.156  & -0.030  & 0.268  & 0.014  & 0.098 \tabularnewline
\hline
\hline
\multicolumn{7}{l}{Note: $n_{1}=5000,n_{2}=10000,n_{3}=20000$.}\tabularnewline
\end{tabular}
\end{table}

\begin{table}[H]
\small
\centering \caption{Simulation Results of Design 2: (Sensitivity Check)}
\label{T:D2_robust} \begin{adjustbox}{width=\textwidth} %
\begin{tabular}{r|cc|cc|cc|cc}
\hline  \hline
 & \multicolumn{2}{c|}{$\beta_{1}$} & \multicolumn{2}{c|}{$\beta_{2}$} & \multicolumn{2}{c|}{$\gamma$} & \multicolumn{2}{c}{$\delta$}\tabularnewline
 & MBIAS  & RMSE  & MBIAS  & RMSE  & MBIAS  & RMSE  & MBIAS  & RMSE \tabularnewline
\hline
\multicolumn{9}{c}{Panel A: $\sigma_{n}=0.9\cdot\widehat{\text{std}(z_{i2})}\sqrt{\log n^{*}/2.95}$}\tabularnewline
\hline
       $n_1$  & 0.123 & 0.414 & 0.122 & 0.406 & -0.028  & 0.468  & 0.083  & 0.201 \\
 Norm  $n_2$  & 0.073 & 0.305 & 0.063 & 0.302 & -0.064  & 0.381  & 0.056  & 0.153 \\
       $n_3$  & 0.048 & 0.224 & 0.038 & 0.235 & -0.062  & 0.298  & 0.044  & 0.119 \\ \hline
       $n_1$  & 0.054 & 0.279 & 0.065 & 0.300 & -0.017  & 0.364  & 0.042  & 0.141 \\
 Lap   $n_2$  & 0.024 & 0.205 & 0.026 & 0.223 & -0.038  & 0.278  & 0.025  & 0.104 \\
       $n_3$  & 0.023 & 0.173 & 0.021 & 0.171 & -0.037  & 0.222  & 0.021  & 0.087 \\
\hline
\multicolumn{9}{c}{Panel B: $\sigma_{n}=1.1\cdot\widehat{\text{std}(z_{i2})}\sqrt{\log n^{*}/2.95}$}\tabularnewline
\hline
       $n_1$  & 0.180 & 0.559 & 0.178 & 0.537 & 0.051  & 0.634  & 0.100  & 0.263 \\
 Norm  $n_2$  & 0.102 & 0.412 & 0.126 & 0.434 & 0.021  & 0.497  & 0.063  & 0.203 \\
       $n_3$  & 0.070 & 0.333 & 0.078 & 0.328 & -0.015  & 0.412  & 0.050  & 0.162 \\\hline
       $n_1$  & 0.088 & 0.326 & 0.078 & 0.341 & 0.001  & 0.419  & 0.045  & 0.164 \\
 Lap   $n_2$  & 0.029 & 0.239 & 0.044 & 0.252 & -0.017  & 0.328  & 0.026  & 0.119 \\
       $n_3$  & 0.023 & 0.194 & 0.022 & 0.195 & -0.022  & 0.250  & 0.017  & 0.097 \\
\hline
\hline
\multicolumn{9}{l}{Note: $n_{1}=5000,n_{2}=10000,n_{3}=20000$.}\tabularnewline
\end{tabular}\end{adjustbox}
\end{table}

\begin{table}[H]
\small
\caption{Simulation Results of Design 3 with AR Regressors}
\centering %
\label{T:D.3}
\begin{tabular}{r|cc|cc|cc}
\hline  \hline
 & \multicolumn{2}{c|}{$\beta_{1}$} & \multicolumn{2}{c|}{$\gamma$} & \multicolumn{2}{c}{$\delta$}\tabularnewline
 & MBIAS  & RMSE  & MBIAS  & RMSE  & MBIAS  & RMSE \tabularnewline
\hline
$n_{1}$  & 0.165  & 0.540  & -0.006  & 0.602  & 0.109  & 0.285 \tabularnewline
Norm $n_{2}$  & 0.130  & 0.409  & -0.034  & 0.522  & 0.085  & 0.216 \tabularnewline
$n_{3}$  & 0.079  & 0.316  & -0.072  & 0.392  & 0.076  & 0.175 \tabularnewline
\hline
$n_{1}$  & 0.031  & 0.297  & -0.055  & 0.457  & 0.037  & 0.164 \tabularnewline
Lap $n_{2}$  & 0.027  & 0.228  & -0.098  & 0.366  & 0.034  & 0.131 \tabularnewline
$n_{3}$  & 0.015  & 0.175  & -0.095  & 0.307  & 0.027  & 0.105 \tabularnewline
\hline
\hline
\multicolumn{7}{l}{Note: $n_{1}=5000,n_{2}=10000,n_{3}=20000$.}\tabularnewline
\end{tabular}
\end{table}

\begin{table}[H]
\small
\centering \caption{Simulation Results of Design 3 with AR Regressors (Sensitivity Check)}
\label{T:D.4}
\begin{tabular}{r|cc|cc|cc}
\hline  \hline
 & \multicolumn{2}{c|}{$\beta_{1}$} & \multicolumn{2}{c|}{$\gamma$} & \multicolumn{2}{c}{$\delta$}\tabularnewline
 & MBIAS  & RMSE  & MBIAS  & RMSE  & MBIAS  & RMSE \tabularnewline
\hline
\multicolumn{7}{c}{Panel A: $\sigma_{n}=0.9\cdot\widehat{\text{std}(z_{i2})}\sqrt{\log n^{*}/2.95}$}\tabularnewline
\hline
$n_{1}$  & 0.136  & 0.461  & -0.014  & 0.537  & 0.092  & 0.249 \tabularnewline
Norm $n_{2}$  & 0.101  & 0.351  & -0.053  & 0.440  & 0.075  & 0.187 \tabularnewline
$n_{3}$  & 0.058  & 0.269  & -0.108  & 0.356  & 0.069  & 0.154 \tabularnewline
\hline
$n_{1}$  & 0.033  & 0.280  & -0.077  & 0.431  & 0.036  & 0.150 \tabularnewline
Lap $n_{2}$  & 0.024  & 0.210  & -0.110  & 0.344  & 0.034  & 0.122 \tabularnewline
$n_{3}$  & 0.004  & 0.155  & -0.109  & 0.283  & 0.030  & 0.101 \tabularnewline
\hline
\multicolumn{7}{c}{Panel B: $\sigma_{n}=1.1\cdot\widehat{\text{std}(z_{i2})}\sqrt{\log n^{*}/2.95}$}\tabularnewline
\hline
$n_{1}$  & 0.242  & 0.644  & 0.046  & 0.714  & 0.135  & 0.341 \tabularnewline
Norm $n_{2}$  & 0.166  & 0.504  & 0.015  & 0.618  & 0.110  & 0.275 \tabularnewline
$n_{3}$  & 0.127  & 0.385  & -0.031  & 0.465  & 0.084  & 0.206 \tabularnewline
\hline
$n_{1}$  & 0.038  & 0.325  & -0.023  & 0.492  & 0.042  & 0.171 \tabularnewline
Lap $n_{2}$  & 0.037  & 0.248  & -0.061  & 0.382  & 0.038  & 0.142 \tabularnewline
$n_{3}$  & 0.020  & 0.192  & -0.073  & 0.316  & 0.025  & 0.110 \tabularnewline
\hline
\hline
\multicolumn{7}{l}{Note: $n_{1}=5000,n_{2}=10000,n_{3}=20000$.}\tabularnewline
\end{tabular}
\end{table}

\begin{table}[H]
\small
\caption{Simulation Results of Design 3 with MA Regressors}
\centering %
\label{T:D.5}
\begin{tabular}{r|cc|cc|cc}
\hline  \hline
 & \multicolumn{2}{c|}{$\beta_{1}$} & \multicolumn{2}{c|}{$\gamma$} & \multicolumn{2}{c}{$\delta$}\tabularnewline
 & MBIAS  & RMSE  & MBIAS  & RMSE  & MBIAS  & RMSE \tabularnewline
\hline
\hline
$n_{1}$  & 0.121  & 0.424  & -0.018  & 0.529  & 0.078  & 0.245 \tabularnewline
Norm $n_{2}$  & 0.090  & 0.332  & -0.054  & 0.408  & 0.069  & 0.190 \tabularnewline
$n_{3}$  & 0.063  & 0.262  & -0.062  & 0.343  & 0.055  & 0.156 \tabularnewline
\hline
$n_{1}$  & 0.029  & 0.245  & -0.059  & 0.410  & 0.030  & 0.156 \tabularnewline
Lap $n_{2}$  & 0.033  & 0.189  & -0.070  & 0.326  & 0.034  & 0.128 \tabularnewline
$n_{3}$  & 0.009  & 0.149  & -0.083  & 0.260  & 0.028  & 0.100 \tabularnewline
\hline
\hline
\multicolumn{7}{l}{Note: $n_{1}=5000,n_{2}=10000,n_{3}=20000$.}\tabularnewline
\end{tabular}
\end{table}

\begin{table}[H]
\small
\centering \caption{Simulation Results of Design 3 with MA Regressors (Sensitivity Check)}
\label{T:D.6}
\begin{tabular}{r|cc|cc|cc}
\hline  \hline
 & \multicolumn{2}{c|}{$\beta_{1}$} & \multicolumn{2}{c|}{$\gamma$} & \multicolumn{2}{c}{$\delta$}\tabularnewline
 & MBIAS  & RMSE  & MBIAS  & RMSE  & MBIAS  & RMSE \tabularnewline
\hline
\multicolumn{7}{c}{Panel A: $\sigma_{n}=0.9\cdot\widehat{\text{std}(z_{i2})}\sqrt{\log n^{*}/2.95}$}\tabularnewline
\hline
$n_{1}$  & 0.112  & 0.394  & -0.040  & 0.461  & 0.085  & 0.236 \tabularnewline
Norm $n_{2}$  & 0.085  & 0.296  & -0.084  & 0.374  & 0.070  & 0.170 \tabularnewline
$n_{3}$  & 0.045  & 0.220  & -0.087  & 0.297  & 0.050  & 0.135 \tabularnewline
\hline
$n_{1}$  & 0.021  & 0.222  & -0.067  & 0.369  & 0.026  & 0.146 \tabularnewline
Lap $n_{2}$  & 0.021  & 0.176  & -0.079  & 0.315  & 0.029  & 0.118 \tabularnewline
$n_{3}$  & 0.012  & 0.147  & -0.099  & 0.249  & 0.028  & 0.094 \tabularnewline
\hline
\multicolumn{7}{c}{Panel B: $\sigma_{n}=1.1\cdot\widehat{\text{std}(z_{i2})}\sqrt{\log n^{*}/2.95}$}\tabularnewline
\hline
$n_{1}$  & 0.175  & 0.529  & 0.022  & 0.620  & 0.097  & 0.291 \tabularnewline
Norm $n_{2}$  & 0.108  & 0.380  & -0.013  & 0.491  & 0.081  & 0.229 \tabularnewline
$n_{3}$  & 0.078  & 0.317  & -0.047  & 0.388  & 0.070  & 0.187 \tabularnewline
\hline
$n_{1}$  & 0.031  & 0.262  & -0.039  & 0.442  & 0.022  & 0.169 \tabularnewline
Lap $n_{2}$  & 0.036  & 0.215  & -0.043  & 0.344  & 0.029  & 0.131 \tabularnewline
$n_{3}$  & 0.018  & 0.165  & -0.070  & 0.274  & 0.024  & 0.108 \tabularnewline
\hline
\hline
\multicolumn{7}{l}{Note: $n_{1}=5000,n_{2}=10000,n_{3}=20000$.}\tabularnewline
\end{tabular}
\end{table}

\begin{table}[H]
\small
\caption{Simulation Results of Design 4 with Independent Regressors}
\label{T:D.7}
\centering
\begin{tabular}{r|cc|cc}
\hline  \hline
 & \multicolumn{2}{c|}{$\beta_{1}$} & \multicolumn{2}{c}{$\gamma$}\tabularnewline
 & MBIAS  & RMSE  & MBIAS  & RMSE \tabularnewline
\hline
$n_{1}$  & 0.041  & 0.249  & -0.046  & 0.364 \tabularnewline
HK1 $n_{1}$  & 0.006  & 0.144  & -0.037  & 0.229 \tabularnewline
HK2 $n_{1}$ & 0.074  & 0.339  & 0.009  & 0.440 \tabularnewline
OY \  $n_{1}$  & 0.022  & 0.220  & -0.036  & 0.285 \tabularnewline
\cline{2-5}
$n_{2}$  & 0.039  & 0.214  & -0.045  & 0.291 \tabularnewline
Norm HK1 $n_{2}$ & 0.003  & 0.122  & -0.028  & 0.184 \tabularnewline
HK2 $n_{2}$ & 0.062  & 0.278  & 0.017  & 0.378 \tabularnewline
OY \  $n_{2}$  & 0.017  & 0.171  & -0.024  & 0.238 \tabularnewline
\cline{2-5}
$n_{3}$  & 0.027  & 0.162  & -0.063  & 0.238 \tabularnewline
HK1 $n_{3}$ & 0.004  & 0.093  & -0.028  & 0.139 \tabularnewline
HK2 $n_{3}$ & 0.046  & 0.239  & 0.009  & 0.306 \tabularnewline
OY \  $n_{3}$ & 0.011  & 0.130  & -0.030  & 0.186 \tabularnewline
\hline
$n_{1}$  & 0.025  & 0.192  & -0.036  & 0.275 \tabularnewline
HK1 $n_{1}$  & 0.011  & 0.162  & -0.036  & 0.246 \tabularnewline
HK2$n_{1}$  & 0.069  & 0.347  & 0.014  & 0.452 \tabularnewline
OY \  $n_{1}$  & 0.028  & 0.216  & -0.046  & 0.304 \tabularnewline
\cline{2-5}
$n_{2}$  & 0.015  & 0.150  & -0.042  & 0.211 \tabularnewline
Lap HK1 $n_{2}$  & 0.010  & 0.129  & -0.030  & 0.191 \tabularnewline
HK2 $n_{2}$ & 0.042  & 0.269  & -0.010  & 0.380 \tabularnewline
OY \  $n_{2}$  & 0.013  & 0.160  & -0.034  & 0.250 \tabularnewline
\cline{2-5}
$n_{3}$  & 0.015  & 0.123  & -0.045  & 0.189 \tabularnewline
HK1 $n_{3}$  & 0.002  & 0.099  & -0.026  & 0.151 \tabularnewline
HK2 $n_{3}$  & 0.030  & 0.239  & 0.008  & 0.322 \tabularnewline
OY \  $n_{3}$  & 0.013  & 0.129  & -0.016  & 0.207 \tabularnewline
\hline
\hline
\multicolumn{5}{l}{Note: $n_{1}=5000,n_{2}=10000,n_{3}=20000$.}\tabularnewline
\end{tabular}
\end{table}

\begin{table}[H]
\small
\centering \caption{Simulation Results of Design 4 with Independent Regressors (Sensitivity
Check)}
\label{T:D.8}
\begin{tabular}{r|cc|cc}
\hline  \hline
 & \multicolumn{2}{c|}{$\beta_{1}$} & \multicolumn{2}{c}{$\gamma$}\tabularnewline
 & MBIAS  & RMSE  & MBIAS  & RMSE \tabularnewline
\hline
\multicolumn{5}{c}{Panel A: $\sigma_{n}=0.9\cdot\widehat{\text{std}(z_{i2})}\sqrt{\log n^{*}/2.95}$}\tabularnewline
\hline
$n_{1}$  & 0.029  & 0.219  & -0.063  & 0.312 \tabularnewline
Norm $n_{2}$  & 0.028  & 0.184  & -0.070  & 0.252 \tabularnewline
$n_{3}$  & 0.016  & 0.138  & -0.072  & 0.217 \tabularnewline
\hline
$n_{1}$  & 0.018  & 0.182  & -0.042  & 0.261 \tabularnewline
Lap $n_{2}$  & 0.018  & 0.143  & -0.055  & 0.210 \tabularnewline
$n_{3}$  & 0.012  & 0.114  & -0.057  & 0.174 \tabularnewline
\hline
\multicolumn{5}{c}{Panel B: $\sigma_{n}=1.1\cdot\widehat{\text{std}(z_{i2})}\sqrt{\log n^{*}/2.95}$}\tabularnewline
\hline
$n_{1}$  & 0.053  & 0.290  & -0.031  & 0.390 \tabularnewline
Norm $n_{2}$  & 0.043  & 0.233  & -0.035  & 0.326 \tabularnewline
$n_{3}$  & 0.040  & 0.199  & -0.037  & 0.273 \tabularnewline
\hline
$n_{1}$  & 0.032  & 0.200  & -0.035  & 0.279 \tabularnewline
Lap $n_{2}$  & 0.018  & 0.157  & -0.032  & 0.223 \tabularnewline
$n_{3}$  & 0.014  & 0.125  & -0.034  & 0.195 \tabularnewline
\hline
\hline
\multicolumn{5}{l}{Note: $n_{1}=5000,n_{2}=10000,n_{3}=20000$.}\tabularnewline
\end{tabular}
\end{table}

\begin{table}[H]
\small
\caption{Simulation Results of Design 4 with AR Regressors}
\centering %
\label{T:D.9}
\begin{tabular}{r|cc|cc}
\hline  \hline
 & \multicolumn{2}{c|}{$\beta_{1}$} & \multicolumn{2}{c}{$\gamma$}\tabularnewline
 & MBIAS  & RMSE  & MBIAS  & RMSE \tabularnewline
\hline
\hline
$n_{1}$  & 0.098  & 0.388  & -0.072  & 0.434 \tabularnewline
HK1 $n_{1}$  & 0.010  & 0.140  & -0.036  & 0.203 \tabularnewline
HK2 $n_{1}$  & 0.071  & 0.370  & 0.030  & 0.436 \tabularnewline
OY \  $n_{1}$  & 0.043  & 0.292  & 0.003  & 0.353 \tabularnewline
\cline{2-5}
$n_{2}$  & 0.051  & 0.279  & -0.108  & 0.348 \tabularnewline
Norm HK1 $n_{2}$  & 0.003  & 0.112  & -0.039  & 0.161 \tabularnewline
HK2 $n_{2}$  & 0.072  & 0.320  & 0.028  & 0.371 \tabularnewline
OY \  $n_{2}$  & 0.036  & 0.229  & -0.006  & 0.255 \tabularnewline
\cline{2-5}
$n_{3}$  & 0.045  & 0.218  & -0.107  & 0.291 \tabularnewline
HK1 $n_{3}$  & 0.005  & 0.091  & -0.037  & 0.129 \tabularnewline
HK2 $n_{3}$  & 0.042  & 0.239  & -0.012  & 0.288 \tabularnewline
OY \  $n_{3}$  & 0.020  & 0.166  & -0.024  & 0.207 \tabularnewline
\hline
$n_{1}$  & 0.031  & 0.235  & -0.093  & 0.339 \tabularnewline
HK1 $n_{1}$  & 0.003  & 0.156  & -0.039  & 0.216 \tabularnewline
HK2 $n_{1}$  & 0.061  & 0.347  & 0.030  & 0.436 \tabularnewline
OY \  $n_{1}$  & 0.033  & 0.271  & 0.009  & 0.362 \tabularnewline
\cline{2-5}
$n_{2}$  & 0.007  & 0.176  & -0.097  & 0.270 \tabularnewline
Lap HK1 $n_{2}$  & 0.010  & 0.125  & -0.021  & 0.167 \tabularnewline
HK2 $n_{2}$  & 0.078  & 0.318  & 0.036  & 0.389 \tabularnewline
OY \  $n_{2}$  & 0.025  & 0.216  & 0.005  & 0.271 \tabularnewline
\cline{2-5}
$n_{3}$  & 0.005  & 0.142  & -0.108  & 0.229 \tabularnewline
HK1 $n_{3}$  & 0.001  & 0.096  & -0.028  & 0.135 \tabularnewline
HK2 $n_{3}$  & 0.039  & 0.260  & 0.026  & 0.325 \tabularnewline
OY \  $n_{3}$  & 0.021  & 0.175  & -0.016  & 0.213 \tabularnewline
\hline
\hline
\multicolumn{5}{l}{Note: $n_{1}=5000,n_{2}=10000,n_{3}=20000$.}\tabularnewline
\end{tabular}
\end{table}

\begin{table}[H]
\small
\centering \caption{Simulation Results of Design 4 with AR Regressors (Sensitivity Check)}
\label{T:D.10}
\begin{tabular}{r|cc|cc}
\hline  \hline
 & \multicolumn{2}{c|}{$\beta_{1}$} & \multicolumn{2}{c}{$\gamma$}\tabularnewline
 & MBIAS  & RMSE  & MBIAS  & RMSE \tabularnewline
\hline
\multicolumn{5}{c}{Panel A: $\sigma_{n}=0.9\cdot\widehat{\text{std}(z_{i2})}\sqrt{\log n^{*}/2.95}$}\tabularnewline
\hline
$n_{1}$  & 0.071  & 0.332  & -0.117  & 0.387 \tabularnewline
Norm $n_{2}$  & 0.036  & 0.237  & -0.117  & 0.313 \tabularnewline
$n_{3}$  & 0.027  & 0.187  & -0.134  & 0.265 \tabularnewline
\hline
$n_{1}$  & 0.016  & 0.216  & -0.115  & 0.329 \tabularnewline
Lap $n_{2}$  & 0.007  & 0.160  & -0.113  & 0.261 \tabularnewline
$n_{3}$  & 0.008  & 0.134  & -0.115  & 0.226 \tabularnewline
\hline
\multicolumn{5}{c}{Panel B: $\sigma_{n}=1.1\cdot\widehat{\text{std}(z_{i2})}\sqrt{\log n^{*}/2.95}$}\tabularnewline
\hline
$n_{1}$  & 0.123  & 0.445  & -0.039  & 0.503 \tabularnewline
Norm $n_{2}$  & 0.078  & 0.336  & -0.069  & 0.392 \tabularnewline
$n_{3}$  & 0.055  & 0.266  & -0.093  & 0.336 \tabularnewline
$n_{1}$  & 0.031  & 0.244  & -0.082  & 0.362 \tabularnewline
Lap $n_{2}$  & 0.020  & 0.187  & -0.077  & 0.282 \tabularnewline
$n_{3}$  & 0.009  & 0.154  & -0.082  & 0.242 \tabularnewline
\hline
\hline
\multicolumn{5}{l}{Note: $n_{1}=5000,n_{2}=10000,n_{3}=20000$.}\tabularnewline
\end{tabular}
\end{table}

\begin{table}[H]
\small
\caption{Simulation Results of Design 4 with MA Regressors}
\centering %
\label{T:D.11}
\begin{tabular}{r|cc|cc}
\hline  \hline
 & \multicolumn{2}{c|}{$\beta_{1}$} & \multicolumn{2}{c}{$\gamma$}\tabularnewline
 & MBIAS  & RMSE  & MBIAS  & RMSE \tabularnewline
\hline
\hline
$n_{1}$  & 0.062  & 0.303  & -0.078  & 0.385 \tabularnewline
HK1 $n_{1}$  & 0.006  & 0.146  & -0.020  & 0.214 \tabularnewline
HK2 $n_{1}$  & 0.080  & 0.361  & 0.074  & 0.480 \tabularnewline
OY \  $n_{1}$  & 0.044  & 0.279  & 0.039  & 0.359 \tabularnewline
\cline{2-5}
$n_{2}$  & 0.041  & 0.227  & -0.087  & 0.302 \tabularnewline
Norm HK1 $n_{2}$  & 0.003  & 0.109  & -0.030  & 0.167 \tabularnewline
HK2 $n_{2}$  & 0.040  & 0.273  & 0.028  & 0.372 \tabularnewline
OY \  $n_{2}$  & 0.030  & 0.208  & 0.017  & 0.273 \tabularnewline
\cline{2-5}
$n_{3}$  & 0.029  & 0.178  & -0.089  & 0.254 \tabularnewline
HK1 $n_{3}$  & 0.003  & 0.085  & -0.023  & 0.133 \tabularnewline
HK2 $n_{3}$  & 0.037  & 0.228  & 0.015  & 0.315 \tabularnewline
OY \  $n_{3}$  & 0.013  & 0.154  & -0.018  & 0.217 \tabularnewline
\hline
$n_{1}$  & 0.012  & 0.190  & -0.077  & 0.305 \tabularnewline
HK1 $n_{1}$  & 0.011  & 0.153  & -0.007  & 0.236 \tabularnewline
HK2 $n_{1}$  & 0.078  & 0.364  & 0.052  & 0.489 \tabularnewline
OY \  $n_{1}$  & 0.033  & 0.270  & 0.030  & 0.378 \tabularnewline
\cline{2-5}
$n_{2}$  & 0.005  & 0.147  & -0.092  & 0.253 \tabularnewline
Lap HK1 $n_{2}$  & 0.002  & 0.118  & -0.013  & 0.184 \tabularnewline
HK2 $n_{2}$  & 0.051  & 0.300  & 0.049  & 0.425 \tabularnewline
OY \  $n_{2}$  & 0.018  & 0.200  & 0.022  & 0.293 \tabularnewline
\cline{2-5}
$n_{3}$  & 0.013  & 0.119  & -0.081  & 0.214 \tabularnewline
HK1 $n_{3}$  & 0.003  & 0.093  & -0.014  & 0.141 \tabularnewline
HK2 $n_{3}$  & 0.052  & 0.250  & 0.052  & 0.350 \tabularnewline
OY \  $n_{3}$  & 0.018  & 0.155  & 0.002  & 0.226 \tabularnewline
\hline
\hline
\multicolumn{5}{l}{Note: $n_{1}=5000,n_{2}=10000,n_{3}=20000$.}\tabularnewline
\end{tabular}
\end{table}

\begin{table}[H]
\small
\centering \caption{Simulation Results of Design 4 with MA Regressors (Sensitivity Check)}
\label{T:D.12}
\begin{tabular}{r|cc|cc}
\hline  \hline
 & \multicolumn{2}{c|}{$\beta_{1}$} & \multicolumn{2}{c}{$\gamma$}\tabularnewline
 & MBIAS  & RMSE  & MBIAS  & RMSE \tabularnewline
\hline
\multicolumn{5}{c}{Panel A: $\sigma_{n}=0.9\cdot\widehat{\text{std}(z_{i2})}\sqrt{\log n^{*}/2.95}$}\tabularnewline
\hline
$n_{1}$  & 0.043  & 0.252  & -0.098  & 0.359 \tabularnewline
Norm $n_{2}$  & 0.032  & 0.210  & -0.091  & 0.276 \tabularnewline
$n_{3}$  & 0.021  & 0.161  & -0.106  & 0.230 \tabularnewline
\hline
$n_{1}$  & 0.013  & 0.180  & -0.086  & 0.286 \tabularnewline
Lap $n_{2}$  & 0.006  & 0.140  & -0.096  & 0.244 \tabularnewline
$n_{3}$  & 0.008  & 0.111  & -0.092  & 0.203 \tabularnewline
\hline
\multicolumn{5}{c}{Panel B: $\sigma_{n}=1.1\cdot\widehat{\text{std}(z_{i2})}\sqrt{\log n^{*}/2.95}$}\tabularnewline
\hline
$n_{1}$  & 0.088  & 0.375  & -0.044  & 0.450 \tabularnewline
Norm $n_{2}$  & 0.050  & 0.270  & -0.070  & 0.342 \tabularnewline
$n_{3}$  & 0.042  & 0.215  & -0.071  & 0.278 \tabularnewline
\hline
$n_{1}$  & 0.014  & 0.203  & -0.070  & 0.333 \tabularnewline
Lap $n_{2}$  & 0.012  & 0.164  & -0.074  & 0.264 \tabularnewline
$n_{3}$  & 0.012  & 0.125  & -0.070  & 0.221 \tabularnewline
\hline
\hline
\multicolumn{5}{l}{Note: $n_{1}=5000,n_{2}=10000,n_{3}=20000$.}\tabularnewline
\end{tabular}
\end{table}

\begin{table}[H]
\small
\centering \caption{Simulation Results of Design 5 with AR Regressors} \label{T:D.13}
\begin{adjustbox}{width=\textwidth} %
\begin{tabular}{r|cc|cc|cc|cc}
\hline  \hline
 & \multicolumn{2}{c|}{$\beta_{1}$} & \multicolumn{2}{c|}{$\beta_{2}$} & \multicolumn{2}{c|}{$\gamma$} & \multicolumn{2}{c}{$\delta$}\tabularnewline
 & MBIAS  & RMSE  & MBIAS  & RMSE  & MBIAS  & RMSE  & MBIAS  & RMSE \tabularnewline
\hline

       $n_1$ & 0.197 & 0.639 & 0.248 & 0.664 & 0.049  & 0.668  & 0.138  & 0.313 \\
 Norm  $n_2$ & 0.144 & 0.515 & 0.155 & 0.517 & -0.001  & 0.563  & 0.093  & 0.235 \\
       $n_3$ & 0.131 & 0.426 & 0.105 & 0.409 & -0.047  & 0.440  & 0.082  & 0.197 \\\hline
       $n_1$ & 0.075 & 0.408 & 0.051 & 0.386 & -0.014  & 0.475  & 0.047  & 0.167 \\
 Lap   $n_2$ & 0.020 & 0.301 & 0.045 & 0.302 & -0.056  & 0.372  & 0.034  & 0.134 \\
       $n_3$ & 0.000 & 0.232 & -0.002 & 0.236 & -0.073  & 0.296  & 0.025  & 0.109 \\
\hline
\hline
\multicolumn{9}{l}{Note: $n_{1}=5000,n_{2}=10000,n_{3}=20000$.}\tabularnewline
\end{tabular}\end{adjustbox}
\end{table}

\begin{table}[H]
\small
\centering \caption{Simulation Results of Design 5 with AR Regressors: (Sensitivity Check)} \label{T:D.14}
\begin{adjustbox}{width=\textwidth} %
\begin{tabular}{r|cc|cc|cc|cc}
\hline  \hline
 & \multicolumn{2}{c|}{$\beta_{1}$} & \multicolumn{2}{c|}{$\beta_{2}$} & \multicolumn{2}{c|}{$\gamma$} & \multicolumn{2}{c}{$\delta$}\tabularnewline
 & MBIAS  & RMSE  & MBIAS  & RMSE  & MBIAS  & RMSE  & MBIAS  & RMSE \tabularnewline
\hline
\multicolumn{9}{c}{Panel A: $\sigma_{n}=0.9\cdot\widehat{\text{std}(z_{i2})}\sqrt{\log n^{*}/2.95}$}\tabularnewline
\hline
       $n_1$ & 0.196 & 0.585 & 0.221 & 0.603 & 0.006  & 0.611  & 0.127  & 0.276 \\
 Norm  $n_2$ & 0.125 & 0.436 & 0.125 & 0.440 & -0.061  & 0.478  & 0.092  & 0.211 \\
       $n_3$ & 0.097 & 0.351 & 0.083 & 0.340 & -0.083  & 0.373  & 0.071  & 0.167 \\\hline
       $n_1$ & 0.048 & 0.367 & 0.039 & 0.354 & -0.049  & 0.436  & 0.045  & 0.158 \\
 Lap   $n_2$ & 0.020 & 0.268 & 0.050 & 0.286 & -0.082  & 0.342  & 0.037  & 0.126 \\
       $n_3$ & 0.006 & 0.217 & -0.000 & 0.212 & -0.085  & 0.282  & 0.028  & 0.098 \\
\hline
\multicolumn{9}{c}{Panel B: $\sigma_{n}=1.1\cdot\widehat{\text{std}(z_{i2})}\sqrt{\log n^{*}/2.95}$}\tabularnewline
\hline
       $n_1$ & 0.262 & 0.756 & 0.274 & 0.752 & 0.080  & 0.761  & 0.158  & 0.358 \\
 Norm  $n_2$ & 0.203 & 0.627 & 0.207 & 0.636 & 0.043  & 0.655  & 0.113  & 0.284 \\
       $n_3$ & 0.155 & 0.513 & 0.161 & 0.506 & 0.010  & 0.554  & 0.098  & 0.237 \\\hline
       $n_1$ & 0.086 & 0.443 & 0.054 & 0.414 & 0.029  & 0.525  & 0.051  & 0.177 \\
 Lap   $n_2$ & 0.019 & 0.315 & 0.044 & 0.329 & -0.034  & 0.399  & 0.031  & 0.134 \\
       $n_3$ & 0.015 & 0.251 & 0.014 & 0.259 & -0.053  & 0.313  & 0.027  & 0.111 \\
\hline
\hline
\multicolumn{9}{l}{Note: $n_{1}=5000,n_{2}=10000,n_{3}=20000$.}\tabularnewline
\end{tabular}\end{adjustbox}
\end{table}

\begin{table}[H]
\small
\centering \caption{Simulation Results of Design 5 with MA Regressors} \label{T:D.15}
\begin{adjustbox}{width=\textwidth} %
\begin{tabular}{r|cc|cc|cc|cc}
\hline \hline
 & \multicolumn{2}{c|}{$\beta_{1}$} & \multicolumn{2}{c|}{$\beta_{2}$} & \multicolumn{2}{c|}{$\gamma$} & \multicolumn{2}{c}{$\delta$}\tabularnewline
 & MBIAS  & RMSE  & MBIAS  & RMSE  & MBIAS  & RMSE  & MBIAS  & RMSE \tabularnewline
\hline
       $n_1$ & 0.182 & 0.550 & 0.171 & 0.547 & 0.047  & 0.582  & 0.096  & 0.275 \\
 Norm  $n_2$ & 0.116 & 0.424 & 0.128 & 0.438 & -0.028  & 0.473  & 0.066  & 0.209 \\
       $n_3$ & 0.099 & 0.344 & 0.087 & 0.333 & -0.050  & 0.371  & 0.063  & 0.164 \\\hline
       $n_1$ & 0.050 & 0.324 & 0.024 & 0.320 & -0.034  & 0.396  & 0.037  & 0.158 \\
 Lap   $n_2$ & 0.021 & 0.254 & 0.029 & 0.251 & -0.048  & 0.334  & 0.028  & 0.122 \\
       $n_3$ & 0.004 & 0.198 & -0.003 & 0.200 & -0.063  & 0.268  & 0.022  & 0.100 \\
\hline
\hline
\multicolumn{9}{l}{Note: $n_{1}=5000,n_{2}=10000,n_{3}=20000$.}\tabularnewline
\end{tabular}\end{adjustbox}
\end{table}

\begin{table}[H]
\small
\centering \caption{Simulation Results of Design 5 with MA Regressors: (Sensitivity Check)} \label{T:D.16}
\begin{adjustbox}{width=\textwidth} %
\begin{tabular}{r|cc|cc|cc|cc}
\hline \hline
 & \multicolumn{2}{c|}{$\beta_{1}$} & \multicolumn{2}{c|}{$\beta_{2}$} & \multicolumn{2}{c|}{$\gamma$} & \multicolumn{2}{c}{$\delta$}\tabularnewline
 & MBIAS  & RMSE  & MBIAS  & RMSE  & MBIAS  & RMSE  & MBIAS  & RMSE \tabularnewline
\hline
\multicolumn{9}{c}{Panel A: $\sigma_{n}=0.9\cdot\widehat{\text{std}(z_{i2})}\sqrt{\log n^{*}/2.95}$}\tabularnewline
\hline
       $n_1$ & 0.148 & 0.476 & 0.137 & 0.467 & 0.017  & 0.514  & 0.086  & 0.235 \\
 Norm  $n_2$ & 0.097 & 0.356 & 0.098 & 0.366 & -0.052  & 0.404  & 0.063  & 0.179 \\
       $n_3$ & 0.068 & 0.270 & 0.060 & 0.263 & -0.084  & 0.318  & 0.056  & 0.137 \\\hline
       $n_1$ & 0.036 & 0.297 & 0.030 & 0.283 & -0.053  & 0.367  & 0.032  & 0.144 \\
 Lap   $n_2$ & 0.011 & 0.227 & 0.020 & 0.237 & -0.079  & 0.323  & 0.030  & 0.114 \\
       $n_3$ & 0.005 & 0.181 & 0.000 & 0.183 & -0.076  & 0.243  & 0.026  & 0.092 \\
\hline
\multicolumn{9}{c}{Panel B: $\sigma_{n}=1.1\cdot\widehat{\text{std}(z_{i2})}\sqrt{\log n^{*}/2.95}$}\tabularnewline
\hline
       $n_1$ & 0.231 & 0.637 & 0.209 & 0.636 & 0.085  & 0.669  & 0.121  & 0.326 \\
 Norm  $n_2$ & 0.146 & 0.497 & 0.163 & 0.521 & 0.041  & 0.571  & 0.079  & 0.251 \\
       $n_3$ & 0.104 & 0.412 & 0.113 & 0.412 & -0.022  & 0.432  & 0.076  & 0.202 \\\hline
       $n_1$ & 0.053 & 0.351 & 0.046 & 0.343 & -0.007  & 0.436  & 0.035  & 0.174 \\
 Lap   $n_2$ & 0.023 & 0.266 & 0.030 & 0.281 & -0.027  & 0.357  & 0.026  & 0.132 \\
       $n_3$ & 0.005 & 0.216 & 0.004 & 0.219 & -0.052  & 0.289  & 0.022  & 0.104 \\
\hline
\hline
\multicolumn{9}{l}{Note: $n_{1}=5000,n_{2}=10000,n_{3}=20000$.}\tabularnewline
\end{tabular}\end{adjustbox}
\end{table}

\begin{table}[H]
\small
\centering \caption{Simulation Results of Design 6 with Independent Regressors} \label{T:D.17}
\begin{tabular}{r|cc|cc|cc}
\hline \hline
 & \multicolumn{2}{c|}{$\beta_{1}$} & \multicolumn{2}{c|}{$\beta_{2}$} & \multicolumn{2}{c}{$\gamma$}\tabularnewline
 & MBIAS  & RMSE  & MBIAS  & RMSE  & MBIAS  & RMSE  \tabularnewline
\hline
       $n_1$     & 0.063 & 0.316  & 0.061 & 0.314 & -0.021  & 0.373  \\
 HK1 $n_1$     & 0.028 & 0.176  & 0.019 & 0.174 & -0.078  & 0.232  \\
 HK2 $n_1$     & 0.110 & 0.379  & 0.104 & 0.376 & -0.009  & 0.421  \\
 OY \  $n_1$     & 0.043 & 0.257  & 0.032 & 0.258 & -0.021  & 0.296  \\ \cline{2-7}
$n_2$     & 0.052 & 0.253  & 0.049 & 0.258 & -0.048  & 0.295  \\
  Norm  HK1 $n_2$     & 0.010 & 0.145  & 0.015 & 0.139 & -0.061  & 0.180  \\
 HK2 $n_2$     & 0.082 & 0.326  & 0.077 & 0.323 & 0.006  & 0.373  \\
 OY \  $n_2$     & 0.022 & 0.190  & 0.018 & 0.194 & -0.018  & 0.238  \\\cline{2-7}
       $n_3$     & 0.039 & 0.195  & 0.028 & 0.197 & -0.057  & 0.230  \\
 HK1 $n_3$     & 0.009 & 0.109  & 0.011 & 0.111 & -0.055  & 0.147  \\
 HK2 $n_3$     & 0.049 & 0.267  & 0.044 & 0.248 & -0.037  & 0.305  \\
 OY \  $n_3$     & 0.019 & 0.154  & 0.009 & 0.152 & -0.014  & 0.191  \\\hline
       $n_1$     & 0.027 & 0.217  & 0.030 & 0.228 & -0.032  & 0.272  \\
 HK1 $n_1$     & 0.030 & 0.197  & 0.021 & 0.194 & -0.065  & 0.243  \\
 HK2 $n_1$     & 0.095 & 0.382  & 0.087 & 0.384 & -0.023  & 0.432  \\
 OY  \ $n_1$     & 0.026 & 0.244  & 0.017 & 0.250 & -0.007  & 0.306  \\\cline{2-7}
  $n_2$     & 0.020 & 0.184  & 0.017 & 0.175 & -0.037  & 0.219  \\
  Lap HK1 $n_2$     & 0.026 & 0.151  & 0.019 & 0.153 & -0.054  & 0.195  \\
 HK2 $n_2$     & 0.065 & 0.317  & 0.062 & 0.323 & -0.018  & 0.370  \\
 OY \  $n_2$     & 0.018 & 0.194  & 0.024 & 0.188 & -0.012  & 0.244  \\\cline{2-7}
       $n_3$     & 0.012 & 0.135  & 0.020 & 0.145 & -0.038  & 0.181  \\
 HK1 $n_3$     & 0.010 & 0.127  & 0.010 & 0.123 & -0.056  & 0.156  \\
 HK2 $n_3$     & 0.062 & 0.283  & 0.050 & 0.276 & -0.014  & 0.300  \\
 OY \  $n_3$     & 0.007 & 0.150  & 0.003 & 0.153 & -0.028  & 0.202  \\

\hline
\hline
\multicolumn{7}{l}{Note: $n_{1}=5000,n_{2}=10000,n_{3}=20000$.}\tabularnewline
\end{tabular}
\end{table}

\begin{table}[H]
\small
\centering \caption{Simulation Results of Design 6 with Independent Regressors (Sensitivity Check)} \label{T:D.18}
\begin{tabular}{r|cc|cc|cc}
\hline \hline
 & \multicolumn{2}{c|}{$\beta_{1}$} & \multicolumn{2}{c|}{$\beta_{2}$} & \multicolumn{2}{c}{$\gamma$}\tabularnewline
 & MBIAS  & RMSE  & MBIAS  & RMSE  & MBIAS  & RMSE  \tabularnewline
\hline
\multicolumn{7}{c}{Panel A: $\sigma_{n}=0.9\cdot\widehat{\text{std}(z_{i2})}\sqrt{\log n^{*}/2.95}$}\tabularnewline
\hline
       $n_1$     & 0.050 & 0.272  & 0.003 & 0.153 & 0.048  & 0.270  \\
 Norm  $n_2$     & 0.034 & 0.216  & 0.003 & 0.153 & 0.034  & 0.221  \\
       $n_3$     & 0.036 & 0.173  & 0.003 & 0.153 & 0.027  & 0.173  \\\hline
       $n_1$     & 0.025 & 0.210  & 0.003 & 0.153 & 0.027  & 0.215  \\
 Lap   $n_2$     & 0.011 & 0.166  & 0.003 & 0.153 & 0.009  & 0.169  \\
       $n_3$     & 0.010 & 0.130  & 0.003 & 0.153 & 0.016  & 0.133  \\
\hline
\multicolumn{7}{c}{Panel B: $\sigma_{n}=1.1\cdot\widehat{\text{std}(z_{i2})}\sqrt{\log n^{*}/2.95}$}\tabularnewline
\hline
       $n_1$     & 0.064 & 0.370  & 0.003 & 0.153 & 0.071  & 0.356  \\
 Norm  $n_2$     & 0.061 & 0.288  & 0.003 & 0.153 & 0.056  & 0.296  \\
       $n_3$     & 0.037 & 0.222  & 0.003 & 0.153 & 0.037  & 0.230  \\ \hline
       $n_1$     & 0.030 & 0.239  & 0.003 & 0.153 & 0.028  & 0.233  \\
 Lap   $n_2$     & 0.019 & 0.185  & 0.003 & 0.153 & 0.017  & 0.179  \\
       $n_3$     & 0.011 & 0.151  & 0.003 & 0.153 & 0.015  & 0.153  \\

\hline
\hline
\multicolumn{7}{l}{Note: $n_{1}=5000,n_{2}=10000,n_{3}=20000$.}\tabularnewline
\end{tabular}
\end{table}

\begin{table}[H]
\small
\centering \caption{Simulation Results of Design 6 with AR Regressors} \label{T:D.19}
\begin{tabular}{r|cc|cc|cc}
\hline \hline
 & \multicolumn{2}{c|}{$\beta_{1}$} & \multicolumn{2}{c|}{$\beta_{2}$} & \multicolumn{2}{c}{$\gamma$}\tabularnewline
 & MBIAS  & RMSE  & MBIAS  & RMSE  & MBIAS  & RMSE  \tabularnewline
\hline

       $n_1$     & 0.108 & 0.451  & 0.102 & 0.453 & -0.059  & 0.444  \\
 HK1 $n_1$     & 0.014 & 0.162  & 0.014 & 0.156 & -0.084  & 0.190  \\
 HK2 $n_1$     & 0.086 & 0.382  & 0.087 & 0.377 & -0.010  & 0.373  \\
 OY \  $n_1$     & 0.057 & 0.346  & 0.070 & 0.334 & 0.054  & 0.366  \\\cline{2-7}
 $n_2$     & 0.069 & 0.340  & 0.072 & 0.353 & -0.084  & 0.349  \\
 Norm  HK1 $n_2$     & 0.010 & 0.121  & 0.001 & 0.122 & -0.069  & 0.147  \\
 HK2 $n_2$     & 0.074 & 0.303  & 0.065 & 0.309 & -0.002  & 0.315  \\
 OY \  $n_2$     & 0.045 & 0.252  & 0.042 & 0.260 & 0.018  & 0.268  \\\cline{2-7}
       $n_3$     & 0.056 & 0.282  & 0.049 & 0.276 & -0.085  & 0.299  \\
 HK1 $n_3$     & 0.009 & 0.095  & 0.010 & 0.097 & -0.061  & 0.117  \\
 HK2 $n_3$     & 0.043 & 0.253  & 0.039 & 0.255 & -0.018  & 0.251  \\
 OY \  $n_3$     & 0.008 & 0.194  & 0.026 & 0.191 & 0.001  & 0.204  \\\hline
       $n_1$     & 0.020 & 0.292  & 0.028 & 0.296 & -0.063  & 0.331  \\
 HK1 $n_1$     & 0.019 & 0.165  & 0.021 & 0.171 & -0.059  & 0.191  \\
 HK2 $n_1$     & 0.100 & 0.397  & 0.102 & 0.411 & 0.017  & 0.404  \\
 OY \  $n_1$     & 0.029 & 0.320  & 0.023 & 0.308 & 0.026  & 0.332  \\\cline{2-7}
  $n_2$     & 0.017 & 0.224  & 0.011 & 0.231 & -0.072  & 0.280  \\
 Lap  HK1 $n_2$     & 0.015 & 0.131  & 0.003 & 0.133 & -0.063  & 0.160  \\
 HK2 $n_2$     & 0.073 & 0.331  & 0.041 & 0.301 & -0.023  & 0.329  \\
 OY \  $n_2$     & 0.032 & 0.259  & 0.026 & 0.250 & 0.001  & 0.271  \\\cline{2-7}
       $n_3$     & 0.020 & 0.178  & 0.017 & 0.187 & -0.098  & 0.232  \\
 HK1 $n_3$     & 0.016 & 0.106  & 0.014 & 0.105 & -0.057  & 0.129  \\
 HK2 $n_3$     & 0.049 & 0.266  & 0.047 & 0.265 & -0.032  & 0.272  \\
 OY \  $n_3$     & 0.007 & 0.189  & 0.005 & 0.193 & -0.004  & 0.209  \\
\hline
\hline
\multicolumn{7}{l}{Note: $n_{1}=5000,n_{2}=10000,n_{3}=20000$.}\tabularnewline
\end{tabular}
\end{table}

\begin{table}[H]
\small
\centering \caption{Simulation Results of Design 6 with AR Regressors (Sensitivity Check)} \label{T:D.20}
\begin{tabular}{r|cc|cc|cc}
\hline \hline
 & \multicolumn{2}{c|}{$\beta_{1}$} & \multicolumn{2}{c|}{$\beta_{2}$} & \multicolumn{2}{c}{$\gamma$}\tabularnewline
 & MBIAS  & RMSE  & MBIAS  & RMSE  & MBIAS  & RMSE  \tabularnewline
\hline
\multicolumn{7}{c}{Panel A: $\sigma_{n}=0.9\cdot\widehat{\text{std}(z_{i2})}\sqrt{\log n^{*}/2.95}$}\tabularnewline
\hline
       $n_1$     & 0.082 & 0.383  & 0.093 & 0.396 & -0.093  & 0.391  \\
 Norm  $n_2$     & 0.045 & 0.278  & 0.048 & 0.293 & -0.114  & 0.309  \\
       $n_3$     & 0.038 & 0.236  & 0.037 & 0.233 & -0.102  & 0.261  \\\hline
       $n_1$     & 0.024 & 0.279  & 0.027 & 0.268 & -0.096  & 0.306  \\
 Lap   $n_2$     & 0.004 & 0.208  & 0.004 & 0.218 & -0.094  & 0.262  \\
       $n_3$     & 0.009 & 0.170  & 0.004 & 0.173 & -0.112  & 0.221  \\
\hline
\multicolumn{7}{c}{Panel B: $\sigma_{n}=1.1\cdot\widehat{\text{std}(z_{i2})}\sqrt{\log n^{*}/2.95}$}\tabularnewline
\hline
       $n_1$     & 0.132 & 0.525  & 0.138 & 0.532 & -0.018  & 0.515  \\
 Norm  $n_2$     & 0.105 & 0.414  & 0.110 & 0.434 & -0.052  & 0.417  \\
       $n_3$     & 0.082 & 0.351  & 0.050 & 0.349 & -0.064  & 0.341  \\\hline
       $n_1$     & 0.033 & 0.326  & 0.022 & 0.322 & -0.040  & 0.360  \\
 Lap   $n_2$     & 0.015 & 0.240  & 0.016 & 0.252 & -0.060  & 0.292  \\
       $n_3$     & 0.026 & 0.198  & 0.016 & 0.200 & -0.081  & 0.241  \\
\hline
\hline
\multicolumn{7}{l}{Note: $n_{1}=5000,n_{2}=10000,n_{3}=20000$.}\tabularnewline
\end{tabular}
\end{table}

\begin{table}[H]
\small
\centering \caption{Simulation Results of Design 6 with MA Regressors} \label{T:D.21}
\begin{tabular}{r|cc|cc|cc}
\hline \hline
 & \multicolumn{2}{c|}{$\beta_{1}$} & \multicolumn{2}{c|}{$\beta_{2}$} & \multicolumn{2}{c}{$\gamma$}\tabularnewline
 & MBIAS  & RMSE  & MBIAS  & RMSE  & MBIAS  & RMSE  \tabularnewline
\hline

      $n_1$     & 0.087 & 0.376  & 0.078 & 0.369 & -0.033  & 0.373  \\
 HK1 $n_1$     & 0.012 & 0.156  & 0.006 & 0.152 & -0.061  & 0.196  \\
 HK2 $n_1$     & 0.077 & 0.362  & 0.070 & 0.351 & 0.012  & 0.399  \\
 OY \  $n_1$     & 0.059 & 0.322  & 0.052 & 0.306 & 0.056  & 0.347  \\\cline{2-7}
 $n_2$     & 0.054 & 0.283  & 0.055 & 0.286 & -0.077  & 0.303  \\
 Norm  HK1 $n_2$     & 0.005 & 0.120  & 0.002 & 0.123 & -0.060  & 0.159  \\
 HK2 $n_2$     & 0.059 & 0.302  & 0.038 & 0.299 & -0.002  & 0.333  \\
 OY \  $n_2$     & 0.032 & 0.231  & 0.032 & 0.235 & 0.039  & 0.275  \\\cline{2-7}
       $n_3$     & 0.036 & 0.216  & 0.032 & 0.229 & -0.069  & 0.257  \\
 HK1 $n_3$     & 0.008 & 0.098  & 0.007 & 0.096 & -0.040  & 0.123  \\
 HK2 $n_3$     & 0.050 & 0.257  & 0.049 & 0.254 & 0.007  & 0.284  \\
 OY \  $n_3$     & 0.017 & 0.180  & 0.020 & 0.176 & 0.005  & 0.207  \\\hline
       $n_1$     & 0.031 & 0.255  & 0.017 & 0.239 & -0.078  & 0.310  \\
 HK1 $n_1$     & 0.020 & 0.166  & 0.014 & 0.178 & -0.043  & 0.216  \\
 HK2 $n_1$     & 0.080 & 0.375  & 0.076 & 0.379 & 0.047  & 0.458  \\
 OY \  $n_1$     & 0.027 & 0.314  & 0.032 & 0.309 & 0.078  & 0.384  \\\cline{2-7}
  $n_2$     & 0.006 & 0.194  & 0.004 & 0.195 & -0.059  & 0.247  \\
 Lap  HK1 $n_2$     & 0.002 & 0.133  & 0.002 & 0.134 & -0.041  & 0.169  \\
 HK2 $n_2$     & 0.036 & 0.301  & 0.040 & 0.304 & 0.016  & 0.345  \\
 OY \  $n_2$     & 0.021 & 0.242  & 0.021 & 0.239 & 0.035  & 0.278  \\\cline{2-7}
       $n_3$     & 0.007 & 0.156  & 0.007 & 0.158 & -0.065  & 0.203  \\
 HK1 $n_3$     & 0.007 & 0.109  & 0.004 & 0.105 & -0.045  & 0.141  \\
 HK2 $n_3$     & 0.044 & 0.260  & 0.054 & 0.265 & 0.005  & 0.302  \\
 OY \  $n_3$     & 0.017 & 0.181  & 0.015 & 0.192 & 0.006  & 0.224  \\
\hline
\hline
\multicolumn{7}{l}{Note: $n_{1}=5000,n_{2}=10000,n_{3}=20000$.}\tabularnewline
\end{tabular}
\end{table}

\begin{table}[H]
\small
\centering \caption{Simulation Results of Design 6 with MA Regressors (Sensitivity Check)}\label{T:last}
\begin{tabular}{r|cc|cc|cc}
\hline \hline
 & \multicolumn{2}{c|}{$\beta_{1}$} & \multicolumn{2}{c|}{$\beta_{2}$} & \multicolumn{2}{c}{$\gamma$}\tabularnewline
 & MBIAS  & RMSE  & MBIAS  & RMSE  & MBIAS  & RMSE  \tabularnewline
\hline
\multicolumn{7}{c}{Panel A: $\sigma_{n}=0.9\cdot\widehat{\text{std}(z_{i2})}\sqrt{\log n^{*}/2.95}$}\tabularnewline
\hline
       $n_1$     & 0.050 & 0.272  & 0.048 & 0.270 & -0.053  & 0.336  \\
 Norm  $n_2$     & 0.034 & 0.216  & 0.034 & 0.221 & -0.075  & 0.259  \\
       $n_3$     & 0.036 & 0.173  & 0.027 & 0.173 & -0.060  & 0.209  \\ \hline
       $n_1$     & 0.025 & 0.210  & 0.027 & 0.215 & -0.046  & 0.261  \\
 Lap   $n_2$     & 0.011 & 0.166  & 0.009 & 0.169 & -0.048  & 0.211  \\
       $n_3$     & 0.010 & 0.130  & 0.016 & 0.133 & -0.056  & 0.176  \\
\hline
\multicolumn{7}{c}{Panel B: $\sigma_{n}=1.1\cdot\widehat{\text{std}(z_{i2})}\sqrt{\log n^{*}/2.95}$}\tabularnewline
\hline
       $n_1$     & 0.064 & 0.370  & 0.071 & 0.356 & -0.006  & 0.417  \\
 Norm  $n_2$     & 0.061 & 0.288  & 0.056 & 0.296 & -0.026  & 0.343  \\
       $n_3$     & 0.037 & 0.222  & 0.037 & 0.230 & -0.042  & 0.262  \\ \hline
       $n_1$     & 0.030 & 0.239  & 0.028 & 0.233 & -0.025  & 0.282  \\
 Lap   $n_2$     & 0.019 & 0.185  & 0.017 & 0.179 & -0.029  & 0.234  \\
       $n_3$     & 0.011 & 0.151  & 0.015 & 0.153 & -0.032  & 0.183  \\

\hline
\hline
\multicolumn{7}{l}{Note: $n_{1}=5000,n_{2}=10000,n_{3}=20000$.}\tabularnewline
\end{tabular}
\end{table}

\end{document}